\DeclareDocumentCommand\R{}{\mathbb{R}}
\DeclareDocumentCommand\Z{}{\mathbb{Z}}
\DeclareDocumentCommand\conv{o}{\operatorname{conv}\IfValueTF{#1}{\left(#1\right)}{}}
\DeclareDocumentCommand\setdef{mo}{\left\{#1\IfNoValueTF{#2}{}{ \mid #2}\right\}}
\DeclareDocumentCommand\bigsetdef{mo}{\left\{#1\IfNoValueTF{#2}{}{ \mid #2}\right\}}
\DeclareDocumentCommand\transpose{m}{#1^{\intercal}}
\DeclareDocumentCommand\zerovec{o}{\IfNoValueTF{#1}{\mathbb{O}}{\mathbb{O}_{#1}}}
\DeclareDocumentCommand\aff{o}{\operatorname{aff}\IfValueTF{#1}{\left(#1\right)}{}}
\DeclareDocumentCommand\supp{o}{\operatorname{supp}\IfValueTF{#1}{\left(#1\right)}{}}
\DeclareDocumentCommand\orderO{o}{\mathcal{O}\IfValueTF{#1}{\left(#1\right)}{}}
\DeclareDocumentCommand\orderTheta{o}{\Theta\IfValueTF{#1}{\left(#1\right)}}
\DeclareDocumentCommand\reviewComment{mm}{\todo[inline,color=red!50!white]{\textbf{Reviewer:} #1 \textbf{Response:} #2}}
\DeclareDocumentCommand\reviewComment{mm}{}
\declaretheoremstyle[
spaceabove=6pt, spacebelow=6pt,
headfont=\normalfont\itshape,
notefont=\mdseries, notebraces={(}{)},
bodyfont=\normalfont\itshape,
postheadspace=1em
]{myclaim}
\declaretheorem[name=Theorem]{theorem}
\declaretheorem[name=Lemma,sibling=theorem]{lemma}
\declaretheorem[name=Corollary,sibling=theorem]{corollary}
\declaretheorem[name=Proposition,sibling=theorem]{proposition}
\declaretheorem[name=Claim,parent=theorem,style=myclaim]{claim}
\title{A Polyhedral Study for the Cubic Formulation of the Unconstrained Traveling Tournament Problem}
\author{Marije R.\ Siemann}
\author{Matthias Walter}
\affil{University of Twente, The Netherlands}
\date{\small\today}
\begin{document}

\maketitle

\begin{abstract}
  We consider the unconstrained traveling tournament problem, a sports timetabling problem that minimizes traveling of teams.
  Since its introduction about 20 years ago, most research was devoted to modeling and reformulation approaches.
  In this paper we carry out a polyhedral study for the cubic integer programming formulation by establishing the dimension of the integer hull as well as of faces induced by model inequalities.
  Moreover, we introduce a new class of inequalities and show that they are facet-defining.
  Finally, we evaluate the impact of these inequalities on the linear programming bounds.
\end{abstract}

\section{Introduction}
\label{sec_intro}

The \emph{traveling tournament problem} is an optimization problem that involves aspects from tournament timetabling as well as from tour problems such as the traveling salesman problem.
It was introduced by Easton, Nemhauser and Trick in 2001~\cite{EastonNT01}.
To formally state the problem, we consider an even number $n \geq 4$ of sports teams, each playing at its own venue, and the problem of designing a \emph{double round-robin tournament}.
Such a tournament consists of \emph{slots} $S \coloneqq \{1,2,\dotsc,2n-2\}$ and in each slot, each team $i \in V \coloneqq \{1,2,\dotsc,n\}$ plays against another team $j \in V$, either at its home venue or away, i.e., at $j$'s home venue.
Moreover, every two teams $i,j \in V$ play each other exactly twice, once at $i$ and once at $j$.
Finally, distances $d_{i,j}$ between the venues $i,j \in V$ for all pairs $(i,j) \in A \coloneqq \{ (i,j) \in V \times V : i \neq j \}$ are given and the goal is to find a tournament with the minimum total traveling distance.
Between two consecutive slots in which a team plays at different venues $j$ and $k$, it travels $d_{j,k}$ units.
In particular, if both matches are played away, then it directly travels from venue $j$ to venue $k$.
Before slot $1$ and after slot $2n-2$ each team shall reside at its home venue, i.e., if the first or last match is played away, then the team has to travel between this venue and its home venue.
This problem is known as the \emph{unconstrained traveling tournament problem (TTP)}, which is known to be NP-hard~\cite{Bhattacharyya16}.

There exist several variants, including the \emph{classic TTP}.
Here, the unconstrained TTP is further restricted by requiring that the two matches of teams $i$ and $j$ shall not be in consecutive slots.
Moreover, no team shall play more than $3$ consecutive home matches and no more than $3$ consecutive away matches.
This variant is also NP-hard~\cite{ThielenW11}.

The first solution approaches were developed in~\cite{EastonNT03}, where a column generation framework was combined with constraint programming techniques.
The authors of~\cite{MeloUR09} discuss several integer programming formulations in their paper on a single-round-robin variant of the TTP.
In particular, they describe a cubic formulation (with $\orderO(n^3)$ variables) that naturally generalizes to one for the unconstrained TTP.

Already the tournament construction without a traveling aspect is nontrivial.
While there exist several efficient methods to construct a feasible solution (see~\cite{DrexlK07,Froncek10,Kirkman47,RasmussenT08}), the addition of more constraints or an objective function often makes the problem intractable.
For instance, the optimization version, called the \emph{planar 3-index assignment problem}, is NP-hard~\cite{Frieze83}.
However, there exist several polyhedral studies in which the integer hull of the natural integer programming formulation of the planar 3-index assignment problem was investigated~\cite{AppaMM06,EulerBG86,MagosM09}.

\paragraph{Outline.}
In \cref{sec_polytope} we introduce the cubic integer programming formulation in order to define the unconstrained traveling tournament polytope as its integer hull.
In \cref{sec_dimension} we deal with equations valid for the polytope and establish its dimension.
Moreover, in \cref{sec_model_inequalities} we show that some of the model inequalities are facet-defining while others are lifted to have this property.
Finally, in \cref{sec_strengthening_inequalities} we introduce a new class of inequalities and show that they are facet-defining.
For the proofs in \cref{sec_dimension,sec_model_inequalities,sec_strengthening_inequalities} we need to construct tournaments with a variety of properties.
These constructions can be found in \cref{sec_tournaments}.
The paper is concluded in \cref{sec_computations} where we evaluate the impact of our findings computationally.

\section{The unconstrained traveling tournament polytope}
\label{sec_polytope}

\DeclareDocumentCommand\allMatches{}{\mathcal{M}}

A \emph{match} between $i \in V$ and $j \in V \setminus \{i\}$ at venue $i$ that is played in slot $k \in S$ is denoted by the triple $(k,i,j)$ and by $\allMatches$ we denote the set of all possible matches.
The formulation has \emph{play variables} $x_m \in \{0,1\}$ for each match $m \in \allMatches$ and \emph{travel variables} $y_{t,i,j} \in \{0,1\}$ for all $t \in V$ and all $(i,j) \in A$.
The interpretation is that $x_{k,i,j} = 1$ if and only if match $(k,i,j)$ is played, and $y_{t,i,j} = 1$ if (but not only if) team $t$ travels from venue $i$ to venue $j$.
Note that in a tournament each team travels along such an arc at most once.
The formulation reads
\begin{subequations}
  \label{model_basic}
  \begin{alignat}{7}
    & \text{min }
      ~ \mathrlap{ \sum_{(i,j) \in A} d_{i,j} \sum_{t \in V} y_{t,i,j} } \label{model_basic_obj} \\
    & \text{s.t. }
      & \sum_{j \in V \setminus \{i\}} (x_{k,i,j} + x_{k,j,i}) &= 1 &\quad& \forall k \in S : k \geq 2,~ \forall i \in V, \label{model_basic_team_plays} \\
    & & \sum_{k \in S} x_{k,i,j} &= 1 &\quad& \forall (i,j) \in A, \label{model_basic_pair_plays} \\
    & & x_{k,i,t} + x_{k+1,j,t} - 1 &\leq y_{t,i,j} &\quad& \forall k \in S \setminus \{2n-2\}, ~\forall (i,j) \in A, ~\forall t \in V \setminus \{i,j\}, \label{model_basic_travel_away_away} \\
    & & \sum_{i \in V \setminus \{t\}} x_{k,t,i} + x_{k+1,j,t} - 1 &\leq y_{t,t,j} &\quad& \forall k \in S \setminus \{2n-2\}, ~\forall (t,j) \in A, \label{model_basic_travel_home_away} \\
    & & x_{k-1,i,t} + \sum_{j \in V \setminus \{t\}} x_{k,t,j} - 1 &\leq y_{t,i,t} &\quad& \forall k \in S \setminus \{1\}, ~\forall (i,t) \in A, \label{model_basic_travel_away_home}\\
    & & x_{1,j,t} &\leq y_{t,t,j}  &\quad& \forall (t,j) \in A, \label{model_basic_travel_first} \\
    & & x_{2n-2,i,t} &\leq y_{t,i,t} &\quad& \forall (i,t) \in A, \label{model_basic_travel_last} \\
    & & x_{k,i,j} &\in \{0,1\} &\quad& \forall (k,i,j) \in \allMatches, \label{model_basic_domain_x} \\
    & & y_{t,i,j} &\in \{0,1\} &\quad& \forall (t,i,j) \in V \times A. \label{model_basic_domain_y}
  \end{alignat}
\end{subequations}

The objective~\eqref{model_basic_obj} minimizes the total traveled distance.
Constraints~\eqref{model_basic_team_plays} ensure that each team plays exactly once (either home or away) in each slot $k \geq 2$.
For $k=1$, the same equations are implied (see \cref{thm_redundant_equations}).
Constraints~\eqref{model_basic_pair_plays} ensure that each home-away pair occurs exactly once.
This constitutes a correct model for a double round-robin schedule with binary variables $x$.
Note that the \emph{classic} traveling tournament instances also require custom constraints such as a no-repeater constraint (requiring that the two matches of two teams are not scheduled in a row) and upper bounds on the number of consecutive home/away games.
However, for our polyhedral study we omit these constraints to keep the model simple.
The remaining constraints~\eqref{model_basic_travel_away_away}--\eqref{model_basic_travel_last} force the travel variables to be $1$ if the corresponding travel occurs.

\DeclareDocumentCommand\Ptt{}{\ensuremath{P_{\mathrm{utt}}}}

To carry out a polyhedral study, it is worth to define the integer hull of IP~\eqref{model_basic}.
To this end, we define a \emph{tournament} as a subset $T \subseteq \allMatches$ of matches whose \emph{play vector} $\chi(T) \in \{0,1\}^{\allMatches}$, defined via $\chi(T)_{k,i,j} = 1 \iff (k,i,j) \in T$, satisfies~\eqref{model_basic_team_plays} and~\eqref{model_basic_pair_plays}.
Its \emph{travel vector} is the vector $\psi(T) \in \{0,1\}^{V \times A}$ with $\psi(T)_{t,i,j} = 1$ if and only if team $t$ travels from venue $i$ to venue $j$.
In the IP, a travel variable $y_{t,i,j}$ can be set to $1$ although team $t$ does not travel from $i$ to $j$.
If the distances $d_{i,j}$ are positive, this will however never happen in an optimal solution.
The integer hull of the IP, which we call the \emph{unconstrained traveling tournament polytope}, is thus equal to
\begin{equation*}
  \Ptt(n) \coloneqq \conv\{ (\chi(T),y) \in \{0,1\}^\allMatches \times \{0,1\}^{V \times A} : T \text{ tournament and } y \geq \psi(T) \}.
\end{equation*}
A natural question is why we require $y \geq \psi(T)$ in the definition.
This is accordance with the existing integer programming model the literature (see~\cite{MeloUR09}), in which $y$-variables are only constrained from below.
The variant in which $y = \psi(T)$ is enforced is much harder to study theoretically, but we will later obtain the corresponding polytope as a face of $\Ptt(n)$, see \cref{thm_face}.
Finally, by $\zerovec$ we denote the zero vector, where its length can be derived from the context.
Note that $\Ptt(n)$ resembles the basic properties of the traveling tournament model.
More restrictions and corresponding linear constraints are discussed in Section~\ref{sec_problem_variants}.

\section{Equations and dimension}
\label{sec_dimension}

\subsection{Known equations}

\begin{proposition}
  \label{thm_redundant_equations}
  For each team $t \in V$, equations~\eqref{model_basic_team_plays} for $(k,i) = (1,t)$ follow from equations~\eqref{model_basic_team_plays} for all $k \in S \setminus \{1\}$ and $i = t$ together with equations~\eqref{model_basic_pair_plays} for all $(i,j) \in A$ with $t \in \{i,j\}$.
\end{proposition}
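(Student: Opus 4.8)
The plan is to exhibit equation~\eqref{model_basic_team_plays} for $(k,i)=(1,t)$ as an explicit non-negative/negative linear combination of the listed equations, so the whole argument reduces to bookkeeping of coefficients. Fix a team $t \in V$ throughout.

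First I would add up all the equations~\eqref{model_basic_pair_plays} for the arcs $(i,j) \in A$ with $t \in \{i,j\}$, that is, the arcs $(t,j)$ and $(j,t)$ for $j \in V \setminus \{t\}$. There are $n-1$ arcs of each type, so $2(n-1)$ equations in total, each with right-hand side $1$; hence the sum has right-hand side $2(n-1)$ and left-hand side
\[
  \sum_{j \in V \setminus \{t\}} \sum_{k \in S} x_{k,t,j}
  \;+\; \sum_{j \in V \setminus \{t\}} \sum_{k \in S} x_{k,j,t}
  \;=\; \sum_{k \in S} \sum_{j \in V \setminus \{t\}} \bigl( x_{k,t,j} + x_{k,j,t} \bigr),
\]
where each play variable $x_{k,t,j}$ or $x_{k,j,t}$ with $j \neq t$ appears with coefficient exactly $1$ because the arcs $(t,j)$ and $(j,t)$, $j \in V\setminus\{t\}$, are pairwise distinct and exhaust all arcs incident to $t$.

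Next I would subtract from this identity the $|S \setminus \{1\}| = 2n-3$ equations~\eqref{model_basic_team_plays} for $(k,i) = (k,t)$ with $k \in \{2,\dotsc,2n-2\}$; the left-hand side of each is $\sum_{j \in V \setminus \{t\}}(x_{k,t,j}+x_{k,j,t})$ and its right-hand side is $1$. What remains on the left is precisely the $k=1$ term $\sum_{j \in V \setminus \{t\}}(x_{1,t,j}+x_{1,j,t})$, and what remains on the right is $2(n-1) - (2n-3) = 1$ since $|S| = 2n-2$. This is exactly equation~\eqref{model_basic_team_plays} for $(1,t)$, which establishes the proposition. I do not anticipate any genuine obstacle; the only points requiring a moment's care are verifying that summing the pair equations collects every play variable incident to $t$ with coefficient exactly one, and keeping the slot counts $|S| = 2n-2$ and $|S\setminus\{1\}| = 2n-3$ consistent so that the right-hand sides cancel to give $1$.
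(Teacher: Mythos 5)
Your proof is correct and takes essentially the same route as the paper: sum the pair-plays equations~\eqref{model_basic_pair_plays} for all arcs incident to $t$, subtract the team-plays equations~\eqref{model_basic_team_plays} for $k \in S \setminus \{1\}$, and observe that the right-hand sides give $2(n-1) - (2n-3) = 1$. The bookkeeping matches the paper's argument exactly.
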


\begin{proof}
  Let $t \in V$.
  The sum of equations~\eqref{model_basic_pair_plays} for all $(i,j) \in A$ with $j = t$ plus the sum of equations~\eqref{model_basic_pair_plays} for all $(i,j) \in A$ with $i = t$ minus the sum of equations~\eqref{model_basic_team_plays} for all $k \in S \setminus \{1\}$ and $i=t$ yields
  \begin{multline*}
    \sum_{i \in V \setminus \{t\}} \sum_{k \in S} x_{k,i,t} 
    + \sum_{j \in V \setminus \{t\}} \sum_{k \in S} x_{k,t,j}
    - \sum_{k \in S \setminus \{1\}} \sum_{j \in V \setminus \{t\}} (x_{k,t,j} + x_{k,j,t}) \\
    = (n-1) + (n-1) - (2n-3)
    \iff \sum_{j \in V \setminus \{t\}} (x_{1,t,j} + x_{1,j,t})
    = 1,
  \end{multline*}
  which is equation~\eqref{model_basic_team_plays} for $(k,i) = (1,t)$.
\end{proof}

\DeclareDocumentCommand\columnBasis{}{\mathcal{B}}

We define the following \emph{column basis} $\columnBasis_{\bar{k}} \subseteq \allMatches$ via
\begin{equation}
  \columnBasis_{\bar{k}} \coloneqq \{ (k,i,j) \in \allMatches : k = \bar{k} \text{ or } i = 1 \text{ or } (i,j) = (2,3) \}. \label{eq_basis}
\end{equation}
We will often use the following lemma which states that the play variables indexed by $\columnBasis_{\bar{k}}$ induce an invertible submatrix of the equation system of interest.

\begin{lemma}
  \label{thm_dim_basis}
  Let $\bar{k} \in S$ and let $Cx = d$ be the system defined by equations~\eqref{model_basic_team_plays} and~\eqref{model_basic_pair_plays}.
  Then the submatrix of $C$ induced by variables $x_m$ for $m \in \columnBasis_{\bar{k}}$ is invertible.
  In particular, these $|\columnBasis_{\bar{k}}| = 3n^2-4n$ equations are irredundant.
\end{lemma}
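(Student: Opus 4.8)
The plan is to show that the square submatrix $C'$ of $C$ formed by the columns indexed by $\columnBasis_{\bar k}$ has trivial kernel. The system $Cx=d$ consists of the $(2n-3)n$ equations~\eqref{model_basic_team_plays} (stated only for $k\ge 2$) together with the $n(n-1)$ equations~\eqref{model_basic_pair_plays}, so $C$ has $3n^2-4n$ rows, and an easy inclusion--exclusion count gives $|\columnBasis_{\bar k}|=3n^2-4n$ as well, so $C'$ is square. If $C'$ has trivial kernel it is invertible, hence its rows — and therefore the rows of $C$ — are linearly independent, which is exactly the asserted irredundancy. So it suffices to prove: the only $x\in\R^{\allMatches}$ supported on $\columnBasis_{\bar k}$ satisfying the homogeneous counterparts of~\eqref{model_basic_team_plays} and~\eqref{model_basic_pair_plays} is $x=\zerovec$.

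Fix such an $x$. I would first eliminate all slots $k\neq\bar k$. Fix $k\in S\setminus\{\bar k\}$ and note that the matches of $\columnBasis_{\bar k}$ using slot $k$ are exactly $(k,1,j)$ for $j\in V\setminus\{1\}$ and $(k,2,3)$; in particular every match $(k,j,1)$ with $j\neq1$ and every match $(k,i,j)$ with $i\notin\{1,2,3\}$ lies outside $\columnBasis_{\bar k}$, hence $x$ vanishes on it. Now apply the homogeneous version of~\eqref{model_basic_team_plays} for the index pairs $(k,i)$, $i\in V$. For each $i\in V\setminus\{1,2,3\}$ only the term $x_{k,1,i}$ survives, so $x_{k,1,i}=0$; using this, the equations for $(k,1)$, $(k,2)$ and $(k,3)$ collapse respectively to $x_{k,1,2}+x_{k,1,3}=0$, $x_{k,1,2}+x_{k,2,3}=0$, and $x_{k,1,3}+x_{k,2,3}=0$. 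Adding the last two and subtracting the first yields $2x_{k,2,3}=0$, whence $x_{k,2,3}=0$ and then $x_{k,1,2}=x_{k,1,3}=0$. Thus $x$ vanishes on all entries of $\columnBasis_{\bar k}$ belonging to slots $k\neq\bar k$.

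It then remains to kill slot $\bar k$, which is immediate: for each $(i,j)\in A$ the homogeneous version of~\eqref{model_basic_pair_plays} reads $\sum_{k\in S}x_{k,i,j}=0$, and by the previous step every summand with $k\neq\bar k$ vanishes, so $x_{\bar k,i,j}=0$. Hence $x=\zerovec$ and $C'$ is invertible.

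I expect the only genuinely non-mechanical points to be two. First, when $\bar k\neq1$ the slot $k=1$ appears in the elimination above, but~\eqref{model_basic_team_plays} is not part of the system for $k=1$; here one must route through \cref{thm_redundant_equations}, which guarantees that $x$ still satisfies the homogeneous slot-$1$ equations, so they may be used. Second, the step isolating $x_{k,2,3}$ produces the coefficient $2$, so it genuinely relies on working over a field of characteristic zero; it also uses $n\ge4$ (otherwise $V\setminus\{1,2,3\}=\varnothing$ and the $(k,1)$ equation would not reduce to $x_{k,1,2}+x_{k,1,3}=0$). Everything else is bookkeeping about membership in $\columnBasis_{\bar k}$.
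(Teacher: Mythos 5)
Your proof is correct and takes a genuinely different route from the paper's. The paper argues by Laplace (cofactor) expansion along the slot-$\bar{k}$ columns, reducing to a block-diagonal matrix (one block per slot $k \neq \bar{k}$, with rows the team equations for that slot), and proves each block invertible by an inner induction on $\ell \in \{3,\dots,n\}$; you instead show the square submatrix has trivial kernel by direct elimination. The practical advantage of your route is that it handles $\bar{k} \neq 1$ cleanly: the paper's expansion step hinges on the observation that $x_{\bar{k},i,j}$ appears only in~\eqref{model_basic_pair_plays}, which is literally true only for $\bar{k}=1$ (for $\bar{k}\geq 2$ these variables also enter the team equations for slot $\bar{k}$, and the residual team-vs.-non-$\bar{k}$ submatrix acquires zero rows at slot $\bar{k}$ and rowless columns at slot $1$, so the clean block-diagonal picture breaks down). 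Your kernel argument sidesteps this by invoking \cref{thm_redundant_equations} to recover the homogeneous slot-$1$ team equations as linear combinations of rows of $C$, eliminating all slots $k \neq \bar{k}$, and then dispatching slot $\bar{k}$ last via~\eqref{model_basic_pair_plays}. Your two caveats (the coefficient $2$ requiring characteristic zero, and $n \geq 4$) are correctly identified, the latter being automatic since the paper already assumes $n \geq 4$ even.
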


\begin{proof}
  Observe that variables $x_{\bar{k},i,j}$ only appear in equation~\eqref{model_basic_pair_plays} for $(i,j) \in A$.
  Thus, by cofactor expansion it remains to prove invertibility of the coefficient submatrix $C'$ of $C$ whose rows correspond to equations~\eqref{model_basic_team_plays} and whose columns correspond to variables $x_{k,i,j}$ for $(k,i,j) \in \allMatches$ with $k \neq \bar{k}$ and $i=1$ or $(i,j) = (2,3)$.

  The matrix $C'$ is a block diagonal matrix. The blocks are the submatrices $C^k$ whose rows and columns are the same as those of $C'$ but for fixed $k$.
  For the remainder of the proof we fix $k \in S \setminus \{\bar{k}\}$ and prove that $C^k$ is invertible.
  For $\ell \in \{3,4,\dotsc,n\}$, consider the submatrices $C^{k,\ell} \in \R^{\ell \times \ell}$ of $C$ induced by equations~\eqref{model_basic_team_plays} for $k$ and for $i=1,2,\dotsc,\ell$ and by variables $x_{k,2,3}$, $x_{k,1,2}$, $x_{k,1,3}$, $x_{k,1,4}$, \dots, $x_{k,1,\ell}$.
  One easily verifies that $C^{k,3}$ is invertible and that for $\ell \geq 4$, $C^{k,\ell}$ is obtained from $C^{k,\ell-1}$ by adding a unit row with the one in the added column.
  By induction on $\ell$, cofactor expansion shows that $C^{k,n}$ is invertible.
  The fact that $C^{k,n} = C^k$ holds, concludes the proof.
\end{proof}

A consequence of \cref{thm_dim_basis} is that every equation that is valid for $\Ptt(n)$ or some of its faces can be turned into an equivalent one that involves no $x_m$ for $m \in \columnBasis_{\bar{k}}$.
Hence, in many subsequent proofs we will assume that such an equation $\transpose{a}x + \transpose{b}y = \gamma$ satisfies
\begin{equation}
  a_m = 0 \text{ for each } m \in \columnBasis \coloneqq \columnBasis_1,
  \tag{$\columnBasis$}
  \label{eq_basis_zero}
\end{equation}
and refer to this as requiring \emph{the equation to be normalized with respect to slot $\bar{k} = 1$.} in our proofs.

\subsection{Tournaments from 1-factors}

We consider the tournament construction based on perfect matchings (also called $1$-factors) of the complete graphs on $n$ nodes (see~\cite{DrexlK07}).
In each tournament $T$, for each $k \in S$, the matches $(k,i,j) \in T$ in slot $k$, interpreted as edges $\{i,j\}$, form a perfect matching.
Thus, each tournament is characterized by $|S|$ such perfect matchings whose edges are oriented so that no arc $(i,j) \in A$ appears twice.
Since the latter is the only restriction, we can first determine the $|S|$ perfect matchings $M_k$ for all $k\in S$ and afterwards orient their edges in a complementary fashion, that is,
\begin{equation}
  \text{each edge $\{i,j\}$ is oriented differently in the two  perfect matchings in which it is contained.}
  \label{eq_canonical_factorization_orientation}
\end{equation}
We call such an orientation \emph{complementary}.
The following \emph{canonical factorization} is one specific set $\{M_1, M_2, \dotsc, M_{2n-2}\}$ of perfect matchings~\cite{DrexlK07}, where $M_k$ for $k < n$ is determined by 
\begin{equation*}
  M_k \coloneqq \{ \{ k, n \} \} \cup \{ \{ k+i, k-i \} : i=1,2,\dotsc,n/2-1 \}, 
\end{equation*}
where $k+i$ and $k-i$ are taken modulo $n-1$ as one of the numbers $1,2,\dotsc,n-1$.
The remaining perfect matchings are $M_k \coloneqq M_{k - n + 1}$ for all $k \in \{n,n+1,\dotsc,2n-2\}$.
Hence,
\begin{multline}
  \text{for each edge $\{i,j\}$ there is a unique $k \in \{1,2,\dotsc,n-1\}$ with $\{i,j\} \in M_k$ and $\{i,j\} \in M_{k+n-1}$} \\ \text{a unique $k' \in \{n,n+1,\dotsc,2n-2\}$ with $\{i,j\} \in M_{k'}$ (which satisfies $k' = k + n -1$).}
  \label{eq_canonical_factorization_property}
\end{multline}
We will often construct tournaments obtained from the canonical factorizations by permuting slots or teams.
In many cases, it is easy to see that corresponding permutations exist.
Hence, we typically state that a tournament is constructed from a canonical factorization such that certain requirements are satisfied, e.g., by specifying certain matches that shall be played.

\paragraph{Operations on tournaments.}
Three fundamental operations to modify a given tournament are the \emph{cyclic shift}, the \emph{home-away swap} and the \emph{partial slot swap}, defined as follows.

Let $s \in \Z$ and let $T$ be a tournament.
We say that tournament $T'$ is obtained by a \emph{cyclic shift by $s$} if $T'$ arises from $T$ by mapping each slot $k \in S$ to slot $k + s$, where slots are considered modulo $2n-2$ in the range $1,2,\dotsc,2n-2$.

\begin{proposition}[Home-away swap]
  \label{thm_home_away_swap}
  Let $T$ be a tournament with matches $(k_1,i,j), (k_2,j,i) \in T$.
  Then
  \begin{equation}
    T' \coloneqq T \setminus \{ (k_1,i,j),(k_2,j,i) \} \cup \{ (k_1,j,i), (k_2,i,j) \}
    \tag{HA$_{k_1,k_2,i,j}$} \label{eq_home_away_swap}
  \end{equation}
  is also a tournament.
\end{proposition}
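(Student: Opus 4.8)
The plan is to verify directly that the play vector $\chi(T')$ satisfies the two defining systems \eqref{model_basic_team_plays} and \eqref{model_basic_pair_plays}; by definition of a tournament, this is exactly what must be checked, since the travel vector plays no role in whether $T'$ is a tournament. First I would record the structural facts that make the set operation in \eqref{eq_home_away_swap} behave as intended: because $T$ is a tournament and $(k_1,i,j),(k_2,j,i)\in T$, we must have $k_1\neq k_2$ (otherwise team $i$ would play twice in slot $k_1$, violating \eqref{model_basic_team_plays}, where for $k_1=1$ we invoke \cref{thm_redundant_equations}), and likewise $(k_1,j,i)\notin T$ and $(k_2,i,j)\notin T$. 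Hence $\chi(T')$ differs from $\chi(T)$ in exactly four coordinates: those indexed by $(k_1,i,j)$ and $(k_2,j,i)$ drop from $1$ to $0$, and those indexed by $(k_1,j,i)$ and $(k_2,i,j)$ rise from $0$ to $1$.

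Next I would check \eqref{model_basic_pair_plays}. The only ordered pairs whose play count is affected are $(i,j)$ and $(j,i)$: the pair $(i,j)$, previously played exactly in slot $k_1$, is now played exactly in slot $k_2$, and symmetrically $(j,i)$ moves from slot $k_2$ to slot $k_1$; all other pairs are untouched, so each sum in \eqref{model_basic_pair_plays} remains $1$. Then I would check \eqref{model_basic_team_plays}. Only slots $k_1$ and $k_2$ see any change, and in each of them only the variables incident to teams $i$ and $j$ change. In slot $k_1$, team $i$'s contribution from $x_{k_1,i,j}$ is replaced by the equal contribution from $x_{k_1,j,i}$, and team $j$'s contribution from $x_{k_1,i,j}$ (entering as an away term $x_{k_1,i',j}$ with $i'=i$) is replaced by the equal contribution from $x_{k_1,j,i}$ (entering as a home term $x_{k_1,j,j'}$ with $j'=i$); slot $k_2$ is identical with the roles of $i$ and $j$ swapped. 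So every equation of \eqref{model_basic_team_plays} still evaluates to $1$, and therefore $\chi(T')$ satisfies \eqref{model_basic_team_plays} and \eqref{model_basic_pair_plays}, i.e., $T'$ is a tournament.

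I do not expect a genuine obstacle: the content of the argument is the observation that a home-away swap changes only which of the two venues a match between $i$ and $j$ is held at — preserving which teams meet in each slot — and which slots the two oriented $i$--$j$ matches occupy — preserving that each oriented pair is played once. The one place where the hypothesis that $T$ is a tournament is actually needed, and the only step warranting care, is confirming that $k_1\neq k_2$ and that $(k_1,j,i),(k_2,i,j)\notin T$, so that the union in \eqref{eq_home_away_swap} truly inserts two new matches after deleting two present ones; everything else is routine bookkeeping.
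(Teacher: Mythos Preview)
Your proof is correct and is the natural direct verification; the paper itself states this proposition without proof, treating it as immediate. Your care in noting that $k_1\neq k_2$ and that $(k_1,j,i),(k_2,i,j)\notin T$ is appropriate and makes the argument complete.
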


\DeclareDocumentCommand\refHomeAwaySwap{mmmm}{\hyperref[eq_home_away_swap]{\ensuremath{\text{HA}_{\text{#1},\text{#2},\text{#3},\text{#4}}}}}
\DeclareDocumentCommand\eqrefHomeAwaySwap{mmmm}{(\refHomeAwaySwap{#1}{#2}{#3}{#4})}
\DeclareDocumentCommand\refHomeAwaySwapDefault{}{\refHomeAwaySwap{$1$}{$k$}{$i$}{$j$}}
\DeclareDocumentCommand\eqrefHomeAwaySwapDefault{}{\eqrefHomeAwaySwap{$1$}{$k$}{$i$}{$j$}}

\begin{proposition}[Partial slot swap]
  \label{thm_partial_slot_swap}
  Let $T$ be a tournament with matches $(k_1,i,j)$, $(k_1,i',j')$, $(k_2,i,j')$, $(k_2,i',j) \in T$.
  Then
  \begin{multline}
    \qquad T' \coloneqq T \setminus \{ (k_1,i,j), (k_1,i',j'), (k_2,i,j'), (k_2,i',j) \} \\ 
      \cup \{ (k_1,i,j'), (k_1,i',j), (k_2,i,j), (k_2,i',j') \} \qquad \tag{PS$_{k_1,k_2,i,j,i',j'}$} \label{eq_partial_slot_swap}
  \end{multline}
  is also a tournament.
\end{proposition}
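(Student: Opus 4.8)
The plan is to verify directly that the play vector $\chi(T')$ satisfies the defining equations~\eqref{model_basic_team_plays} and~\eqref{model_basic_pair_plays} of a tournament. Since $T'$ is obtained from $T$ by deleting the four matches in $R \coloneqq \{(k_1,i,j),(k_1,i',j'),(k_2,i,j'),(k_2,i',j)\}$ and inserting the four matches in $N \coloneqq \{(k_1,i,j'),(k_1,i',j),(k_2,i,j),(k_2,i',j')\}$, this amounts to a small bookkeeping check on a handful of coordinates.

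First I would pin down the structural facts forced by $T$ being a tournament. As $(k_1,i,j)$ and $(k_1,i',j')$ lie in the same slot, the four teams $i,j,i',j'$ are pairwise distinct; as team $i$ occurs in both $(k_1,i,j)$ and $(k_2,i,j')$ while no team plays twice in one slot, we get $k_1 \neq k_2$. Moreover, each match of $N$ is absent from $T$: for instance, by~\eqref{model_basic_pair_plays} the ordered pair $(i,j')$ is played in $T$ only in slot $k_2$, so $(k_1,i,j') \notin T$, and symmetrically for the other three. Hence $R \subseteq T$ and $N \cap T = \emptyset$, so $\chi(T') = \chi(T) - \sum_{m\in R}e_m + \sum_{m\in N}e_m$, where $e_m$ denotes the $m$-th unit vector.

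For~\eqref{model_basic_pair_plays} I would observe that the lists of ordered pairs underlying $R$ and $N$ coincide as multisets, both equal to $\{(i,j),(i',j'),(i,j'),(i',j)\}$; informally, each of these four pairs merely migrates from one of the two slots $k_1,k_2$ to the other. Therefore $\sum_{k\in S}\chi(T')_{k,p,q} = \sum_{k\in S}\chi(T)_{k,p,q} = 1$ for every $(p,q)\in A$. For~\eqref{model_basic_team_plays} I would note that only slots $k_1$ and $k_2$ are affected, and that in slot $k_1$ the operation replaces the two matches whose underlying edges are $\{i,j\},\{i',j'\}$ by the two whose underlying edges are $\{i,j'\},\{i',j\}$, i.e.\ it exchanges one perfect matching of the four-team set $\{i,i',j,j'\}$ for the other (and symmetrically in slot $k_2$); in particular the set of teams playing in slot $k_1$, resp.\ $k_2$, is unchanged and each of them still plays exactly once. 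Hence~\eqref{model_basic_team_plays} holds for all $k\ge 2$, and for $k=1$ it follows from \cref{thm_redundant_equations}. Combining both checks shows that $\chi(T')$ satisfies the tournament conditions, so $T'$ is a tournament.

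There is no serious obstacle here; the one point demanding care is the degeneracy analysis of the second paragraph (distinctness of $i,j,i',j'$, the inequality $k_1\ne k_2$, and $N\cap T=\emptyset$), which is exactly what guarantees that passing from $T$ to $T'$ genuinely replaces four matches by four distinct new matches rather than collapsing to a smaller set. It is worth remarking that \cref{thm_home_away_swap} admits an entirely analogous, and even shorter, argument (there only two matches move, the underlying unordered edge $\{i,j\}$ staying in its two slots while its two orientations get exchanged), so in the final write-up the two propositions can be proved by essentially the same reasoning.
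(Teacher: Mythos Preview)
Your proof is correct. The paper states this proposition without proof, treating it as an elementary verification; your direct check of equations~\eqref{model_basic_team_plays} and~\eqref{model_basic_pair_plays} is exactly the intended (and essentially only) argument, just written out explicitly. One tiny remark: your distinctness claim for $i,j,i',j'$ tacitly excludes the fully degenerate case $(i,j)=(i',j')$, but in that case $k_1=k_2$ is forced and $T'=T$, so nothing is lost.
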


\DeclareDocumentCommand\refPartialSlotSwap{mmmmmm}{\hyperref[eq_partial_slot_swap]{\ensuremath{\text{PS}_{\text{#1},\text{#2},\text{#3},\text{#4},\text{#5},\text{#6}}}}}
\DeclareDocumentCommand\eqrefPartialSlotSwap{mmmmmm}{(\refPartialSlotSwap{#1}{#2}{#3}{#4}{#5}{#6})}
\DeclareDocumentCommand\refPartialSlotSwapDefault{}{\refPartialSlotSwap{$1$}{$k$}{$i$}{$j$}{$i'$}{$j'$}}
\DeclareDocumentCommand\eqrefPartialSlotSwapDefault{}{(\refPartialSlotSwapDefault)}

\subsection{Dimension of the unconstrained traveling tournament polytope}

\begin{theorem}
  \label{thm_dim}
  The affine hull of $\Ptt(n)$ is described completely by the irredundant equations~\eqref{model_basic_team_plays} and~\eqref{model_basic_pair_plays}.
\end{theorem}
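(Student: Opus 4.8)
The plan is to prove the statement in the standard way for a polyhedral dimension result: since equations~\eqref{model_basic_team_plays} and~\eqref{model_basic_pair_plays} are obviously valid for $\Ptt(n)$ and, by \cref{thm_dim_basis} together with \cref{thm_redundant_equations}, have rank exactly $3n^2-4n$, it suffices to show that every equation $\transpose{a}x+\transpose{b}y=\gamma$ valid for $\Ptt(n)$ is implied by~\eqref{model_basic_team_plays} and~\eqref{model_basic_pair_plays}. First I would show $b=\zerovec$: for any $(t,i,j)\in V\times A$, take a tournament $T$ in which team $t$ does not travel from $i$ to $j$ (such a tournament exists; see \cref{sec_tournaments}), so that both $(\chi(T),\psi(T))$ and $(\chi(T),\psi(T)+e_{t,i,j})$ lie in $\Ptt(n)$, the latter because $\psi(T)+e_{t,i,j}$ is still $0/1$ and dominates $\psi(T)$; subtracting the two instances of the equation gives $b_{t,i,j}=0$. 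Hence the equation becomes $\transpose{a}x=\gamma$, valid for the play vector of every tournament, and by \cref{thm_dim_basis} we may normalize it with respect to slot $1$, i.e.\ assume $a_m=0$ for all $m\in\columnBasis$.

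It then remains to show $a_m=0$ for every $m=(k,i,j)\notin\columnBasis$ (so $k\geq 2$, $i\neq 1$ and $(i,j)\neq(2,3)$), after which $\gamma=0$ follows by evaluating $\transpose{a}x=\gamma$ at any tournament. The engine is: if $T,T'$ are tournaments whose play vectors differ only in coordinates lying in $\columnBasis$, equal to $m$, or already known to carry zero coefficient, then $0=\transpose{a}(\chi(T)-\chi(T'))$ forces $a_m=0$. Applying the home-away swap of \cref{thm_home_away_swap} on slots $k$ and $1$ to a suitable tournament shows $a_{k,i,j}=a_{k,j,i}$ for all $k\geq 2$ and $(i,j)\in A$; in particular $a_m=0$ whenever $(k,j,i)\in\columnBasis$, which settles every $m$ whose away team is $1$ as well as $m=(k,3,2)$. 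The remaining coordinates have $\{i,j\}\subseteq\{2,\dots,n\}$ with $\{i,j\}\neq\{2,3\}$, and I would eliminate them in two rounds using the partial slot swap of \cref{thm_partial_slot_swap} with $k_1=1$:
\begin{enumerate}
  \item home teams $1,2$ and away teams $3,\ell$ for $\ell\in\{4,\dots,n\}$: of the eight affected coordinates, the four in slot $1$ lie in $\columnBasis$, and in slot $k_2$ three involve team $1$ or the pair $(2,3)$, leaving only $(k_2,2,\ell)$; hence $a_{k_2,2,\ell}=0$, and by the symmetry above also $a_{k_2,\ell,2}=0$;
  \item for $m=(k,p,q)$ with $p,q\in\{3,\dots,n\}$: home teams $p,1$ and away teams $q,2$; the slot-$1$ coordinates lie in $\columnBasis$, while the affected slot-$k$ coordinates other than $(k,p,q)$ are $(k,p,2)$, $(k,1,q)$ and $(k,1,2)$, all of which already carry zero coefficient by round~1 and normalization; hence $a_{k,p,q}=0$.
\end{enumerate}
After these rounds $a=\zerovec$, hence $\gamma=0$, so the original equation was implied by~\eqref{model_basic_team_plays} and~\eqref{model_basic_pair_plays}; since the affine hull of $\Ptt(n)$ is contained in the solution set of these rank-$(3n^2-4n)$ equations and admits no further valid equation, the two coincide.

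The main obstacle is bookkeeping rather than a deep idea: for each case one must check that a tournament containing the prescribed matches in slots $1$ and $k_2$ actually exists — this is precisely what the constructions of \cref{sec_tournaments} supply — and that each swap really leaves every coordinate outside $\columnBasis$ fixed except the intended one and those zeroed earlier, which is why the order of elimination (away team $1$, then pairs meeting $\{2\}$, then pairs inside $\{3,\dots,n\}$) is essential.
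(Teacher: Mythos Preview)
Your proposal is correct and follows essentially the same route as the paper: normalize the valid equation with respect to slot~$1$ via \cref{thm_dim_basis}, kill $b$ by toggling a single travel variable over a tournament avoiding the corresponding trip, then use home-away swaps to get $a_{k,i,j}=a_{k,j,i}$ and partial slot swaps (first with $(i,j,i',j')=(1,\ell,2,3)$, then with $(1,\ell',\ell,2)$) to eliminate the remaining $a$-entries in exactly your two-round order. The only cosmetic difference is that the paper normalizes before proving $b=\zerovec$ rather than after, which is immaterial since the normalization touches only $x$-coefficients.
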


\begin{proof}
  We first observe that \cref{thm_dim_basis} implies that the equations are irredundant, i.e., none of them is a linear combination of the others.
  It remains to prove that every valid equation is a linear combination of these equations.
  To this end, we show that for any equation $\transpose{a}x + \transpose{b}y = \gamma$ that is valid for $\Ptt(n)$ and that is normalized with respect to slot $1$ (i.e., it satisfies~\eqref{eq_basis_zero}) that $(a,b) = \zerovec$ holds.

  \begin{restatable}{claim}{thmDimNoTravel}
    \label{thm_dim_no_travel}
    For each $(t,i,j) \in V \times A$ there exists a tournament in which team $t$ never travels from venue $i$ to venue $j$.
  \end{restatable}

  A tournament $T$ from \cref{thm_dim_no_travel} satisfies $\psi(T)_{t,i,j} = 0$.
  Let $y \coloneqq \psi(T)$ and let $y'$ be equal to $y$ except for $y'_{t,i,j} = 1$.
  Hence, $(\chi(T),y), (\chi(T),y') \in \Ptt(n)$ and thus $\transpose{a}\chi(T) + \transpose{b}y = \gamma = \transpose{a}\chi(T) + \transpose{b}y'$ holds.
  We obtain
  \begin{equation}
    b = \zerovec.
    \tag{\S\ref*{thm_dim_no_travel}}
    \label{proof_dim_no_travel}
  \end{equation}

  \begin{restatable}{claim}{thmDimHomeAway}
    \label{thm_dim_home_away}
    For each $k \in S \setminus \{1\}$ and for distinct $i,j \in V$ there exist tournaments $T$ and $T'$ satisfying~\eqrefHomeAwaySwapDefault{}.
  \end{restatable}

  For the tournaments $T$ and $T'$ from \cref{thm_dim_home_away} we have $\transpose{b}\psi(T) = \transpose{b}\psi(T')$ due to~\eqref{proof_dim_no_travel}.
  Using the fact that the equation is normalized with respect to slot $1$ (i.e., it satisfies~\eqref{eq_basis_zero}), $\transpose{a}\chi(T) + \transpose{b}\psi(T) = \gamma = \transpose{a}\chi(T') + \transpose{b}\psi(T')$ simplifies to
  \begin{equation}
    a_{k,i,j} = a_{k,j,i} \text{ for each } k \in S \setminus \{1\} \text{ and for all distinct } i,j \in V.
    \tag{\S\ref*{thm_dim_home_away}}
    \label{proof_dim_home_away}
  \end{equation}

  \begin{restatable}{claim}{thmDimPartialSlot}
    \label{thm_dim_partial_slot}
    For each $k \in S \setminus \{1\}$ and for distinct $i,j,i',j' \in V$ there exist tournaments $T$ and $T'$ satisfying~\eqrefPartialSlotSwapDefault{}.
  \end{restatable}

  For the tournaments $T$ and $T'$ from \cref{thm_dim_partial_slot} we have $\transpose{b}\psi(T) = \transpose{b}\psi(T')$ due to~\eqref{proof_dim_no_travel}.
  Using \eqref{eq_basis_zero}, $\transpose{a}\chi(T) + \transpose{b}\psi(T) = \gamma = \transpose{a}\chi(T') + \transpose{b}\psi(T')$ simplifies to
  \begin{equation}
    a_{k,i,j} + a_{k,i',j'} = a_{k,i,j'} + a_{k,i',j} \text{ for each } k \in S \setminus \{1\} \text{ and for all distinct } i,j,i',j' \in V.
    \tag{\S\ref*{thm_dim_partial_slot}}
    \label{proof_dim_partial_slot}
  \end{equation}

  Consider a slot $k \in S \setminus \{1\}$.
  For each $\ell \in \{4,5,\dotsc,n\}$, \eqref{proof_dim_partial_slot} implies $a_{k,1,\ell} + a_{k,2,3} = a_{k,1,3} + a_{k,2,\ell}$ which, together
  with the fact that the equation is normalized with respect to slot $1$ (i.e., it satisfies~\eqref{eq_basis_zero}), yields $a_{k,2,\ell} = 0$.
  Combined with~\eqref{proof_dim_home_away} we also obtain $a_{k,\ell,2} = 0$.
  For all distinct $\ell,\ell' \in \{3,4,\dotsc,n\}$, \eqref{proof_dim_partial_slot} implies $a_{k,1,\ell'} + a_{k,\ell,2} = a_{k,1,2} + a_{k,\ell,\ell'}$.
  Together with \eqref{eq_basis_zero}, this shows $a_{k,\ell,\ell'} = 0$.
  Hence, $a = \zerovec$ holds, which concludes the proof.
\end{proof}

\begin{corollary}
  The dimension of $\Ptt(n)$ is equal to $3n^3 - 8n^2 + 6n$.
\end{corollary}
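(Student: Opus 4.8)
The plan is to read off the dimension from \cref{thm_dim} together with \cref{thm_dim_basis}. By \cref{thm_dim}, the affine hull of $\Ptt(n)$ is exactly the solution set of the system consisting of equations~\eqref{model_basic_team_plays} and~\eqref{model_basic_pair_plays}, and by \cref{thm_dim_basis} this system has full row rank $|\columnBasis_{\bar k}| = 3n^2 - 4n$ for any fixed $\bar k \in S$. Since a canonical factorization yields a tournament $T$, and hence a point $(\chi(T),\psi(T)) \in \Ptt(n)$, the polytope is nonempty, so its dimension equals the number of coordinates of the ambient space minus the rank of this equation system.

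It then remains to count coordinates: there are $|\allMatches| = |S| \cdot |A| = (2n-2)\cdot n(n-1)$ play variables and $|V \times A| = n \cdot n(n-1)$ travel variables, for a total of $n(n-1)\bigl(2(n-1)+n\bigr) = n(n-1)(3n-2) = 3n^3 - 5n^2 + 2n$. Subtracting the rank $3n^2-4n$ gives $\dim \Ptt(n) = (3n^3 - 5n^2 + 2n) - (3n^2 - 4n) = 3n^3 - 8n^2 + 6n$, as claimed.

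There is no real obstacle here: the substantive content is already contained in \cref{thm_dim} (that no further equations are valid) and \cref{thm_dim_basis} (that the model equations are irredundant, so the rank is exactly $3n^2-4n$). All that is left is the bookkeeping of the variable count and a remark that $\Ptt(n) \neq \emptyset$ so that "dimension" is well defined; I would state these two lines explicitly and conclude.
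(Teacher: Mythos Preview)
Your proof is correct and follows essentially the same approach as the paper: compute the ambient dimension as $|\allMatches| + |V \times A|$, subtract the rank $3n^2-4n$ of the (irredundant) system~\eqref{model_basic_team_plays}--\eqref{model_basic_pair_plays}, and simplify. Your explicit remark that $\Ptt(n) \neq \varnothing$ is a small addition the paper leaves implicit, but otherwise the arguments coincide.
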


\begin{proof}
  The ambient space of $\Ptt(n)$ has dimension $|\allMatches| + n \cdot |A|$.
  By \cref{thm_dim}, the affine hull is described by the $3n^2 - 4n$ equations~\eqref{model_basic_team_plays} and~\eqref{model_basic_pair_plays}, which are irredundant by \cref{thm_dim_basis}.
  Hence,
  \begin{equation*}
    \dim(\Ptt(n)) = (2n-2) \cdot n \cdot (n-1) + n \cdot n \cdot (n-1) - (3n^2 - 4n) = 3n^3 - 8n^2 + 6n. 
  \end{equation*}
  This concludes the proof.
\end{proof}

\section{Model inequalities}
\label{sec_model_inequalities}

In this section we consider the inequalities from~\eqref{model_basic} and determine when they are facet-defining.
While for many integer programming problems such a verification is a simple task that does not yield any insight, in our case we already observe that establishing facetness is nontrivial.
This is due to the combinatorics of tournament schedules which do not admit the construction of simple (and affinely independent) solution vectors.
This complexity is already indicated by \cref{thm_partial_slot_swap} where eight coordinates must be changed in a very structured way in order to move from one solution vector to another.
Moreover, our attempts to prove that inequalites~\eqref{model_basic_travel_away_away} are facet-defining failed, and it turned out that they actually are not.
However, they are almost facet-defining in the sense that the dimension of their induced face is to low by $1$, and we provide the corresponding two facets that have this face as their intersection.
Within the proofs we will sometimes argue about symmetry of the formulation, for which we state the following lemma.

\begin{lemma}
  \label{thm_symmetry}
  $\Ptt(n)$ and formulation~\eqref{model_basic} are symmetric with respect to permuting teams and with respect to mirroring all slots, i.e., exchanging roles of slots $k$ and $2n-1-k$ for all $k \in \{1,2,\dotsc,n-1\}$.
\end{lemma}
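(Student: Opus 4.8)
The plan is to exhibit, for each claimed symmetry, an explicit linear bijection of the ambient space $\R^{\allMatches} \times \R^{V \times A}$ that maps the feasible region of~\eqref{model_basic} onto itself and carries the generating integer points of $\Ptt(n)$ to generating integer points; this establishes invariance of the formulation and of its integer hull at once. The single fact that makes ``tournament $\mapsto$ tournament'' painless is that, by \cref{thm_redundant_equations}, the play vector of any tournament satisfies~\eqref{model_basic_team_plays} not only for $k \geq 2$ but for all $k \in S$, so the defining system~\eqref{model_basic_team_plays}--\eqref{model_basic_pair_plays} is manifestly invariant under relabeling teams and under reindexing slots.

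For a permutation $\sigma$ of $V$, consider the map sending $(x,y)$ to $(\bar x,\bar y)$ with $\bar x_{k,i,j} \coloneqq x_{k,\sigma^{-1}(i),\sigma^{-1}(j)}$ and $\bar y_{t,i,j} \coloneqq y_{\sigma^{-1}(t),\sigma^{-1}(i),\sigma^{-1}(j)}$. First I would check that it sends each constraint family of~\eqref{model_basic} onto itself by mere reindexing; e.g.\ the instance of~\eqref{model_basic_travel_away_away} with parameters $(k,i,j,t)$ becomes the instance with parameters $(k,\sigma(i),\sigma(j),\sigma(t))$, and analogously for~\eqref{model_basic_team_plays}--\eqref{model_basic_pair_plays} and~\eqref{model_basic_travel_home_away}--\eqref{model_basic_travel_last}. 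Applying this to $\sigma$ and to $\sigma^{-1}$ shows the LP relaxation, hence also the set of integer feasible points, is preserved. Second, for a tournament $T$ the relabeled match set $\sigma(T) \coloneqq \{(k,\sigma(i),\sigma(j)) : (k,i,j) \in T\}$ is again a tournament, the map sends $\chi(T)$ to $\chi(\sigma(T))$, and directly from the definition of the travel vector it sends $\psi(T)$ to $\psi(\sigma(T))$; hence it maps each generator $(\chi(T),y)$ with $y \geq \psi(T)$ to a generator of $\Ptt(n)$. Being a linear bijection, it is then an automorphism of $\Ptt(n)$; the objective~\eqref{model_basic_obj} is preserved if one simultaneously relabels the distances by $d_{i,j} \mapsto d_{\sigma(i),\sigma(j)}$.

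For slot mirroring the one point that needs genuine care is that reversing the slot order reverses the order in which a team visits its venues, and therefore reverses the direction of every trip. Accordingly I would use the involution sending $(x,y)$ to $(\bar x,\bar y)$ with $\bar x_{k,i,j} \coloneqq x_{2n-1-k,i,j}$ and $\bar y_{t,i,j} \coloneqq y_{t,j,i}$, noting the swapped arc. With this definition, performing the substitution $k \mapsto 2n-2-k$ one checks that~\eqref{model_basic_travel_away_away} is mapped to the instance of itself with $(i,j)$ replaced by $(j,i)$, that~\eqref{model_basic_travel_home_away} and~\eqref{model_basic_travel_away_home} are exchanged, that~\eqref{model_basic_travel_first} and~\eqref{model_basic_travel_last} are exchanged, and that~\eqref{model_basic_team_plays}--\eqref{model_basic_pair_plays} are preserved (here slot $2n-2$ maps to slot $1$, which is why one needs~\eqref{model_basic_team_plays} for all $k \in S$). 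For the integer hull, the mirrored match set $\mu(T) \coloneqq \{(2n-1-k,i,j):(k,i,j)\in T\}$ is a tournament, $\chi(\mu(T))$ is the image of $\chi(T)$, and one verifies from the definition of the travel vector that $\psi(\mu(T))_{t,i,j} = \psi(T)_{t,j,i}$, so a generator $(\chi(T),y)$ with $y \geq \psi(T)$ is sent to the generator $(\chi(\mu(T)),\bar y)$ with $\bar y \geq \psi(\mu(T))$; the objective coefficient of $y_{t,i,j}$ moves from $d_{i,j}$ to $d_{j,i}$.

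Putting the two pieces together yields the lemma. I expect essentially all the labor to be the routine-but-fiddly bookkeeping in the slot-mirroring case — matching up the four travel-constraint families~\eqref{model_basic_travel_away_away}--\eqref{model_basic_travel_last} correctly under the combined slot shift $k \mapsto 2n-2-k$ and arc reversal $y_{t,i,j}\mapsto y_{t,j,i}$, and double-checking the index ranges; the team-permutation case and the equality constraints are immediate from the symmetric form of the definitions.
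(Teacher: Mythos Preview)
Your proposal is correct and follows essentially the same approach as the paper, which merely sketches the argument by noting that team permutation is ``clear'' and that slot mirroring reverses all traveled arcs and exchanges the roles of~\eqref{model_basic_travel_home_away} with~\eqref{model_basic_travel_away_home} and~\eqref{model_basic_travel_first} with~\eqref{model_basic_travel_last}. Your version is considerably more explicit---writing down the linear bijections, invoking \cref{thm_redundant_equations} to make~\eqref{model_basic_team_plays} slot-symmetric, and tracking the arc reversal $y_{t,i,j}\mapsto y_{t,j,i}$---but the underlying idea is identical.
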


\begin{proof}
  Symmetry with respect to team permutations is clear for $\Ptt(n)$ and for the formulation.

  Moreover, symmetry with respect to mirroring slots is easy to see for $\Ptt(n)$: when slots are exchanged, all traveled arcs are simply reversed.
  For the formulation, the roles of~\eqref{model_basic_travel_home_away} and~\eqref{model_basic_travel_away_home} as well as~\eqref{model_basic_travel_first} and~\eqref{model_basic_travel_last} are exchanged.
\end{proof}

We start with the nonnegativity constraints for the play variables.
\begin{theorem}
  \label{thm_nonneg}
  Inequalities $x_{k,i,j} \geq 0$ are facet-defining for $\Ptt(n)$ for all $(k,i,j) \in \allMatches$.
\end{theorem}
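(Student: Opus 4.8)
The plan is to use the standard technique for establishing that an inequality is facet-defining: exhibit a face $F$ of $\Ptt(n)$ contained in $\{x_{k,i,j} = 0\}$, show $F$ has the same dimension as a facet, and conclude. Concretely, I would take an arbitrary valid equation $\transpose{a}x + \transpose{b}y = \gamma$ that holds for all points of $\Ptt(n)$ satisfying $x_{k,i,j} = 0$, normalize it with respect to slot $1$ using \cref{thm_dim_basis} (so that $a_m = 0$ for $m \in \columnBasis$), and prove that it must be a scalar multiple of $x_{k,i,j} \geq 0$ plus a linear combination of the model equations~\eqref{model_basic_team_plays} and~\eqref{model_basic_pair_plays} — equivalently, that $b = \zerovec$ and $a$ is supported only on the single coordinate $(k,i,j)$ after normalization. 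Since \cref{thm_dim} already pins down the affine hull, this suffices.

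The argument would mirror the proof of \cref{thm_dim} almost verbatim, with the one caveat that every tournament invoked must additionally \emph{avoid the match $(k,i,j)$}, i.e., lie on the face. First I would establish the analogue of \cref{thm_dim_no_travel}: for each $(t,i',j') \in V \times A$ there is a tournament not playing $(k,i,j)$ in which team $t$ never travels from $i'$ to $j'$; toggling the corresponding $y$-coordinate between $0$ and $1$ gives two points of $F$ and forces $b = \zerovec$. Then I would run the home-away swap argument (\cref{thm_dim_home_away}, \cref{thm_home_away_swap}) and the partial slot swap argument (\cref{thm_dim_partial_slot}, \cref{thm_partial_slot_swap}), again insisting that all four/eight tournaments involved avoid $(k,i,j)$; note that both operations only ever alter coordinates in the two slots $k_1, k_2$ they act on, so as long as $(k,i,j)$ is not among the altered matches these operations preserve membership in $F$. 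This yields $a_{k',i',j'} = a_{k',j',i'}$ and the parallelogram relations $a_{k',i',j'} + a_{k',\bar i,\bar j} = a_{k',i',\bar j} + a_{k',\bar i,j'}$ for every slot $k' \neq 1$ \emph{except possibly when the operation is forced to touch $(k,i,j)$}. Finally, re-deriving $a = 0$ on the normalized coordinates as in the last paragraph of the proof of \cref{thm_dim}, the one coordinate that cannot be shown to vanish is exactly $a_{k,i,j}$ (whose forced changes always involve the excluded match), leaving $a$ supported on $\{(k,i,j)\}$ as desired.

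The main obstacle is the same as everywhere in this paper: verifying that the needed tournaments — now with the extra constraint "does not play $(k,i,j)$" — actually exist, and that the swap operations have enough freedom to be applied while staying on $F$. When $k \neq 1$, one can take the canonical factorization, permute teams/slots so that slot $k$'s matching orients the edge $\{i,j\}$ as $(j,i)$ rather than $(i,j)$ (always possible by the complementary-orientation freedom~\eqref{eq_canonical_factorization_orientation}), and then apply the operations in slots other than $k$, or in slot $k$ on team-pairs disjoint from $\{i,j\}$. When $k=1$ the normalization already handles things: coordinates $a_m$ for $m \in \columnBasis \ni (1,i,j)$ are zero by fiat, so for $(k,i,j)\in\columnBasis$ the face lies in the affine hull's hyperplane and the statement is actually vacuous in the sense that $x_{k,i,j}\ge 0$ being facet-defining would contradict $x_m$ being fixed — so one should first check (via \cref{thm_dim_basis}, choosing a different normalizing slot $\bar k \neq 1$ when $k=1$) that the relevant coordinate is genuinely free. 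I would therefore split into the case $k \neq 1$ (normalize w.r.t.\ slot $1$) and $k = 1$ (normalize w.r.t.\ some slot $\bar k \geq 2$), and in each case carefully list which of the $\orderO(n^2)$ swap instances need a tournament avoiding $(k,i,j)$, deferring the explicit constructions to \cref{sec_tournaments}.
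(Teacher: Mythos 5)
Your approach is essentially the one in the paper: normalize the facet-defining equation via \cref{thm_dim_basis}, show $b = \zerovec$ by the no-travel argument, then run the home-away and partial-slot swap arguments restricted to tournaments avoiding $m^\star = (k,i,j)$, and conclude that $a$ can only be supported on the single coordinate $m^\star$. The claims you would need (a variant of \cref{thm_dim_no_travel} with the extra constraint $m^\star \notin T$, plus swap claims with exclusions near $m^\star$) are exactly the paper's Claims \ref{thm_nonneg_no_travel}--\ref{thm_nonneg_partial_slot}.

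There is, however, a genuine gap in your handling of the normalization. You correctly identify that the argument breaks if $m^\star \in \columnBasis_{\bar k}$ (since then $a_{m^\star}$ is set to $0$ by fiat and the conclusion $a = \zerovec$ wrongly forces the inequality to be trivial), but your fix addresses only one of three ways $m^\star$ can lie in $\columnBasis_{\bar k}$. Recall $\columnBasis_{\bar k} = \{(k,i,j) : k = \bar k \text{ or } i = 1 \text{ or } (i,j) = (2,3)\}$. Changing the normalizing slot from $1$ to some $\bar k \neq k^\star$ removes the slot conflict, but if $i^\star = 1$ or $(i^\star,j^\star) = (2,3)$ the match $m^\star$ remains in $\columnBasis_{\bar k}$ for \emph{every} choice of $\bar k$, so the case split on $k$ alone does not suffice. (As a side note, your explanation that the normalization makes the statement ``vacuous'' conflates the coefficient $a_{m^\star}$ with the variable $x_{m^\star}$ being fixed; the actual problem is that the normalization would force the inequality to lie in the affine hull, contradicting facetness.) The paper avoids the entire issue in one stroke by invoking \cref{thm_symmetry}: permute teams so that $i^\star = 3$, $j^\star = 4$, and mirror slots so that $k^\star \geq n$. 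This simultaneously ensures $k^\star \neq 1$, $i^\star \neq 1$, and $(i^\star,j^\star) \neq (2,3)$, so $m^\star \notin \columnBasis_1$ and a single normalization works for all matches. Your proposal needs either this symmetry reduction or a team-dependent generalization of the column basis from \cref{thm_dim_basis} (using three other ``anchor'' teams in place of $1,2,3$) to close the gap.
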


\begin{proof}
  Consider the inequality $x_{k^\star,i^\star,j^\star} \geq 0$ for some match $m^\star = (k^\star,i^\star,j^\star) \in \allMatches$.
  By \cref{thm_symmetry}, we can assume $k^\star \geq n$ and $i^\star = 3$ and $j^\star = 4$.
  This implies $m^\star \notin \columnBasis$.
  Let $\transpose{a}x + \transpose{b}y \geq \gamma$ define any facet $F$ that contains the face induced by this inequality.
  Without loss of generality, the equation is normalized with respect to slot $1$, i.e., it satisfies~\eqref{eq_basis_zero}.
  It remains to prove that $b = \zerovec$ and $\gamma = 0$ hold and that $a$ is a multiple of $\chi(\{(k^\star,i^\star,j^\star)\})$.

  \begin{restatable}{claim}{thmNonnegNoTravel}
    \label{thm_nonneg_no_travel}
    For each $(t,i,j) \in V \times A$ there exists a tournament $T$ with $m^\star \notin T$ and in which team $t$ never travels from venue $i$ to venue $j$.
  \end{restatable}

  A tournament $T$ from \cref{thm_nonneg_no_travel} satisfies $\psi(T)_{t,i,j} = 0$.
  Let $y \coloneqq \psi(T)$ and let $y'$ be equal to $y$ except for $y'_{t,i,j} = 1$.
  Since $\chi(T)_{m^\star} = 0$ holds, we have $(\chi(T),y), (\chi(T),y') \in F$.
  The equation $\transpose{a}\chi(T) + \transpose{b}y = \gamma = \transpose{a}\chi(T) + \transpose{b}y'$ simplifies to $b_{t,i,j} = 0$.
  We obtain
  \begin{equation}
    b = \zerovec.
    \tag{\S\ref*{thm_nonneg_no_travel}}
    \label{proof_nonneg_no_travel}
  \end{equation}

  \begin{restatable}{claim}{thmNonnegHomeAway}
    \label{thm_nonneg_home_away}
    For each $(k,i,j) \in \allMatches$ with $k \geq 2$ and $(k,i,j) \neq (k^\star,i^\star,j^\star),(k^\star,j^\star,i^\star)$ there exist tournaments $T$ and $T'$ satisfying~\eqrefHomeAwaySwapDefault{} and $(k^\star,i^\star,j^\star),(k^\star,j^\star,i^\star) \notin T \cup T'$.
  \end{restatable}

  The tournaments $T$ and $T'$ from \cref{thm_nonneg_home_away} satisfy $\chi(T)_{m^\star} = \chi(T')_{m^\star} = 0$ and thus we have $(\chi(T),\psi(T)), (\chi(T'),\psi(T')) \in F$.
  Using the fact that the equation is normalized with respect to slot $1$ (i.e., it satisfies~\eqref{eq_basis_zero}) and~\eqref{thm_nonneg_no_travel}, $\transpose{a}\chi(T) + \transpose{b}\psi(T) = \gamma = \transpose{a}\chi(T') + \transpose{b}\psi(T')$ simplifies to
  \begin{equation}
    a_{k,i,j} = a_{k,j,i} \text{ for each } (k,i,j) \in \allMatches \text{ with } k \geq 2 \text{ and } (k,i,j) \notin \{ (k^\star,i^\star,j^\star), (k^\star,j^\star,i^\star) \}.
    \tag{\S\ref*{thm_nonneg_home_away}}
    \label{proof_nonneg_home_away}
  \end{equation}

  \begin{restatable}{claim}{thmNonnegPartialSlot}
    \label{thm_nonneg_partial_slot}
    For each slot $k \in S \setminus \{1\}$ and for distinct $i,j,i',j' \in V$ with $k \neq k^\star$ or $(i^\star,j^\star) \notin \{ (i,j),(i',j'),(i',j),(i,j') \}$ there exist tournaments $T$ and $T'$ satisfying~\eqrefPartialSlotSwapDefault{} and $m^\star \notin T \cup T'$.
  \end{restatable}

  The tournaments $T$ and $T'$ from \cref{thm_nonneg_partial_slot} satisfy $\chi(T)_{m^\star} = \chi(T')_{m^\star} = 0$ and thus we have $(\chi(T),\psi(T)), (\chi(T'),\psi(T')) \in F$.
  Using the fact that the equation is normalized with respect to slot $1$ (i.e., it satisfies~\eqref{eq_basis_zero})  and~\eqref{thm_nonneg_no_travel}, $\transpose{a}\chi(T) + \transpose{b}\psi(T) = \gamma = \transpose{a}\chi(T') + \transpose{b}\psi(T')$ simplifies to
  \begin{multline}
    a_{k,i,j} + a_{k,i',j'} = a_{k,i,j'} + a_{k,i',j} \text{ for each } k \in S \setminus \{1\} \text{ and for all distinct } (i,j,i',j') \in V \\ \text{ with } k \neq k^\star \text{ or } (i^\star,j^\star) \notin \{ (i,j),(i',j'),(i,j'),(i',j) \}.
    \tag{\S\ref*{thm_nonneg_partial_slot}}
    \label{proof_nonneg_partial_slot}
  \end{multline}

  Consider a slot $k \in S \setminus \{1\}$.
  For each $\ell \in \{4,5,\dotsc,n\}$, \eqref{proof_nonneg_partial_slot} implies $a_{k,1,\ell} + a_{k,2,3} = a_{k,1,3} + a_{k,2,\ell}$ which, together with the fact that the equation is normalized with respect to slot $1$ (i.e., it satisfies~\eqref{eq_basis_zero}), yields $a_{k,2,\ell} = 0$.
  Combined with~\eqref{proof_nonneg_home_away} we also obtain $a_{k,\ell,2} = 0$.
  For all distinct $\ell,\ell' \in \{3,4,\dotsc,n\}$ except for $(\ell,\ell') = (4,3)$, \eqref{proof_nonneg_partial_slot} implies $a_{k,1,\ell'} + a_{k,\ell,2} = a_{k,1,2} + a_{k,\ell,\ell'}$.
  Together with the fact that the equation is normalized with respect to slot $1$ (i.e., it satisfies~\eqref{eq_basis_zero}), this shows $a_{k,\ell,\ell'} = 0$ for all but the entry corresponding to match $(k^\star,i^\star,j^\star)$.

  Hence, the inequality reads $a_{k^\star,i^\star,j^\star} \cdot x_{k^\star,i^\star,j^\star} \geq \gamma$.
  Since $\chi(T)_{k^\star,i^\star,j^\star} = 0$ holds for each of the considered tournaments $T$, we obtain $\gamma = 0$.
  Finally, since there exist tournaments $T$ for which $\chi(T)_{k^\star,i^\star,j^\star} = 1$ holds, $a_{k^\star,i^\star,j^\star}$ must be positive, which concludes the proof.
\end{proof}

We continue with inequalities~\eqref{model_basic_travel_away_away} which are not facet-defining.
However, they can be lifted to these two stronger ones.
\begin{subequations}
  \label{eq_travel_away_away_lifted}
  \begin{align}
    x_{k,j,t} + x_{k,i,t} + x_{k+1,j,t} - 1 &\leq y_{t,i,j} \quad \forall  k \in S \setminus \{2n-2\}, ~\forall (i,j) \in A, ~\forall t \in V \setminus \{i,j\} \label{eq_travel_away_away_lifted_first} \\
    x_{k+1,i,t} + x_{k,i,t} + x_{k+1,j,t} - 1 &\leq y_{t,i,j} \quad \forall  k \in S \setminus \{2n-2\}, ~\forall (i,j) \in A, ~\forall t \in V \setminus \{i,j\} \label{eq_travel_away_away_lifted_second}
  \end{align}
\end{subequations}
Indeed, in order to obtain~\eqref{model_basic_travel_away_away} they only need to be combined with nonnegativity constraints for $x$.
These inequalities turn out to be facet-defining.

\begin{theorem}
  \label{thm_travel_away_away_lifted}
  Inequalities~\eqref{eq_travel_away_away_lifted} are facet-defining for $\Ptt(n)$ for each slot $k \in S \setminus \{2n-2\}$ and all distinct teams $i,j,t \in V$.
\end{theorem}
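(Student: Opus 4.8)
The plan is to follow the scheme of the proofs of \cref{thm_dim} and \cref{thm_nonneg}. By \cref{thm_symmetry} it suffices to treat inequality~\eqref{eq_travel_away_away_lifted_first}, since mirroring all slots turns it into~\eqref{eq_travel_away_away_lifted_second} (with $k$ replaced by $2n-2-k$ and $i$ and $j$ exchanged). Using \cref{thm_symmetry} once more (team permutations), and normalizing the equation with respect to a slot $\bar{k} \in S \setminus \{k-1,k,k+1\}$ instead of slot $1$ — this is legitimate since \cref{thm_dim_basis} holds for every $\bar{k}$, and such $\bar{k}$ exists because $|S| \geq 6$ — we may assume $t \notin \{1,2,3\}$, that $i,j \neq 1$, and that, written as $y_{t,i,j} - x_{k,j,t} - x_{k,i,t} - x_{k+1,j,t} \geq -1$, the inequality already has support disjoint from $\columnBasis_{\bar{k}}$. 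Denote by $F^{\star}$ the face it induces. I would first record the structure of $F^{\star}$: a $0/1$ point $(\chi(T),y)$ of $\Ptt(n)$ lies in $F^{\star}$ if and only if $x_{k,j,t}+x_{k,i,t}+x_{k+1,j,t} \geq 1$ and $y_{t,i,j} = x_{k,j,t}+x_{k,i,t}+x_{k+1,j,t}-1$, the remaining coordinates of $y$ being arbitrary subject to $y \geq \psi(T)$. Here one uses that team $t$ plays exactly once in slot $k$ and that the pair $(j,t)$ is played only once, so the sum lies in $\{1,2\}$ and equals $2$ only when $x_{k,i,t}=x_{k+1,j,t}=1$ (team $t$ then travels $i \to j$ across slots $k,k+1$ and $\psi(T)_{t,i,j}=1$), while the sum equals $1$ forces $\psi(T)_{t,i,j}=0$. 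Now let $\transpose{a}x+\transpose{b}y \geq \gamma$ define a facet $F \supseteq F^{\star}$, normalized with respect to $\bar{k}$; the goal is to show that $(a,b,\gamma)$ is a positive multiple of the coefficient vector of the inequality above.

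Step~1 eliminates most of $b$. For each $(t',i',j') \in V \times A$ with $(t',i',j') \neq (t,i,j)$ I would invoke a claim (to be placed in \cref{sec_tournaments}) providing a tournament $T$ that lies on $F^{\star}$ (i.e.\ satisfies the play-variable condition above and $\psi(T)_{t,i,j}=x_{k,j,t}+x_{k,i,t}+x_{k+1,j,t}-1$) and has $\psi(T)_{t',i',j'}=0$. Then $(\chi(T),\psi(T))$ and the point obtained from it by raising $y_{t',i',j'}$ to $1$ both lie in $F^{\star}$, which yields $b_{t',i',j'}=0$; hence $b$ vanishes except possibly at $(t,i,j)$, where we write $b_{t,i,j}=:\beta$.

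Step~2 determines $a$. Home-away swaps (\cref{thm_home_away_swap}) and partial slot swaps (\cref{thm_partial_slot_swap}) whose participating teams all differ from $t$ leave the slot-$k$ and slot-$(k+1)$ matches of team $t$ untouched, hence preserve the condition defining $F^{\star}$ as well as the value $\psi(\cdot)_{t,i,j}$; invoking suitable tournaments on $F^{\star}$ admitting these swaps and using $a_m=0$ for $m \in \columnBasis_{\bar{k}}$, I obtain the relations $a_{k_2,i',j'}=a_{k_2,j',i'}$ and $a_{k_2,p,q}+a_{k_2,r,s}=a_{k_2,p,s}+a_{k_2,r,q}$ for all $k_2 \in S \setminus \{\bar{k}\}$ and all teams distinct from $t$. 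Running the bookkeeping of the proof of \cref{thm_dim} inside $V \setminus \{t\}$ (teams $1,2,3$ being available there) then forces $a_{k_2,p,q}=0$ whenever $t \notin \{p,q\}$. For the matches involving $t$ I would use further swaps that do move team $t$ but are arranged so that $t$'s venue in slots $k$ and $k+1$ is never changed to $i$ or $j$ — e.g.\ moving $t$'s away match against some team $q \notin \{i,j\}$ between slots, or against team $1$ within a slot — so that $F^{\star}$ is still preserved; this kills $a_{k_2,q,t}$ and $a_{k_2,t,q}$ for all $q \notin \{i,j\}$ on every slot, as well as $a_{k_2,i,t},a_{k_2,t,i},a_{k_2,j,t},a_{k_2,t,j}$ on every slot $k_2 \notin \{k,k+1\}$ and also $a_{k+1,i,t},a_{k+1,t,i}$. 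What remains is that $a$ is supported on the six matches $(k,j,t),(k,t,j),(k,i,t),(k,t,i),(k+1,j,t),(k+1,t,j)$. Plugging into the equation the finitely many play-variable patterns these six matches can take on $F^{\star}$ — realized by concrete tournaments, some with the sum equal to $1$ and some equal to $2$ (so that $y_{t,i,j}$ jumps and $\beta$ enters) — produces a small linear system in $a_{k,j,t},a_{k,t,j},a_{k,i,t},a_{k,t,i},a_{k+1,j,t},a_{k+1,t,j},\beta,\gamma$ whose solution space I expect to be spanned by $a_{k,j,t}=a_{k,i,t}=a_{k+1,j,t}=-1$, $a_{k,t,j}=a_{k,t,i}=a_{k+1,t,j}=0$, $\beta=1$, $\gamma=-1$, i.e.\ exactly the inequality. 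Finally $F^{\star} \subsetneq \Ptt(n)$ — in any tournament where $t$ plays neither $i$ nor $j$ in slots $k,k+1$ the left-hand side equals $-1<0\le y_{t,i,j}$ — so, $F$ being a facet, the scalar is positive, which completes the proof.

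The main obstacle is the battery of tournament constructions underlying every step: each needs tournaments lying on $F^{\star}$ — a condition that couples three play variables across the two consecutive slots $k$ and $k+1$, hence is far less flexible than "omit one match" as in \cref{thm_nonneg} — and that in addition admit a prescribed home-away or partial slot swap, or avoid a prescribed travel arc. Because the $1$-factorization-based constructions are rigid, these must be designed carefully, and the cases in which the swap indices lie close to $k,k+1$ or involve the distinguished teams $t,i,j$ require bespoke arguments (this is also where the precise choice $\bar{k}\notin\{k-1,k,k+1\}$ is exploited). A secondary, purely technical, difficulty is checking that the eight surviving coefficients combine into exactly the claimed inequality with no spurious lifting terms left over.
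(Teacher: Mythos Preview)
Your plan is essentially the paper's: normalize via \cref{thm_dim_basis}, kill the $b$-entries by toggling free $y$-coordinates on face tournaments, kill the $a$-entries via home-away and partial slot swaps that preserve the face, and finally pin down the few surviving coefficients. The paper organizes Step~2 a bit differently---it does not split into ``swaps avoiding $t$'' versus ``swaps involving $t$'' but states unified swap claims covering both, thereby arriving at only three residual entries instead of your six---and it normalizes with respect to slot~$1$ after first using \cref{thm_symmetry} to force $k^\star \geq n$ (and handles $n=4$ computationally), but the overall architecture is the same.
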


\begin{proof}
  We only prove the statement for inequalities~\eqref{eq_travel_away_away_lifted_first} since the proof for~\eqref{eq_travel_away_away_lifted_second} is similar.
  Moreover, we assume $n \geq 6$ since we verified the statement for $n = 4$ computationally.
  For this, we used the software package IPO~\cite{IPO}, which can exactly compute dimensions of polyhedra that are defined implicitly via an optimization oracle, in this case an MIP solver (see Chapter~2 in~\cite{Walter16} for the algorithmic background).

  Consider the inequality $x_{k^\star,j^\star,t^\star} + x_{k^\star,i^\star,t^\star} + x_{k^\star+1,j^\star,t^\star} - y_{t^\star,i^\star,j^\star} \leq 1$ for some slot $k^\star \in S \setminus \{2n-2\}$, and distinct teams $i^\star,j^\star,t^\star \in V$.
  By \cref{thm_symmetry}, we can assume $k^\star \geq n$, $i^\star = 4$, $j^\star = 5$ and $t^\star = 6$.
  The inequality is valid for $\Ptt(n)$ since the only possibility of scheduling more than one of the three matches $(k^\star,j^\star,t^\star)$, $(k^\star,i^\star,t^\star)$ and $(k^\star+1,j^\star,t^\star)$ consists of the latter two which implies that team $t^\star$ travels from venue $i^\star$ to venue $j^\star$.
  The following claim is used several times throughout the proof.
  \begin{restatable}{claim}{thmTravelAwayAwayLiftedFace}
    \label{thm_travel_away_away_lifted_face}
    Let $T$ be a tournament that contains
    \begin{enumerate}[label={(\alph*)}]
    \item
      \label{thm_travel_away_away_lifted_face_yes}
      match $(k^\star,i^\star,t^\star)$ and in which team $t^\star$ plays away in slot $k^\star+1$, or
    \item
      \label{thm_travel_away_away_lifted_face_no}
      one of the matches $(k^\star,j^\star,t^\star)$, $(k^\star,i^\star,t^\star)$ or $(k^\star+1,j^\star,t^\star)$, and in which team~$t^\star$ never travels from venue~$i^\star$ to venue~$j^\star$.
    \end{enumerate}
    Then $(\chi(T),\psi(T))$ satisfies~\eqref{eq_travel_away_away_lifted_first} with equality.
  \end{restatable}

  In order to prove that the inequality is facet-defining, let $\transpose{a}x + \transpose{b}y \leq \gamma$ define any facet $F$ that contains the face induced by this inequality.
  We will prove that it is a multiple of inequality~\eqref{eq_travel_away_away_lifted_first}.
  Without loss of generality, we assume that the equation is normalized with respect to slot $1$, i.e., it satisfies~\eqref{eq_basis_zero}.

  \begin{restatable}{claim}{thmTravelAwayAwayLiftedNoTravel}
    \label{thm_travel_away_away_lifted_no_travel}
    For all $(t,i,j) \in V \times A$ with $(t,i,j) \neq (t^\star,i^\star,j^\star)$ there exists a tournament $T$ in which team $t$ never travels from venue $i$ to venue $j$ and which satisfies condition~\ref{thm_travel_away_away_lifted_face_yes} of \cref{thm_travel_away_away_lifted_face}.
  \end{restatable}

  A tournament $T$ from \cref{thm_travel_away_away_lifted_no_travel} satisfies $\psi(T)_{t,i,j} = 0$.
  Let $y \coloneqq \psi(T)$ and let $y'$ be equal to $y$ except for $y'_{t,i,j} = 1$.
  By  \cref{thm_travel_away_away_lifted_face} we have $(\chi(T),y), (\chi(T),y') \in F$.
  In this case, $\transpose{a}\chi(T) + \transpose{b}y = \gamma = \transpose{a}\chi(T) + \transpose{b}y'$ simplifies to
  \begin{equation}
    b_{t,i,j} = 0 \text{ for all } (t,i,j) \in V \times A \text{ with } (t,i,j) \neq (t^\star,i^\star,j^\star).
    \tag{\S\ref*{thm_travel_away_away_lifted_no_travel}}
    \label{proof_travel_away_away_lifted_no_travel}
  \end{equation}

  \begin{restatable}{claim}{thmTravelAwayAwayLiftedHomeAway}
    \label{thm_travel_away_away_lifted_home_away}
    For each $(k,i,j) \in \allMatches \setminus \{ (k^\star,i^\star,t^\star)$, $(k^\star,t^\star,i^\star)$, $(k^\star,j^\star,t^\star)$, $(k^\star,t^\star,j^\star)$, $(k^\star+1,j^\star,t^\star)$, $(k^\star+1,t^\star,j^\star) \}$ with $k \geq 2$ there exist tournaments $T$ and $T'$ satisfying~\eqrefHomeAwaySwapDefault{} and condition~\ref{thm_travel_away_away_lifted_face_no} of \cref{thm_travel_away_away_lifted_face}.
  \end{restatable}

  The tournaments $T$ and $T'$ from \cref{thm_travel_away_away_lifted_home_away} satisfy $(\chi(T),\psi(T)), (\chi(T'),\psi(T')) \in F$ by \cref{thm_travel_away_away_lifted_face}.
  Using the fact that the equation is normalized with respect to slot $1$ (i.e., it satisfies~\eqref{eq_basis_zero}) and~\eqref{thm_travel_away_away_lifted_no_travel}, $\transpose{a}\chi(T) + \transpose{b}\psi(T) = \gamma = \transpose{a}\chi(T') + \transpose{b}\psi(T')$ simplifies to
  \begin{multline}
    a_{k,i,j} = a_{k,j,i} \text{ for each } (k,i,j) \in \allMatches \setminus \{ (k^\star,i^\star,t^\star), (k^\star,t^\star,i^\star), \\ (k^\star,j^\star,t^\star), (k^\star,t^\star,j^\star), (k^\star+1,j^\star,t^\star), (k^\star+1,t^\star,j^\star) \}.
    \tag{\S\ref*{thm_travel_away_away_lifted_home_away}}
    \label{proof_travel_away_away_lifted_home_away}
  \end{multline}
  
  \begin{restatable}{claim}{thmTravelAwayAwayLiftedPartialSlot}
    \label{thm_travel_away_away_lifted_partial_slot}
    Let $k \in S \setminus \{1\}$, let $i,j,i',j' \in V$ be distinct and let $P \coloneqq \{ (i,j), (i',j'), (i,j'), (i',j) \}$. If
    \begin{enumerate}[label={(\roman*)}]
    \item
      \label{thm_travel_away_away_lifted_partial_slot_out_out}
      $(i^\star,t^\star) \notin P$ and $(j^\star,t^\star) \notin P$, or
    \item
      \label{thm_travel_away_away_lifted_partial_slot_out_in}
      $(i^\star,t^\star) \notin P$, $(j^\star,t^\star) \in P$ and $k \notin \{k^\star, k^\star+1\}$, or
    \item
      \label{thm_travel_away_away_lifted_partial_slot_in_out}
      $(i^\star,t^\star) \in P$, $(j^\star,t^\star) \notin P$ and $k \neq k^\star$, or
    \item
      \label{thm_travel_away_away_lifted_partial_slot_in_in}
      $(i^\star,t^\star) \in P$, $(j^\star,t^\star) \in P$ and $k = k^\star$
    \end{enumerate}
    holds, then there exist tournaments $T$ and $T'$ satisfying~\eqrefPartialSlotSwapDefault{} and condition~\ref{thm_travel_away_away_lifted_face_no} of \cref{thm_travel_away_away_lifted_face}.
  \end{restatable}
  
  The tournaments $T$ and $T'$ from \cref{thm_travel_away_away_lifted_partial_slot} satisfy $(\chi(T),\psi(T)), (\chi(T'),\psi(T')) \in F$ by \cref{thm_travel_away_away_lifted_face}.
  Using the fact that the equation is normalized with respect to slot $1$ (i.e., it satisfies~\eqref{eq_basis_zero}) and~\eqref{thm_travel_away_away_lifted_no_travel}, $\transpose{a}\chi(T) + \transpose{b}\psi(T) = \gamma = \transpose{a}\chi(T') + \transpose{b}\psi(T')$ simplifies to
  \begin{equation}
    a_{k,i,j} + a_{k,i',j'} = a_{k,i,j'} + a_{k,i',j} \text{ for all } (k,i,j,i',j') \text{ satisfying the conditions in \cref{thm_travel_away_away_lifted_partial_slot}} .
    \tag{\S\ref*{thm_travel_away_away_lifted_partial_slot}a}
    \label{proof_travel_away_away_lifted_partial_slot_a}
  \end{equation}

  Consider a slot $k \in S \setminus \{1\}$.
  For each $\ell \in \{4,5,\dotsc,n\}$, \eqref{proof_travel_away_away_lifted_partial_slot_a} for $(i,j,i',j') = (1,\ell,2,3)$ is applicable since condition~\ref{thm_travel_away_away_lifted_partial_slot_out_out} of \cref{thm_travel_away_away_lifted_partial_slot} is satisfied due to $\{i^\star,j^\star\} \cap \{1,2\} = \varnothing$.
  This implies $a_{k,1,\ell} + a_{k,2,3} = a_{k,1,3} + a_{k,2,\ell}$ which, together with the fact that the equation is normalized with respect to slot $1$ (i.e., it satisfies~\eqref{eq_basis_zero}), yields $a_{k,2,\ell} = 0$.
  Moreover, for each $\ell \in \{3,4,\dotsc,n\}$, \eqref{proof_travel_away_away_lifted_home_away} for $(i,j) = (\ell,2)$ implies $a_{k,\ell,2} = a_{k,2,\ell} = 0$.

  For distinct $\ell, \ell' \in \{3,4,\dotsc,n\}$ with $(k,\ell,\ell') \notin \{ (k^\star, t^\star, i^\star), (k^\star, t^\star, j^\star), (k^\star+1, t^\star, j^\star) \}$, \eqref{proof_travel_away_away_lifted_partial_slot_a} for $(i,j,i',j') = (1,\ell',\ell,2)$ is applicable, which implies $a_{k,1,\ell'} + a_{k,\ell,2} = a_{k,1,2} + a_{k,\ell,\ell'}$.
  Together with the fact that the equation is normalized with respect to slot $1$ (i.e., it satisfies~\eqref{eq_basis_zero}) this shows
  \begin{equation}
    a_{k,i,j} = 0 \text{ for all } (k,i,j) \in \allMatches \setminus \{ (k^\star,i^\star,t^\star), (k^\star,j^\star,t^\star), (k^\star+1,j^\star,t^\star) \}.
    \tag{\S\ref*{thm_travel_away_away_lifted_partial_slot}b}
    \label{proof_travel_away_away_lifted_partial_slot_b}
  \end{equation}
  Since for each of the matches $(k^\star,i^\star,t^\star)$, $(k^\star,j^\star,t^\star)$, $(k^\star+1,j^\star,t^\star)$ there exists a tournament containing exactly this match and in which team $t^\star$ never travels from venue $i^\star$ to venue $j^\star$, and since there exists a tournament satisfying condition~\ref{thm_travel_away_away_lifted_face_yes} of \cref{thm_travel_away_away_lifted_face}, we obtain
  \begin{equation*}
    \gamma = a_{k^\star,i^\star,t^\star} = a_{k^\star,j^\star,t^\star} = a_{k^\star+1,j^\star,t^\star} = \gamma = a_{k^\star,i^\star,t^\star} + a_{k^\star,j^\star,t^\star} - b_{t^\star,i^\star,j^\star}.
  \end{equation*}
  This shows that $\transpose{a}x + \transpose{b}y \leq \gamma$ is a positive multiple of inequality~\eqref{eq_travel_away_away_lifted_first}, which concludes the proof.
\end{proof}

Similar to \eqref{model_basic_travel_away_away}, inequalities~\eqref{model_basic_travel_home_away} are not facet-defining.
A lifted inequality reads
\begin{equation}
  x_{1,j,t} + x_{k,j,t} + \sum_{i \in V \setminus \{t\}} x_{k,t,i} + x_{k+1,j,t} - 1 \leq y_{t,t,j} \quad \forall k \in S  \setminus \{2n-2\}, ~\forall (t,j) \in A
  \label{eq_travel_home_away_lifted}
\end{equation}
Indeed, in order to obtain~\eqref{model_basic_travel_home_away} one only needs to combine~\eqref{eq_travel_home_away_lifted} with nonnegativity constraints for $x$.
The lifted inequalities turn out to be facet-defining.

\begin{theorem}
  \label{thm_travel_home_away}
  Inequalities~\eqref{eq_travel_home_away_lifted} are facet-defining for $\Ptt(n)$ for all $k \in S \setminus \{2n-2\}$ and $(t,j) \in A$.
\end{theorem}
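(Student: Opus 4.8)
The argument would follow the same pattern as the proof of \cref{thm_travel_away_away_lifted}, so I only sketch it and indicate where it differs. As there, I would assume $n$ large enough (say $n \geq 6$) and check the remaining small cases with IPO; and since \eqref{eq_travel_home_away_lifted} degenerates into an inequality with a coefficient $2$ on $x_{1,j,t}$ when $k = 1$, I would treat $k^\star = 1$ by the analogous, cosmetically adjusted variant. By \cref{thm_symmetry} we may permute teams, so fix $t^\star \coloneqq 3$, $j^\star \coloneqq 4$ and let $k^\star \in S \setminus \{2n-2\}$ be arbitrary. A genuine difference from \cref{thm_travel_away_away_lifted} is that \eqref{eq_travel_home_away_lifted} involves slot $1$, while mirroring slots sends it to a \emph{different} family of inequalities; thus slot $1$ is available neither for normalization nor for symmetry reduction. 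Instead I would fix an auxiliary slot $\bar{k} \in S \setminus \{1, k^\star, k^\star+1\}$ — which exists since $|S| = 2n - 2 \geq 6$ — and normalize all equations with respect to $\bar{k}$, applying \cref{thm_dim_basis} to $\bar{k}$ in place of slot $1$; with $t^\star = 3$ and $j^\star = 4$ the support of \eqref{eq_travel_home_away_lifted} avoids $\columnBasis_{\bar{k}}$, so the inequality is already in normalized form. Validity, together with a face-characterization claim in the spirit of \cref{thm_travel_away_away_lifted_face}, follows from a short case distinction on team $t^\star$'s match in slot $k^\star$ and on which of $(1,j^\star,t^\star)$ and $(k^\star+1,j^\star,t^\star)$ it plays, using that $t^\star$ is away at $j^\star$ exactly once and that playing $(1,j^\star,t^\star)$, or playing home in the slot immediately before an away match at $j^\star$, forces the arc $t^\star \to j^\star$. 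Membership of $(\chi(T),\psi(T))$ in the induced face $F_0$ turns out to be governed by team $t^\star$'s matches in slots $1$, $k^\star - 1$, $k^\star$, $k^\star + 1$ and by the slot in which $t^\star$ is away at $j^\star$.

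Next, let $\transpose{a}x + \transpose{b}y \leq \gamma$ define a facet $F \supseteq F_0$, normalized with respect to $\bar{k}$. I would show that $(a,b,\gamma)$ is a positive multiple of \eqref{eq_travel_home_away_lifted}, in four stages, each relying on a tournament-construction claim — to be placed in \cref{sec_tournaments} — asserting the existence of the required tournaments \emph{on $F_0$}. First, toggling travel variables as in \cref{thm_dim} gives $b_{t,i,j} = 0$ for all $(t,i,j) \neq (t^\star,t^\star,j^\star)$. Second, home-away swaps (\cref{thm_home_away_swap}) with one match in slot $\bar{k}$ yield $a_{k,i,j} = a_{k,j,i}$ for every match other than $(1,j^\star,t^\star)$, $(k^\star+1,j^\star,t^\star)$, the matches of slot $k^\star$ involving $t^\star$, and the home-away reverses of these. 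Third, partial slot swaps (\cref{thm_partial_slot_swap}) with one match in slot $\bar{k}$, chained together with the preceding relations and with the normalization, force $a$ to vanish outside slots $1$, $k^\star$, $k^\star+1$, to vanish in slots $1$ and $k^\star+1$ except on the two matches between $t^\star$ and $j^\star$, and to vanish on $(k^\star,i,t^\star)$ for $i \neq j^\star$; a dedicated family of partial slot swaps that only reassign $t^\star$'s opponent in slot $k^\star$ (and so stay on $F_0$) further shows that all coefficients $a_{k^\star,t^\star,i}$ ($i \neq t^\star$) share one value and all coefficients $a_{k^\star,i,j}$ with $t^\star \notin \{i,j\}$ share another value, say $\mu$. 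Fourth, I would evaluate $\transpose{a}x + \transpose{b}y = \gamma$ on suitable tournaments on $F_0$: comparing two tournaments that agree in team $t^\star$'s behaviour but differ in whether team $1$ is the opponent of $t^\star$ in slot $k^\star$ forces $\mu = 0$, and a short sequence of further evaluations — in the regime $\psi(T)_{t^\star,t^\star,j^\star} = 0$ to fix $\gamma$, $a_{k^\star,j^\star,t^\star}$, $a_{k^\star+1,j^\star,t^\star}$ and to kill $a_{1,t^\star,j^\star}$, $a_{k^\star+1,t^\star,j^\star}$, and in the regime $\psi(T)_{t^\star,t^\star,j^\star} = 1$ to fix $b_{t^\star,t^\star,j^\star}$ and $a_{1,j^\star,t^\star}$ — leaves exactly the coefficient pattern of \eqref{eq_travel_home_away_lifted} with a positive multiplier. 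Hence $F = F_0$, and \eqref{eq_travel_home_away_lifted} is facet-defining.

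As in \cref{thm_travel_away_away_lifted}, the deductions above are a mechanical, if lengthy, chain of linear algebra; the real work — and the step I expect to be the main obstacle — lies in the tournament-construction claims relegated to \cref{sec_tournaments}. For each prescribed local requirement (a match present or absent, a travel arc absent, a specified home-away or partial slot swap available, or a prescribed opponent of $t^\star$ in slot $k^\star$) one must exhibit a tournament that simultaneously lies on $F_0$. Compared with \cref{thm_travel_away_away_lifted}, the face condition here is partly global — ``$t^\star$ never travels $t^\star \to j^\star$'', respectively ``$t^\star$ plays home in slot $k^\star$ and is away in the slot preceding its away match at $j^\star$'' — and it constrains team $t^\star$ across four distinguished slots rather than two; moreover, since slot-mirroring is unavailable, the constructions must cover the full range of $k^\star$, with the usual extra care when $k^\star$ or $k^\star + 1$ lies near the start or the end of the schedule. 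Carrying this out with the canonical factorization together with cyclic shifts, home-away swaps and partial slot swaps — while keeping every tournament on $F_0$ — is where the combinatorial effort is concentrated.
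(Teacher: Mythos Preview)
Your proposal is correct and follows essentially the same approach as the paper: normalize with respect to an auxiliary slot $\bar{k} \in S \setminus \{1,k^\star,k^\star+1\}$, establish a face-characterization claim analogous to \cref{thm_travel_away_away_lifted_face}, then successively use travel-variable toggling, home--away swaps and partial slot swaps (all with one leg in slot $\bar{k}$) to reduce the support of $(a,b)$ to that of~\eqref{eq_travel_home_away_lifted}, and finally evaluate on face tournaments to pin down the remaining coefficients. The only cosmetic differences are your choice of $(t^\star,j^\star)=(3,4)$ versus the paper's $(4,3)$ and your intermediate constant $\mu$ for the coefficients $a_{k^\star,i,j}$ with $t^\star \notin \{i,j\}$, which in fact vanishes immediately from the normalization (since $a_{k^\star,1,j}=0$) rather than requiring a separate tournament comparison.
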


\begin{proof}
  We assume $n \geq 6$ since we verified the statement for $n = 4$ computationally~\cite{IPO}.
  Consider the inequality $x_{1,j^\star,t^\star} + x_{k^\star,j^\star,t^\star} + \sum_{i \in V \setminus \{t^\star\}} x_{k^\star,t^\star,i} + x_{k^\star+1,j^\star,t^\star} - y_{t^\star,t^\star,j^\star} \leq 1$ for some slot $k^\star \in S \setminus \{2n-2\}$ and distinct teams $t^\star, j^\star \in V$.
  By \cref{thm_symmetry}, we can assume $j^\star = 3$ and $t^\star = 4$.
  The inequality is valid for $\Ptt(n)$ since the only possibilities in which $x_{1,j^\star,t^\star} + x_{k^\star,j^\star,t^\star} + \sum_{i \in V \setminus \{t^\star\}} x_{k^\star,t^\star,i} + x_{k^\star+1,j^\star,t^\star}$ exceeds $1$ are for $k^\star = 1$ (since then $(1,j^\star,t^\star)$ and $(k^\star,j^\star,t^\star)$ are identical) or if team $t^\star$ plays at home in slot $k^\star$ and away against team $j^\star$ in slot $1$ or $k^\star+1$.
  In either case, team $t^\star$ travels from its home venue to $j^\star$, forcing $y_{t^\star,t^\star,j^\star} = 1$.

  The following claim is used several times throughout the proof.

  \begin{restatable}{claim}{thmTravelHomeAwayLiftedFace}
    \label{thm_travel_home_away_lifted_face}
    Let $T$ be a tournament with 
    \begin{enumerate}[label={(\alph*)}]
    \item
      \label{thm_travel_home_away_lifted_face_init_first}
      $(1,j^\star,t^\star) \in T$ and $k^\star = 1$ holds, or
    \item
      \label{thm_travel_home_away_lifted_face_init_home}
      $(1,j^\star,t^\star) \in T$ and team $t^\star$ plays at home in slot $k^\star$, or
    \item
      \label{thm_travel_home_away_lifted_face_home_second}
      $(k^\star+1,j^\star,t^\star) \in T$ and team $t^\star$ plays at home in slot $k^\star$, or
    \item
      \label{thm_travel_home_away_lifted_face_away_second}
      $(k^\star+1,j^\star,t^\star) \in T$ and team $t^\star$ plays away in slot $k^\star$, or
    \item
      \label{thm_travel_home_away_lifted_face_away_first}
      $(k^\star,j^\star,t^\star) \in T$, $k^\star \geq 2$ and team $t^\star$ plays away in slot $k^\star-1$, or
    \item
      \label{thm_travel_home_away_lifted_face_home_only}
      team $t^\star$ plays at home in slot $k^\star$ and never travels from its home venue to venue $j^\star$.
    \end{enumerate}
    Then $(\chi(T),\psi(T))$ satisfies~\eqref{eq_travel_home_away_lifted} with equality.
    Moreover, team $t^\star$ travels from its home venue to venue $j^\star$ if and only if one of conditions~\ref{thm_travel_home_away_lifted_face_init_first}--\ref{thm_travel_home_away_lifted_face_home_second} is satisfied.
  \end{restatable}
  
  In order to prove that the inequality is facet-defining, let $\transpose{a}x + \transpose{b}y \leq \gamma$ define any facet $F$ that contains the face induced by this inequality.
  We will prove that it is a multiple of inequality~\eqref{eq_travel_home_away_lifted}.
  Let $\bar{k} \in S \setminus \{ 1, k^\star, k^\star+1 \}$.
  By \cref{thm_dim_basis} we can assume that the equation is normalized with respect to slot $\bar{k}$, i.e., it satisfies
  \begin{equation}
    a_m = 0 \text{ for each } m \in \columnBasis_{\bar{k}}.
    \tag{\S\ref*{thm_travel_home_away_lifted_face}}
    \label{proof_travel_home_away_lifted_basis}
  \end{equation}
  Note that, in contrast to previous proofs, we do not normalize with respect to slot $1$.

  \begin{restatable}{claim}{thmTravelHomeAwayLiftedNoTravel}
    \label{thm_travel_home_away_lifted_no_travel}
    For all $(t,i,j) \in V \times A$ with $(t,i,j) \neq (t^\star,t^\star,j^\star)$ there exists a tournament $T$ satisfying a condition from \cref{thm_travel_home_away_lifted_face}.
  \end{restatable}

  A tournament $T$ from \cref{thm_travel_home_away_lifted_no_travel} satisfies $\psi(T)_{t,i,j} = 0$.
  Let $y \coloneqq \psi(T)$ and let $y'$ be equal to $y$ except for $y'_{t,i,j} = 1$.
  By \cref{thm_travel_home_away_lifted_face} we have $(\chi(T),y), (\chi(T),y') \in F$.
  In this case, $\transpose{a}\chi(T) + \transpose{b}y = \gamma = \transpose{a}\chi(T) + \transpose{b}y'$ simplifies to
  \begin{equation}
    b_{t,i,j} = 0 \text{ for all } (t,i,j) \in V \times A \text{ with } (t,i,j) \neq (t^\star,t^\star,j^\star).
    \tag{\S\ref*{thm_travel_home_away_lifted_no_travel}}
    \label{proof_travel_home_away_lifted_no_travel}
  \end{equation}

  \begin{restatable}{claim}{thmTravelHomeAwayLiftedHomeAway}
    \label{thm_travel_home_away_lifted_home_away}
    For each $(k,i,j) \in \allMatches$ with $k \neq \bar{k}$, $\{i,j\} \neq \{j^\star,t^\star\}$ and for which $k = k^\star$ implies $t^\star \notin \{i,j\}$ there exist tournaments~$T$ and~$T'$ satisfying~\eqrefHomeAwaySwap{$\bar{k}$}{$k$}{$i$}{$j$} such that~$T$ and~$T'$ satisfy the same condition from \cref{thm_travel_home_away_lifted_face}.
  \end{restatable}

  The tournaments $T$ and $T'$ from \cref{thm_travel_home_away_lifted_home_away} satisfy $(\chi(T),\psi(T)), (\chi(T'),\psi(T')) \in F$ due to \cref{thm_travel_home_away_lifted_face}.
  Using \eqref{proof_travel_home_away_lifted_basis} and~\eqref{thm_travel_home_away_lifted_no_travel}, $\transpose{a}\chi(T) + \transpose{b}\psi(T) = \gamma = \transpose{a}\chi(T') + \transpose{b}\psi(T')$ simplifies to
  \begin{equation}
    a_{k,i,j} = a_{k,j,i} \text{ for each } (k,i,j) \in \allMatches \text{ with } \{i,j\} \neq \{j^\star,t^\star\} \text{ for which } k = k^\star \text{ implies } t^\star \notin \{i,j\}.
    \tag{\S\ref*{thm_travel_home_away_lifted_home_away}}
    \label{proof_travel_home_away_lifted_home_away}
  \end{equation}
  
  \begin{restatable}{claim}{thmTravelHomeAwayLiftedPartialSlot}
    \label{thm_travel_home_away_lifted_partial_slot}
    Let $k \in S \setminus \{\bar{k}\}$, let $i,j,i',j' \in V$ be distinct with $(j^\star,t^\star) \notin \{ (i,j), (i',j'), (i,j'), (i',j) \}$ or with $k \notin \{1,k^\star,k^\star+1\}$.
    Then there exist tournaments $T$ and $T'$ satisfying~\eqrefPartialSlotSwap{$\bar{k}$}{$k$}{$i$}{$j$}{$i'$}{$j'$} such that~$T$ and~$T'$ satisfy the same condition from \cref{thm_travel_home_away_lifted_face}.
  \end{restatable}
  
  The tournaments $T$ and $T'$ from \cref{thm_travel_home_away_lifted_partial_slot} satisfy $(\chi(T),\psi(T)), (\chi(T'),\psi(T')) \in F$ due to \cref{thm_travel_home_away_lifted_face}.
  Using \eqref{proof_travel_home_away_lifted_basis} and~\eqref{thm_travel_home_away_lifted_no_travel}, $\transpose{a}\chi(T) + \transpose{b}\psi(T) = \gamma = \transpose{a}\chi(T') + \transpose{b}\psi(T')$ simplifies to
  \begin{multline}
    a_{k,i,j} + a_{k,i',j'} = a_{k,i,j'} + a_{k,i',j} \text{ for all distinct } i,j,i',j' \in V \text{ with } \\ (j^\star,t^\star) \notin \{ (i,j), (i',j'), (i,j'), (i',j) \} \text{ or } k \notin \{ 1, k^\star, k^\star+1 \} .
    \tag{\S\ref*{thm_travel_home_away_lifted_partial_slot}a}
    \label{proof_travel_home_away_lifted_partial_slot_a}
  \end{multline}
  Consider a slot $k \in S \setminus \{\bar{k}\}$.
  For each $\ell \in \{4,5,\dotsc,n\}$, \eqref{proof_travel_home_away_lifted_partial_slot_a} for $(i,j,i',j') = (1,\ell,2,3)$ is applicable since $(j^\star,t^\star) = (3,4)$ is not among the matches $(i,j)$, $(i',j')$, $(i,j')$, $(i',j)$.
  This implies $a_{k,1,\ell} + a_{k,2,3} = a_{k,1,3} + a_{k,2,\ell}$ which together with \eqref{proof_travel_home_away_lifted_basis} yields $a_{k,2,\ell} = 0$.
  Moreover, for each $\ell \in \{3,4,\dotsc,n\}$ with $(k,\ell) \neq (k^\star,t^\star)$, \eqref{proof_travel_home_away_lifted_home_away} for $(i,j) = (\ell,2)$ implies $a_{k,\ell,2} = a_{k,2,\ell} = 0$.

  For distinct $\ell, \ell' \in \{3,4,\dotsc,n\}$ with $(\ell,\ell') \neq (3,4)$ or $k \notin \{1,k^\star.k^\star+1\}$, \eqref{proof_travel_home_away_lifted_partial_slot_a} for $(i,j,i',j') = (1,\ell',\ell,2)$ is applicable, which implies $a_{k,1,\ell'} + a_{k,\ell,2} = a_{k,1,2} + a_{k,\ell,\ell'}$.
  Together with \eqref{proof_travel_home_away_lifted_basis} this shows
  \begin{multline}
    a_{k,i,j} = 0 \text{ for all } (k,i,j) \in \allMatches \text{ with } \\ (k,i) \neq (k^\star,t^\star) \text{ and for which } (i,j) = (j^\star,t^\star) \text{ implies } k \notin \{ 1, k^\star, k^\star+1 \}.
    \tag{\S\ref*{thm_travel_home_away_lifted_partial_slot}b}
    \label{proof_travel_home_away_lifted_partial_slot_b}
  \end{multline}
  Together with~\eqref{proof_travel_home_away_lifted_no_travel}, we obtain that the support of inequality $\transpose{a}x + \transpose{b}y \leq \gamma$ is a subset of the support of inequality~\eqref{eq_travel_home_away_lifted}.
  It remains to prove that the coefficients agree (up to a positive multiple).

  It is easy to see that for each condition of \cref{thm_travel_home_away_lifted_face} there exists a tournament $T$ satisfying it.
  From~\eqref{proof_travel_home_away_lifted_no_travel} and~\eqref{proof_travel_home_away_lifted_partial_slot_b} we obtain the following equations:
  If $k^\star = 1$, then
  \begin{equation*}
    \gamma \overset{\ref{thm_travel_home_away_lifted_face_init_first}}{=} a_{1,j^\star,t^\star} - y_{t^\star,t^\star,j^\star} \overset{\ref{thm_travel_home_away_lifted_face_home_second}}{=} a_{k^\star+1,j^\star,t^\star} + a_{k^\star,t^\star,j} - y_{t^\star,t^\star,j^\star} \overset{\ref{thm_travel_home_away_lifted_face_away_second}}{=} a_{k^\star+1,j^\star,t^\star} \overset{\ref{thm_travel_home_away_lifted_face_home_only}}{=} a_{k^\star,t^\star,j}
  \end{equation*}
  holds, which implies $a_{1,j^\star,t^\star} = 2$ and $a_{1,t^\star,j} = a_{2,j^\star,t^\star} = b_{t^\star,t^\star,j^\star} = \gamma = 1$ for each $j \in V \setminus \{t^\star\}$.
  Otherwise, i.e., if $k^\star \geq 2$, then 
  \begin{equation*}
    \gamma \overset{\ref{thm_travel_home_away_lifted_face_init_home}}{=} a_{1,j^\star,t^\star} + a_{k^\star,t^\star,j} - y_{t^\star,t^\star,j^\star} \overset{\ref{thm_travel_home_away_lifted_face_home_second}}{=} a_{k^\star+1,j^\star,t^\star} + a_{k^\star,t^\star,j} - y_{t^\star,t^\star,j^\star} \overset{\ref{thm_travel_home_away_lifted_face_away_second}}{=} a_{k^\star+1,j^\star,t^\star} \overset{\ref{thm_travel_home_away_lifted_face_home_second}}{=} a_{k^\star,j^\star,t^\star} \overset{\ref{thm_travel_home_away_lifted_face_home_only}}{=} a_{k^\star,t^\star,j}       
  \end{equation*}
  holds, which implies $a_{1,j^\star,t^\star} = a_{k^\star,j^\star,t^\star} = a_{k^\star,t^\star,j} = a_{k^\star+1,j^\star,t^\star} = b_{t^\star,t^\star,j^\star} = \gamma = 1$ for each $j \in V \setminus \{t^\star\}$.
  This shows that $\transpose{a}x + \transpose{b}y \leq \gamma$ is a positive multiple of inequality~\eqref{eq_travel_home_away_lifted}, which concludes the proof.
\end{proof}

The symmetric lifted version of inequality~\eqref{model_basic_travel_away_home} reads
\begin{equation}
  x_{2n-2,i,t} + x_{k,i,t} + \sum_{j \in V \setminus \{t\}} x_{k,t,j} + x_{k-1,i,t} - 1 \leq y_{t,i,t} \quad \forall k \in S  \setminus \{1\}, ~\forall (i,t) \in A
  \label{eq_travel_away_home_lifted}
\end{equation}
Using \cref{thm_symmetry}, we obtain the following corollary of \cref{thm_travel_home_away}.

\begin{corollary}
  Inequalities~\eqref{eq_travel_away_home_lifted} are facet-defining for $\Ptt(n)$ for all $k \in S \setminus \{2n-2\}$ and $(i,t) \in A$.
\end{corollary}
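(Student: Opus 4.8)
The plan is to obtain this corollary for free from \cref{thm_travel_home_away} by applying the slot-mirroring symmetry recorded in \cref{thm_symmetry}, as already announced in the sentence preceding the statement. First I would spell out the action of this symmetry on the variables: mirroring slots is the affine map that sends the slot index $\ell$ to $2n-1-\ell$ while keeping the home team of every match fixed and reversing every traveled arc; on coordinates it therefore sends $x_{\ell,i,j}$ to $x_{2n-1-\ell,i,j}$ and $y_{t,i,j}$ to $y_{t,j,i}$. By \cref{thm_symmetry} this map is an affine automorphism of $\Ptt(n)$, hence it permutes the facets of $\Ptt(n)$ and, more generally, preserves the dimension of every face.

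Next I would apply this map to the facet-defining inequality~\eqref{eq_travel_home_away_lifted} attached to a slot $k^\star \in S \setminus \{2n-2\}$ and a pair $(t^\star,j^\star) \in A$. Writing $k \coloneqq 2n-1-k^\star$, the map sends slot $1$ to $2n-2$, slot $k^\star$ to $k$, slot $k^\star+1$ to $k-1$, and $y_{t^\star,t^\star,j^\star}$ to $y_{t^\star,j^\star,t^\star}$, while leaving all home teams unchanged. A direct substitution then turns~\eqref{eq_travel_home_away_lifted} into
\[
  x_{2n-2,j^\star,t^\star} + x_{k,j^\star,t^\star} + \sum_{i \in V \setminus \{t^\star\}} x_{k,t^\star,i} + x_{k-1,j^\star,t^\star} - 1 \leq y_{t^\star,j^\star,t^\star},
\]
which is exactly inequality~\eqref{eq_travel_away_home_lifted} for slot $k$ and pair $(j^\star,t^\star)$. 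Since $k^\star \mapsto 2n-1-k^\star$ is a bijection between $S \setminus \{2n-2\}$ and $S \setminus \{1\}$ and $(t^\star,j^\star) \mapsto (j^\star,t^\star)$ is a bijection of $A$, every inequality~\eqref{eq_travel_away_home_lifted} is the image of a facet-defining inequality~\eqref{eq_travel_home_away_lifted} under an affine automorphism of $\Ptt(n)$, and is therefore facet-defining.

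I do not expect any genuine obstacle: this is a one-line symmetry argument, and the only thing requiring care is the bookkeeping of the slot reflection — in particular checking that the ``$k^\star+1$'' term of~\eqref{eq_travel_home_away_lifted} is carried to the ``$k-1$'' term of~\eqref{eq_travel_away_home_lifted}, that the $y$-arc is reversed in the correct direction, and that it is the dimension-preservation of affine automorphisms (not merely preservation of validity) that is being invoked. I would also silently correct the range in the statement from ``$k \in S \setminus \{2n-2\}$'' to ``$k \in S \setminus \{1\}$'', which is the range on which~\eqref{eq_travel_away_home_lifted} is defined.
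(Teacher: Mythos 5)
Your proposal is correct and is exactly the approach the paper intends: the paper's proof is the one-sentence remark that the corollary follows from \cref{thm_travel_home_away} by applying the slot-mirroring symmetry of \cref{thm_symmetry}, and you have simply made the affine automorphism and the slot/arc bookkeeping explicit. You are also right that the slot range in the corollary statement should read $k \in S \setminus \{1\}$ rather than $k \in S \setminus \{2n-2\}$; this is an evident copy-paste slip from \cref{thm_travel_home_away}, since inequality~\eqref{eq_travel_away_home_lifted} is only defined for $k \geq 2$ and the mirror $k^\star \mapsto 2n-1-k^\star$ carries $S \setminus \{2n-2\}$ onto $S \setminus \{1\}$.
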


\begin{theorem}
  \label{thm_travel_first}
  Inequalities~\eqref{model_basic_travel_first}, $x_{1,j,t} \leq y_{t,t,j}$, are facet-defining for $\Ptt(n)$ for all $(t,j) \in A$.
\end{theorem}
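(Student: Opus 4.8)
The plan is to follow the template of the earlier facetness proofs, in particular \cref{thm_travel_home_away}, but the argument is considerably shorter because inequality~\eqref{model_basic_travel_first} has only the two variables $x_{1,j^\star,t^\star}$ and $y_{t^\star,t^\star,j^\star}$ in its support, so its induced face is $\{(\chi(T),\psi(T)) : x_{1,j^\star,t^\star} = y_{t^\star,t^\star,j^\star}\}$. First I would use \cref{thm_symmetry} to fix $j^\star = 3$ and $t^\star = 4$; validity is immediate since $x_{1,j^\star,t^\star} = 1$ means team $t^\star$ opens away at venue $j^\star$ and, residing at its home venue before slot~$1$, then travels from $t^\star$ to $j^\star$, forcing $\psi(T)_{t^\star,t^\star,j^\star} = 1$. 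Let $\transpose{a}x + \transpose{b}y \leq \gamma$ define any facet $F$ containing the induced face. Since slot~$1$ appears in the support we must not normalize with respect to slot~$1$; instead I pick some $\bar{k} \in S \setminus \{1\}$ and normalize with respect to $\bar{k}$ (legitimate by \cref{thm_dim_basis}), so $a_m = 0$ for all $m \in \columnBasis_{\bar{k}}$, noting that $(1,j^\star,t^\star) \notin \columnBasis_{\bar{k}}$ because $\bar{k} \neq 1$, $j^\star \neq 1$ and $(j^\star,t^\star) \neq (2,3)$. As in \cref{thm_travel_home_away} I would allow myself to assume $n \geq 6$ and to check the base case $n=4$ computationally with IPO~\cite{IPO} should it turn out to be delicate.

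The key auxiliary statement is a face-membership claim: $(\chi(T),\psi(T)) \in F$ whenever $T$ is a tournament such that either (a)~$(1,j^\star,t^\star) \in T$, or (b)~$(1,j^\star,t^\star) \notin T$ and team $t^\star$ never travels from its home venue to venue $j^\star$ (equivalently, $t^\star$ never plays at home in some slot $k$ and away against $j^\star$ in slot $k+1$). Indeed, in case~(a) both $x_{1,j^\star,t^\star}$ and $\psi(T)_{t^\star,t^\star,j^\star}$ equal $1$, and in case~(b) both equal $0$. With this in hand the proof runs along the now-familiar lines through three constructive claims (deferred to \cref{sec_tournaments}): (i)~for each $(t,i,j) \in V \times A$ with $(t,i,j) \neq (t^\star,t^\star,j^\star)$ there is a face tournament in which team $t$ never travels from $i$ to $j$, so flipping $y_{t,i,j}$ keeps the point in $F$ and yields $b_{t,i,j} = 0$; (ii)~for $(k,i,j) \in \allMatches$ with $k \neq \bar{k}$ outside a small exceptional set there are face tournaments $T,T'$ related by \eqrefHomeAwaySwap{$\bar{k}$}{$k$}{$i$}{$j$}, giving $a_{k,i,j} = a_{k,j,i}$; and (iii)~for $(k,i,j,i',j')$ with $k \neq \bar{k}$ outside a small exceptional set there are face tournaments $T,T'$ related by \eqrefPartialSlotSwap{$\bar{k}$}{$k$}{$i$}{$j$}{$i'$}{$j'$}, giving $a_{k,i,j} + a_{k,i',j'} = a_{k,i,j'} + a_{k,i',j}$. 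The exceptional sets in~(ii) and~(iii) are exactly the indices that interact with $x_{1,j^\star,t^\star}$ or with team $t^\star$'s home-to-$j^\star$ travel, i.e., those involving slot~$1$ together with teams $j^\star$ and $t^\star$.

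Combining~(ii) and~(iii) with the normalization $a_m = 0$ for $m \in \columnBasis_{\bar{k}}$, the same cascading elimination as in \cref{thm_dim} and \cref{thm_nonneg} (zeroing $a_{k,2,\ell}$, then $a_{k,\ell,2}$, then $a_{k,\ell,\ell'}$ for $\ell,\ell' \in \{3,\dotsc,n\}$, slot by slot) forces $a_{k,i,j} = 0$ for every match other than $(1,j^\star,t^\star)$; in particular $a_{1,t^\star,j^\star} = 0$ because $x_{1,t^\star,j^\star}$ is not constant on the face. Hence the valid equation of $F$ reduces to $a_{1,j^\star,t^\star}\,x_{1,j^\star,t^\star} + b_{t^\star,t^\star,j^\star}\,y_{t^\star,t^\star,j^\star} = \gamma$. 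Evaluating at a face tournament of type~(b) gives $\gamma = 0$; evaluating at a face tournament of type~(a) gives $a_{1,j^\star,t^\star} + b_{t^\star,t^\star,j^\star} = 0$; and since some tournament plays $(1,j^\star,t^\star)$, validity forces $a_{1,j^\star,t^\star} > 0$. Thus $\transpose{a}x + \transpose{b}y \leq \gamma$ is a positive multiple of $x_{1,j^\star,t^\star} - y_{t^\star,t^\star,j^\star} \leq 0$, proving the inequality facet-defining.

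The main obstacle is, as in the earlier theorems, purely constructive: one must exhibit, for each relevant index, tournaments carrying the prescribed home-away or partial-slot-swap structure while simultaneously keeping team $t^\star$ either away at $j^\star$ in slot~$1$ or free of any home-to-$j^\star$ travel. The delicate cases are those in which $t^\star$ is one of the teams touched by the swap — and especially when $j^\star$ is the other one — since such a swap can create or destroy exactly that travel; the starting tournament (typically a team-permuted canonical factorization adjusted by a few home-away swaps and a cyclic shift) must be chosen so that this does not happen. Delineating the precise exceptional sets and producing these tournaments is the bulk of the work and belongs in \cref{sec_tournaments}; the linear-algebra bookkeeping above is routine once the constructions are in place.
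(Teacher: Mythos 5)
Your proposal is correct and follows essentially the same route as the paper: the same face-membership claim (conditions~(a) and~(b)), the same three constructive claims (no-travel, home-away swap, partial slot swap relative to a reference slot $\bar{k} \neq 1$), the same cascading elimination of the $a$-coefficients slot by slot, and the same concluding coefficient check yielding $\gamma = 0$, $a_{1,j^\star,t^\star} = -b_{t^\star,t^\star,j^\star} > 0$. Your remark that one must \emph{not} normalize with respect to slot~$1$ (since $(1,j^\star,t^\star) \in \columnBasis_1$ would force $a_{1,j^\star,t^\star} = 0$) is in fact a genuine correction: the paper's text nominally invokes~\eqref{eq_basis_zero} for slot~$1$ but then carries out all swaps and the elimination with $\bar{k} = n$, so the normalization is clearly intended to be with respect to slot~$n$, exactly as you argue.
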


\begin{proof}
  We assume $n \geq 6$ since we verified the statement for $n = 4$ computationally~\cite{IPO}.
  Consider the inequality $x_{1,j^\star,t^\star} \leq y_{t^\star,t^\star,j^\star}$ for distinct teams $t^\star, j^\star \in V$.
  By \cref{thm_symmetry}, we can assume $j^\star = 3$ and $t^\star = 4$.
  The inequality is valid for $\Ptt(n)$ since the team $t^\star$ has to travel from its home venue to venue $j^\star$ if it plays there in slot $1$.

  The following claim is used several times throughout the proof.
  \begin{restatable}{claim}{thmTravelFirstFace}
    \label{thm_travel_first_face}
    Let $T$ be a tournament
    \begin{enumerate}[label={(\alph*)}]
    \item
      \label{thm_travel_first_face_no}
      in which team $t^\star$ never travels from its home venue to venue $j^\star$, or
    \item
      \label{thm_travel_first_face_yes}
      with $(1,j^\star,t^\star) \in T$.
    \end{enumerate}
    Then $(\chi(T),\psi(T))$ satisfies~\eqref{model_basic_travel_first} with equality.
    Moreover, team $t^\star$ travels from its home venue to venue $j^\star$ if and only if condition~\ref{thm_travel_first_face_yes} is satisfied.
  \end{restatable}
  
  In order to prove that the inequality is facet-defining, let $\transpose{a}x + \transpose{b}y \leq \gamma$ define any facet $F$ that contains the face induced by this inequality.
  We will prove that it is a multiple of inequality~\eqref{model_basic_travel_first}.
  
  By \cref{thm_dim_basis} we can assume that the equation is normalized with respect to slot $1$, i.e., it satisfies~\eqref{eq_basis_zero}.

  \begin{restatable}{claim}{thmTravelFirstNoTravel}
    \label{thm_travel_first_no_travel}
    For all $(t,i,j) \in V \times A$ with $(t,i,j) \neq (t^\star,t^\star,j^\star)$ there exists a tournament $T$ satisfying a condition of \cref{thm_travel_first_face}.
  \end{restatable}

  A tournament $T$ from \cref{thm_travel_first_no_travel} satisfies $\psi(T)_{t,i,j} = 0$.
  Let $y \coloneqq \psi(T)$ and let $y'$ be equal to $y$ except for $y'_{t,i,j} = 1$.
  By \cref{thm_travel_first_face} we have $(\chi(T),y), (\chi(T),y') \in F$.
  In this case, $\transpose{a}\chi(T) + \transpose{b}y = \gamma = \transpose{a}\chi(T) + \transpose{b}y'$ simplifies to
  \begin{equation}
    b_{t,i,j} = 0 \text{ for all } (t,i,j) \in V \times A \text{ with } (t,i,j) \neq (t^\star,t^\star,j^\star).
    \tag{\S\ref*{thm_travel_first_no_travel}}
    \label{proof_travel_first_no_travel}
  \end{equation}

  \begin{restatable}{claim}{thmTravelFirstHomeAway}
    \label{thm_travel_first_home_away}
    For each $(k,i,j) \in \allMatches$ with $k \neq n$ and $\{i,j\} \neq \{j^\star,t^\star\}$ there exist tournaments~$T$ and~$T'$ satisfying~\eqrefHomeAwaySwap{$n$}{$k$}{$i$}{$j$} such that~$T$ and~$T'$ satisfy the same condition from \cref{thm_travel_first_face}.
  \end{restatable}

  The tournaments $T$ and $T'$ from \cref{thm_travel_first} satisfy $(\chi(T),\psi(T)), (\chi(T'),\psi(T')) \in F$ by \cref{thm_travel_first_face}.
  Using \eqref{eq_basis_zero} and~\eqref{thm_travel_first_no_travel}, $\transpose{a}\chi(T) + \transpose{b}\psi(T) = \gamma = \transpose{a}\chi(T') + \transpose{b}\psi(T')$ simplifies to
  \begin{equation}
    a_{k,i,j} = a_{k,j,i} \text{ for each } (k,i,j) \in \allMatches \text{ with } \{i,j\} \neq \{j^\star,t^\star\}.
    \tag{\S\ref*{thm_travel_first_home_away}}
    \label{proof_travel_first_home_away}
  \end{equation}
  
  \begin{restatable}{claim}{thmTravelFirstPartialSlot}
    \label{thm_travel_first_partial_slot}
    Let $k \in S \setminus \{n\}$, let $i,j,i',j' \in V$ be distinct such that $(k,j^\star,t^\star) \notin \{ (1,i,j)$, $(1,i',j')$, $(1,i,j')$, $(1,i',j) \}$ holds.
    Then there exist tournaments $T$ and $T'$ satisfying~\eqrefPartialSlotSwap{$n$}{$k$}{$i$}{$j$}{$i'$}{$j'$} such that~$T$ and~$T'$ satisfy the same condition from \cref{thm_travel_first_face}.
  \end{restatable}
  
  The tournaments $T$ and $T'$ from \cref{thm_travel_first_partial_slot} satisfy $(\chi(T),\psi(T)), (\chi(T'),\psi(T')) \in F$ due to \cref{thm_travel_first_face}.
  Using \eqref{eq_basis_zero} and~\eqref{thm_travel_first_no_travel}, $\transpose{a}\chi(T) + \transpose{b}\psi(T) = \gamma = \transpose{a}\chi(T') + \transpose{b}\psi(T')$ simplifies to
  \begin{multline}
    a_{k,i,j} + a_{k,i',j'} = a_{k,i,j'} + a_{k,i',j} \text{ for all distinct } i,j,i',j' \in V \text{ with } \\ (j^\star,t^\star) \notin \{ (i,j), (i',j'), (i,j'), (i',j) \}.
    \tag{\S\ref*{thm_travel_first_partial_slot}a}
    \label{proof_travel_first_partial_slot_a}
  \end{multline}

  Consider a slot $k \in S \setminus \{n\}$.
  For each $\ell \in \{4,5,\dotsc,n\}$, \eqref{proof_travel_first_partial_slot_a} for $(i,j,i',j') = (1,\ell,2,3)$ is applicable since $(j^\star,t^\star) = (3,4)$ is not among the matches $(i,j)$, $(i',j')$, $(i,j')$, $(i',j)$.
  This implies $a_{k,1,\ell} + a_{k,2,3} = a_{k,1,3} + a_{k,2,\ell}$ which together with \eqref{eq_basis_zero} yields $a_{k,2,\ell} = 0$.
  Moreover, for each $\ell \in \{3,4,\dotsc,n\}$, \eqref{proof_travel_first_home_away} for $(i,j) = (\ell,2)$ implies $a_{k,\ell,2} = a_{k,2,\ell} = 0$.

  For distinct $\ell, \ell' \in \{3,4,\dotsc,n\}$ with $(k,\ell,\ell') \neq (1,3,4)$, \eqref{proof_travel_first_partial_slot_a} for $(i,j,i',j') = (1,\ell',\ell,2)$ is applicable, which implies $a_{k,1,\ell'} + a_{k,\ell,2} = a_{k,1,2} + a_{k,\ell,\ell'}$.
  Together with \eqref{eq_basis_zero} this shows
  \begin{equation}
    a_{k,i,j} = 0 \text{ for all } (k,i,j) \in \allMatches \setminus \{ (1,j^\star,t^\star) \}.
    \tag{\S\ref*{thm_travel_first_partial_slot}b}
    \label{proof_travel_first_partial_slot_b}
  \end{equation}
  Together with~\eqref{proof_travel_first_no_travel}, we obtain that the support of inequality $\transpose{a}x + \transpose{b}y \leq \gamma$ is a subset of the support of inequality~\eqref{model_basic_travel_first}.

  It remains to prove that the coefficients agree (up to a positive multiple).
  From \cref{thm_travel_first_face} it is clear that $a_{1,j^\star,t^\star} = - b_{t^\star,t^\star,j^\star}$ and that the right-hand side $\gamma$ must be equal to $0$.
  This concludes the proof.
\end{proof}

Again, we obtain the following corollary by applying \cref{thm_symmetry}.

\begin{corollary}
  Inequalities~\eqref{model_basic_travel_last}, $x_{2n-2,i,t} \leq y_{t,i,t}$, are facet-defining for $\Ptt(n)$ for all $(i,t) \in A$.
\end{corollary}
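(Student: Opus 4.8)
The plan is to deduce the statement from \cref{thm_travel_first} using only the mirror-slot symmetry recorded in \cref{thm_symmetry}, with no new construction of tournaments required. First I would spell out how the slot-mirroring automorphism of $\Ptt(n)$ acts on coordinates. Exchanging slot $k$ with slot $2n-1-k$ maps a match $(k,i,j)$ to $(2n-1-k,i,j)$ — the home venue of a match is unaffected — and in particular sends slot $1$ to slot $2n-2$. Since mirroring the schedule of a tournament reverses every trip each team makes, a tournament $T$ is sent to a tournament $T'$ with $\psi(T')_{t,i,j} = \psi(T)_{t,j,i}$, so the induced linear bijection of $\Ptt(n)$ relabels the $y$-coordinates by $y_{t,i,j} \mapsto y_{t,j,i}$. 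Consequently this automorphism carries the face of $\Ptt(n)$ on which $x_{1,j,t} = y_{t,t,j}$ holds onto the face on which $x_{2n-2,j,t} = y_{t,j,t}$ holds.

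Next I would apply this automorphism to the facet of $\Ptt(n)$ induced by $x_{1,j,t} \leq y_{t,t,j}$, which is facet-defining for every $(t,j) \in A$ by \cref{thm_travel_first}. An affine automorphism of a polytope maps facets to facets, so the image inequality $x_{2n-2,j,t} \leq y_{t,j,t}$ is facet-defining as well, again for every $(t,j) \in A$. Renaming the index $j$ as $i$, this image inequality is exactly~\eqref{model_basic_travel_last}, and as $(j,t)$ ranges over all of $A$ so does $(i,t)$; hence \eqref{model_basic_travel_last} is facet-defining for all $(i,t) \in A$, as claimed.

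The only point deserving a word of care is that slot-mirroring genuinely induces an automorphism of the \emph{polytope} $\Ptt(n)$ and not merely of the LP relaxation: one must check that the defining condition $y \geq \psi(T)$ is preserved under the $y$-relabeling, which it is, since $y_{t,i,j} \geq \psi(T)_{t,i,j}$ turns into $y_{t,j,i} \geq \psi(T')_{t,j,i}$. This is already part of \cref{thm_symmetry}, so there is no real obstacle — the argument is pure bookkeeping about which coordinate is sent where by the mirror map.
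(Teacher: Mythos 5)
Your proof is correct and takes exactly the approach the paper takes: the paper's own proof of this corollary is a single sentence invoking \cref{thm_symmetry}, and your argument is simply a careful unpacking of that invocation, tracking how the slot-mirroring automorphism permutes the $x$- and $y$-coordinates and verifying that it sends the facet of \cref{thm_travel_first} onto the desired one.
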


\section{New inequality classes}
\label{sec_strengthening_inequalities}

\paragraph{Flow inequalities.}

Formulation~\eqref{model_basic} can be strengthened by the following \emph{flow inequalities}.
\begin{subequations}
  \label{eq_flow_distinct}
  \begin{align}
    \sum_{j \in V \setminus \{i\}} y_{t,i,j} &\geq 1 \quad \forall i,t \in V : i \neq t \label{eq_flow_out_distinct} \\
    \sum_{j \in V \setminus \{i\}} y_{t,j,i} &\geq 1 \quad \forall i,t \in V : i \neq t \label{eq_flow_in_distinct}
  \end{align}
\end{subequations}
They state that each team $t$ has to leave (resp.\ enter) each other team's venue at least once.
We now prove that all these inequalities define facets of $\Ptt(n)$.

\begin{theorem}
  \label{thm_flow_distinct}
  Inequalities~\eqref{eq_flow_distinct} are facet-defining for $\Ptt(n)$ for all $i,t \in V$ with $i \neq t$.
\end{theorem}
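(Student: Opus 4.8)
The plan is to show that the flow-out inequalities~\eqref{eq_flow_out_distinct} are facet-defining; the statement for the flow-in inequalities~\eqref{eq_flow_in_distinct} then follows from the slot-mirroring symmetry of \cref{thm_symmetry}, which reverses every traveled arc and hence turns $\sum_{j}y_{t,j,i}$ into $\sum_{j}y_{t,i,j}$. We assume $n\geq 6$ and leave $n=4$ to a computation, as in the earlier proofs. Using the team symmetry of \cref{thm_symmetry} we fix the inequality under consideration as $\sum_{j\in V\setminus\{i^\star\}}y_{t^\star,i^\star,j}\geq 1$ with $i^\star=3$ and $t^\star=4$, so that $\{i^\star,t^\star\}\cap\{1,2\}=\varnothing$. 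Validity is immediate: in every tournament $T$ team $t^\star$ plays at venue $i^\star$ exactly once, hence leaves $i^\star$ exactly once, so $\sum_j\psi(T)_{t^\star,i^\star,j}=1$. In particular \emph{every} vertex $(\chi(T),\psi(T))$ lies on the induced face, and a point $(\chi(T),y)\in\Ptt(n)$ lies on it if and only if $y_{t^\star,i^\star,j}=\psi(T)_{t^\star,i^\star,j}$ for all $j\in V\setminus\{i^\star\}$. Let $\transpose{a}x+\transpose{b}y\geq\gamma$ define a facet $F$ containing this face; by \cref{thm_dim_basis} we may assume the equation is normalized with respect to slot~$1$, i.e.\ it satisfies~\eqref{eq_basis_zero}.

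First we kill $b$ off the support of~\eqref{eq_flow_out_distinct}. For each $(t',i',j')\in V\times A$ with $(t',i')\neq(t^\star,i^\star)$ pick, via \cref{thm_dim_no_travel}, a tournament $T$ in which team $t'$ never travels from $i'$ to $j'$; both $(\chi(T),\psi(T))$ and the point obtained from it by raising the $(t',i',j')$-coordinate of $y$ to $1$ lie on the induced face (this change does not affect $\sum_j y_{t^\star,i^\star,j}$), hence on $F$, and subtracting the two equations gives $b_{t',i',j'}=0$. Thus $b$ is supported on the coordinates $y_{t^\star,i^\star,j}$, $j\in V\setminus\{i^\star\}$. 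To treat $a$ we prove analogues of \cref{thm_dim_home_away} and \cref{thm_dim_partial_slot}, with the extra requirement that the two tournaments $T,T'$ agree in the restriction $\psi(\cdot)_{t^\star,i^\star,\cdot}$ of their travel vectors; this holds automatically whenever team $t^\star$ is not among the teams whose matches are swapped, and otherwise is arranged by prescribing the slot in which team $t^\star$ plays away against $i^\star$ together with its successor slot. Since $b$ is now supported on $y_{t^\star,i^\star,\cdot}$, this requirement yields $\transpose{b}\psi(T)=\transpose{b}\psi(T')$, so exactly as in the proof of \cref{thm_dim} the home-away claim gives $a_{k,i,j}=a_{k,j,i}$ for all $(k,i,j)\in\allMatches$ with $k\geq 2$ and $\{i,j\}\neq\{i^\star,t^\star\}$, and the partial-slot claim gives the relations $a_{k,i,j}+a_{k,i',j'}=a_{k,i,j'}+a_{k,i',j}$ used there. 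The elimination carried out in the proof of \cref{thm_dim} uses only pairs $\{1,\ell\}$, $\{2,\ell\}$ and $\{\ell,\ell'\}$ with $\ell,\ell'\in\{3,\dots,n\}$, among which only $\{\ell,\ell'\}=\{3,4\}$ equals $\{i^\star,t^\star\}$; running it together with~\eqref{eq_basis_zero} therefore yields $a_{k,i,j}=0$ for all $(k,i,j)\in\allMatches$ with $\{i,j\}\neq\{i^\star,t^\star\}$.

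It remains to pin down $a_{k,i^\star,t^\star}$, $a_{k,t^\star,i^\star}$ (for $k\in S$), the $b_{t^\star,i^\star,j}$ and $\gamma$; by~\eqref{eq_basis_zero} we already have $a_{1,i^\star,t^\star}=a_{1,t^\star,i^\star}=0$. For a tournament $T$ write $p(T)$ and $q(T)$ for the slots of the matches $(\cdot,i^\star,t^\star)$ and $(\cdot,t^\star,i^\star)$, and let $\nu(T)\in V\setminus\{i^\star\}$ be the venue team $t^\star$ travels to from $i^\star$ (its opponent's venue in slot $p(T)+1$, or $t^\star$ itself when $p(T)=2n-2$). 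Evaluating $\transpose{a}\chi(T)+\transpose{b}\psi(T)=\gamma$ with the information gathered so far gives
\[
  a_{p(T),i^\star,t^\star}+a_{q(T),t^\star,i^\star}+b_{t^\star,i^\star,\nu(T)}=\gamma \qquad\text{for every tournament } T .
\]
A final construction claim supplies, for every ordered pair of distinct slots $p,q$ and every admissible venue (any element of $V\setminus\{i^\star\}$, forced to be $t^\star$ when $p=2n-2$), a tournament $T$ with $p(T)=p$, $q(T)=q$ and $\nu(T)$ equal to that venue. Fixing $p,q$ and varying $\nu(T)$ shows $b_{t^\star,i^\star,j}$ does not depend on $j$; then varying $p$ and $q$ shows $a_{p,i^\star,t^\star}+a_{q,t^\star,i^\star}$ is a constant independent of the distinct pair $p,q$, and a short argument using $a_{1,i^\star,t^\star}=a_{1,t^\star,i^\star}=0$ forces every $a_{k,i^\star,t^\star}$ and $a_{k,t^\star,i^\star}$ to vanish and $b_{t^\star,i^\star,j}=\gamma$ for all $j$. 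Hence the equation reads $\gamma\sum_{j}y_{t^\star,i^\star,j}=\gamma$; since $\Ptt(n)$ contains points with $\sum_j y_{t^\star,i^\star,j}=2$ we get $\gamma\geq 0$, and $\gamma=0$ would make $F=\Ptt(n)$, so $\gamma>0$ and $\transpose{a}x+\transpose{b}y\geq\gamma$ is a positive multiple of~\eqref{eq_flow_out_distinct}.

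The concluding linear algebra is routine; the real work lies in the three construction claims, in particular in verifying via the canonical factorization and the operations of \cref{thm_home_away_swap} and \cref{thm_partial_slot_swap} that the home-away and partial-slot swaps can be performed while keeping team $t^\star$'s away appearance at venue $i^\star$ (and the following slot) fixed, and that all admissible triples $(p(T),q(T),\nu(T))$ are realizable. This last point is where I expect the main difficulty, since it requires simultaneously controlling two prescribed positions of the $i^\star$--$t^\star$ matches and team $t^\star$'s movement immediately afterwards, which the canonical construction does not hand us directly.
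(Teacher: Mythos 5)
Your proof is correct in outline and takes a genuinely different route from the paper's after the common preliminaries, so a comparison is in order. Both proofs first establish $b_{t,i,j}=0$ for $(t,i)\neq(t^\star,i^\star)$ by the same perturbation trick (and your observation that \emph{every} travel vector already satisfies the inequality with equality, because each team leaves each other venue exactly once, is exactly what makes the paper's no-travel claim easy). The decisive difference is in how $a$ is killed on the $\{i^\star,t^\star\}$-matches. You restrict the home-away claim to $\{i,j\}\neq\{i^\star,t^\star\}$, so the elimination of \cref{thm_dim} only gives $a_{k,i,j}=0$ off those matches, and you then need a substantially richer construction claim: a tournament realizing each admissible triple $(p,q,\nu)$, where $p,q$ are the two slots of the $i^\star$--$t^\star$ matches and $\nu$ is the venue $t^\star$ departs to. The paper avoids this entirely by proving the home-away claim \emph{without} the restriction $\{i,j\}\neq\{i^\star,t^\star\}$: in the problematic case it arranges for $t^\star$ to play at home immediately after its away game at $i^\star$ in \emph{both} $T$ and $T'$, so the departure venue is the home venue $t^\star$ in both and the travel vectors agree on the relevant coordinates. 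That one construction yields $a_{k,i,j}=a_{k,j,i}$ with no exception and hence $a=\zerovec$ outright, and then only a one-parameter family varying $\nu$ is needed to pin down $b$. So the paper's route is shorter because it front-loads one slightly cleverer construction, while yours is longer because it defers the cleverness to a richer final family. Your closing linear-algebra argument (the relation $a_{p,i^\star,t^\star}+a_{q,t^\star,i^\star}+b_{t^\star,i^\star,\nu}=\gamma$, fixing $p,q$ and varying $\nu$, then varying $p,q$ with the normalization $a_{1,\cdot,\cdot}=0$) is sound and does correctly force $a=\zerovec$ and $b_{t^\star,i^\star,\cdot}\equiv\gamma$.

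Two cautions. First, you assume $n\geq 6$ ``as in the earlier proofs,'' but for this particular theorem the paper needs $n\geq 8$ for its partial-slot construction and verifies $n\in\{4,6\}$ computationally; your constructions are different so this is not automatically a problem, but it should be checked rather than inherited. Second, the construction claims are only sketched: the home-away and partial-slot claims ``arranged by prescribing the slot\dots together with its successor slot'' and the $(p,q,\nu)$-realizability claim are plausible (the latter amounts to assigning a $1$-factorization of $K_n$ to slots so that the factor containing $\{i^\star,t^\star\}$ sits in slots $p$ and $q$ and the factor matching $t^\star$ to $\nu$ sits in slot $p+1$, then orienting complementarily), but you rightly flag them as where the real work lies, and they are not carried out. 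In the paper the analogous work is done explicitly in the appendix claims, and in particular the unrestricted home-away claim is the nontrivial ingredient that your version bypasses at the cost of the heavier $(p,q,\nu)$ claim.
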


\begin{proof}
  We only prove the statement for inequalities~\eqref{eq_flow_out_distinct}.
  For~\eqref{eq_flow_in_distinct}, it then follows from \cref{thm_symmetry}.
  In addition, we assume $n \geq 8$ since we verified the statement for $n \in \{4,6\}$ computationally~\cite{IPO}.

  Let $i^\star, t^\star \in V$ with $i^\star \neq t^\star$.
  The inequality for $i \coloneqq i^\star$ and $t \coloneqq t^\star$ is valid since team $t^\star$ has to play an away match against team $i^\star$ after which it leaves to some other venue.

  To establish that the inequality is facet-defining, let $\transpose{a} x + \transpose{b}y \geq \gamma$ define any facet $F$ that contains the face induced by $\sum_{j \in V \setminus \{i^\star\}} y_{t^\star,i^\star,j} \geq 1$.
  Without loss of generality, the equation is normalized with respect to slot $1$, i.e., it satisfies~\eqref{eq_basis_zero}.

  \begin{restatable}{claim}{thmFlowOutDistinctNoTravel}
    \label{thm_flow_out_distinct_no_travel}
    For all $(t,i,j) \in V \times A$ with $(t,i) \neq (t^\star,i^\star)$ there exists a tournament in which team $t$ never travels from venue $i$ to venue $j$ and in which team $t^\star$ leaves venue $i^\star$ exactly once.
  \end{restatable}

  A tournament $T$ from \cref{thm_flow_out_distinct_no_travel} satisfies $\psi(T)_{t,i,j} = 0$.
  Let $y \coloneqq \psi(T)$ and let $y'$ be equal to $y$ except for $y'_{t,i,j} = 1$.
  We have $(\chi(T),y) \in F$ and if $(t,i) \neq (t^\star,i^\star)$ holds, also $(\chi(T),y') \in F$.
  In this case, $\transpose{a}\chi(T) + \transpose{b}y = \gamma = \transpose{a}\chi(T) + \transpose{b}y'$ simplifies to $b_{t,i,j} = 0$.
  We obtain
  \begin{equation}
    b_{t,i,j} = 0 \text{ for all } (t,i,j) \in V \times A \text{ with } (t,i) \neq (t^\star,i^\star).
    \tag{\S\ref*{thm_flow_out_distinct_no_travel}}
    \label{proof_flow_out_distinct_no_travel}
  \end{equation}
  
  \begin{restatable}{claim}{thmFlowOutDistinctHomeAway}
    \label{thm_flow_out_distinct_home_away}
    For all distinct $i,j \in V$ and for each $k \in S \setminus \{1\}$ there exist tournaments $T$ and $T'$ satisfying~\eqrefHomeAwaySwapDefault{} such that in both tournaments team $t^\star$ leaves venue $i^\star$ exactly once and to the same venue.
  \end{restatable}

  In the tournaments $T$ and $T'$ from \cref{thm_flow_out_distinct_home_away}
  team $t^\star$ leaves venue $i^\star$ exactly once and to the same venue.
  Hence, we have $(\chi(T),\psi(T)), (\chi(T'),\psi(T')) \in F$.
  Moreover, together with \eqref{proof_flow_out_distinct_no_travel} it implies $\transpose{b}\psi(T) = \transpose{b}\psi(T')$.
  Combining this with the fact that the equation is normalized with respect to slot $1$ (i.e., it satisfies~\eqref{eq_basis_zero}), $\transpose{a}\chi(T) + \transpose{b}\psi(T) = \gamma = \transpose{a}\chi(T') + \transpose{b}\psi(T')$ simplifies to $a_{k,i,j} = a_{k,j,i}$.
  Thus, we have
  \begin{equation}
    a_{k,i,j} = a_{k,j,i} \text{ for each } (k,i,j) \in \allMatches.
    \tag{\S\ref*{thm_flow_out_distinct_home_away}}
    \label{proof_flow_out_distinct_home_away}
  \end{equation}

  \begin{restatable}{claim}{thmFlowOutDistinctPartialSlot}
    \label{thm_flow_out_distinct_partial_slot}
    For each slot $k \in S \setminus \{1\}$ and for distinct teams $i,j,i',j' \in V$ with $(i^\star,t^\star) \notin \{(i,j)$,$(i',j')$, $(i,j')$, $(i',j)\}$ there exist tournaments $T$ and $T'$ satisfying~\eqrefPartialSlotSwapDefault{} such that in both tournaments team $t^\star$ leaves venue $i^\star$ exactly once and to the same venue.
  \end{restatable}

  In the tournaments $T$ and $T'$ from \cref{thm_flow_out_distinct_partial_slot}
  team $t^\star$ leaves venue $i^\star$ exactly once and to the same venue.
  Hence, we have $(\chi(T),\psi(T)), (\chi(T'),\psi(T')) \in F$.
  Moreover, together with \eqref{proof_flow_out_distinct_no_travel} it implies $\transpose{b}\psi(T) = \transpose{b}\psi(T')$.
  Combining this with the fact that the equation is normalized with respect to slot $1$ (i.e., it satisfies~\eqref{eq_basis_zero}), $\transpose{a}\chi(T) + \transpose{b}\psi(T) = \gamma = \transpose{a}\chi(T') + \transpose{b}\psi(T')$ simplifies to
  \begin{multline}
    a_{k,i,j} + a_{k,i',j'} = a_{k,i,j'} + a_{k,i',j} \text{ for each } k \in S \setminus \{1\} \text{ and for all distinct } i,j,i',j' \in V \\ \text{ with } (i^\star,t^\star) \notin \{ (i,j),(i',j'),(i,j'),(i',j) \}
    \tag{\S\ref*{thm_flow_out_distinct_partial_slot}}
    \label{proof_flow_out_distinct_partial_slot}
  \end{multline}

  Since the formulation is symmetric with respect to teams, we can now, by permuting teams, assume $(i^\star,t^\star) = (4,3)$.
  Consider a slot $k \in S \setminus \{1\}$.
  For each $\ell \in \{4,5,\dotsc,n\}$, \eqref{proof_flow_out_distinct_partial_slot} implies $a_{k,1,\ell} + a_{k,2,3} = a_{k,1,3} + a_{k,2,\ell}$ which together with the fact that the equation is normalized with respect to slot $1$ (i.e., it satisfies~\eqref{eq_basis_zero}) yields $a_{k,2,\ell} = 0$.
  Combined with~\eqref{proof_flow_out_distinct_home_away} we also obtain $a_{k,\ell,2} = 0$.
  For all $\ell,\ell' \in \{3,4,\dotsc,n\}$ except for $(\ell,\ell') = (4,3)$, \eqref{proof_flow_out_distinct_partial_slot} implies $a_{k,1,\ell'} + a_{k,\ell,2} = a_{k,1,2} + a_{k,\ell,\ell'}$.
  Together with the fact that the equation is normalized with respect to slot $1$ (i.e., it satisfies~\eqref{eq_basis_zero}), this shows $a_{k,\ell,\ell'} = 0$ for all $(\ell,\ell') \neq (4,3)$.
  From \eqref{proof_flow_out_distinct_home_away} we also have $a_{k,4,3} = a_{k,3,4} = 0$ and obtain $a = \zerovec$.

  \begin{restatable}{claim}{thmFlowOutDistinctCoefficients}
    \label{thm_flow_out_distinct_coefficients}
    For distinct $j,j' \in V \setminus \{i^\star\}$ there exist tournaments $T$ and $T'$ such that in both tournaments team $t^\star$ leaves venue $i^\star$ exactly once, namely to venue $j$ in $T$ and to venue $j'$ in $T'$.
  \end{restatable}

  In the tournaments $T$ and $T'$ from \cref{thm_flow_out_distinct_coefficients}
  team $t^\star$ leaves venue $i^\star$ exactly once.
  Hence, we have $(\chi(T),\psi(T)), (\chi(T'),\psi(T')) \in F$.
  From $a = \zerovec$ and~\eqref{proof_flow_out_distinct_no_travel} we have that $b_{t^\star,i^\star,j} = \transpose{a}\chi(T) + \transpose{b}\psi(T) = \gamma = \transpose{a}\chi(T') + \transpose{b}\psi(T') = b_{t^\star,i^\star,j'}$.
  This shows that $(\transpose{a}, \transpose{b})$ is a multiple of the coefficient vector of~\eqref{eq_flow_out_distinct}.
  The fact that it is a positive multiple follows from the observation that we can take any feasible solution and setting all entries of $y$ to $1$ yields another feasible solution (which is not in the face anymore).
\end{proof}

\paragraph{Home-flow inequalities.}
Inequalities~\eqref{eq_flow_distinct} also hold for $t$'s home venue, i.e., $i = t$, but in this case they are dominated by the following \emph{home-flow inequalities}.
\begin{subequations}
  \label{eq_flow_home}
  \begin{align}
    \sum_{j \in V \setminus \{t\}} y_{t,t,j} + \sum_{j \in V \setminus \{t\}} (x_{k,t,j} + x_{k+n-1,t,j}) &\geq 2 \quad \forall k \in \{1,2,\dotsc,n-1\},~ \forall t \in V
    \label{eq_flow_out_home} \\
    \sum_{j \in V \setminus \{t\}} y_{t,t,j} + \sum_{j \in V \setminus \{t\}} (x_{k,j,t} + x_{k+n-1,j,t}) &\geq 2 \quad \forall k \in \{1,2,\dotsc,n-1\},~ \forall t \in V
    \label{eq_flow_out_away} \\
    \sum_{i \in V \setminus \{t\}} y_{t,i,t} + \sum_{i \in V \setminus \{t\}} (x_{k,t,i} + x_{k+n-1,t,i}) &\geq 2 \quad \forall k \in \{1,2,\dotsc,n-1\},~ \forall t \in V
    \label{eq_flow_in_home} \\
    \sum_{i \in V \setminus \{t\}} y_{t,i,t} + \sum_{i \in V \setminus \{t\}} (x_{k,i,t} + x_{k+n-1,i,t}) &\geq 2 \quad \forall k \in \{1,2,\dotsc,n-1\},~ \forall t \in V
    \label{eq_flow_in_away}
  \end{align}
\end{subequations}
They are valid for $\Ptt(n)$ since team $t$ leaves (resp.\ enters) its home venue either at least twice or it leaves (resp.\ enters) it only once in which case it cannot play at home (resp.\ away) in slots $k$ and $k+n-1$.
The sum of the first two reads
\begin{equation*}
  \sum_{j \in V \setminus \{t\}} 2y_{t,t,j} + \sum_{j \in V \setminus \{t\}} (x_{k,t,j} + x_{k,j,t} + x_{k+n-1,t,j} + x_{k+n-1,j,t}) \geq 4,
\end{equation*}
for which the subtraction of equation~\eqref{model_basic_team_plays} for team $t$ and slots $k$ and $k+n-1$ yields
\begin{equation*}
  \sum_{j \in V \setminus \{t\}} 2y_{t,t,j} \geq 4 - 1 - 1,
\end{equation*}
which in turn equals~\eqref{eq_flow_out_distinct} for $i = t$.
The corresponding result is as follows.

\begin{theorem}
  \label{thm_flow_home_away}
  Inequalities~\eqref{eq_flow_home} are facet-defining for $\Ptt(n)$ for each team $t \in V$ and each slot $k \in \{1,2,\dotsc,n-1\}$.
\end{theorem}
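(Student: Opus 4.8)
The plan is to adapt the scheme used for \cref{thm_flow_distinct}. By \cref{thm_symmetry} it suffices to treat the inequalities~\eqref{eq_flow_out_home}: mirroring all slots turns~\eqref{eq_flow_out_home} for slot $k$ into~\eqref{eq_flow_in_home} for slot $n-k$, and the argument for~\eqref{eq_flow_out_away} and~\eqref{eq_flow_in_away} is identical up to exchanging, in every construction, the home and away play variables of team $t$. We may assume $n$ to be sufficiently large, verifying the small cases computationally~\cite{IPO}. Fix the team $t^\star$ and the slot $k^\star\in\{1,\dots,n-1\}$ of the inequality; permuting teams lets us fix $t^\star$, but there is no slot symmetry within this family, so $k^\star$ is kept general. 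Validity follows from a road-trip count: every team departs from its home venue at least once, and if team $t^\star$ departs at least twice the $y$-terms of~\eqref{eq_flow_out_home} already sum to~$2$; if it departs exactly once, its $n-1$ away games form a single block of $n-1$ consecutive slots, and since $k^\star$ and $k^\star+n-1$ are exactly $n-1$ slots apart, precisely one of them lies in that block, so exactly one of the two slots is a home game of $t^\star$ and the $x$-terms contribute exactly~$1$.

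The combinatorial core is a ``face claim'' in the spirit of \cref{thm_travel_first_face}: for a tournament $T$, the point $(\chi(T),\psi(T))$ satisfies~\eqref{eq_flow_out_home} with equality if and only if either
\begin{enumerate}[label={(\alph*)}]
\item team $t^\star$ departs from its home venue exactly once, or
\item team $t^\star$ departs from its home venue exactly twice and plays away in both slots $k^\star$ and $k^\star+n-1$;
\end{enumerate}
moreover, in case~(a) one may raise any single coordinate $y_{t,i,j}$ with $(t,i)\neq(t^\star,t^\star)$ to~$1$ and stay on the face. Granting this, let $\transpose{a}x+\transpose{b}y=\gamma$ define a facet $F$ containing the face induced by~\eqref{eq_flow_out_home}, and normalize the equation with respect to a slot $\bar k\notin\{k^\star,k^\star+n-1\}$, so that $a_m=0$ for all $m\in\columnBasis_{\bar{k}}$ (in particular for every match in slot $\bar k$). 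The goal is to show that $(a,b,\gamma)$ is a positive multiple of the coefficient vector of~\eqref{eq_flow_out_home}.

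From here the steps mirror the earlier proofs, but every auxiliary tournament has to be forced into case~(a) or~(b) of the face claim. A ``no-travel'' claim built on type-(a) tournaments yields $b_{t,i,j}=0$ for all $(t,i,j)\in V\times A$ with $(t,i)\neq(t^\star,t^\star)$. Home-away swaps (\cref{thm_home_away_swap}) and partial slot swaps (\cref{thm_partial_slot_swap}) that avoid team $t^\star$ altogether keep a type-(a) tournament within case~(a); combined with the normalization they force, exactly as in \cref{thm_dim}, that $a_{k,i,j}=0$ whenever $t^\star\notin\{i,j\}$. The coefficients involving $t^\star$ are handled separately: partial slot swaps that only permute $t^\star$'s opponents within a fixed home/away pattern preserve the type, and together with type-(b) tournaments and swaps performed at a road-trip boundary of $t^\star$ (which move a tournament controllably between case~(a) and case~(b)) they show that $a_{k,j,t^\star}=0$ for every slot $k$ and every $j$, that $a_{k,t^\star,j}=0$ for every slot $k\notin\{k^\star,k^\star+n-1\}$ and every $j$, that $b_{t^\star,t^\star,j}$ equals a common value $\beta$ for all $j$, and that $a_{k^\star,t^\star,j}$ and $a_{k^\star+n-1,t^\star,j}$ are each constant in $j$. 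Evaluating the normalized equation on sufficiently many type-(a) and type-(b) tournaments then pins these last coefficients down to $\beta$ and forces $\gamma=2\beta$, and $\beta>0$ follows, as in \cref{thm_flow_distinct}, because raising all $y$-coordinates of a feasible point to~$1$ gives another feasible point that no longer lies on the face.

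The main obstacle is the \emph{global} nature of the face condition. In contrast with the (essentially local) conditions appearing in \cref{sec_model_inequalities}, lying on this face constrains the entire home/away pattern of team $t^\star$ over all $2n-2$ slots, through its number of departures from home. A home-away or partial slot swap touching $t^\star$ can merge two road trips into one or split one in two, so the constructions must either leave $t^\star$ untouched, permute only its opponents, or be tuned precisely at a road-trip boundary. The genuinely new work is therefore the construction, from the canonical factorization and for every $n$ and every $k^\star\in\{1,\dots,n-1\}$, of the required stock of tournaments: type-(a) tournaments whose single road trip is positioned so that a prescribed one of the slots $k^\star$, $k^\star+n-1$ is a home game and prescribed matches are played, and type-(b) tournaments in which $t^\star$ has exactly two road trips both containing slots $k^\star$ and $k^\star+n-1$, with enough freedom in the opponents to carry out the single-coordinate toggles above. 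These constructions are collected in \cref{sec_tournaments}.
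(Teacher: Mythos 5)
Your plan matches the paper's proof essentially step for step: the same road-trip-counting validity argument, the same type-(a)/type-(b) characterization of the face (the paper states only case~(a) explicitly as a claim, but its later claims construct exactly your type-(b) tournaments and the boundary swaps between them), normalization with respect to a slot $\bar{k}$ away from $\{k^\star,k^\star+n-1\}$, a no-travel claim yielding $b_{t,i,j}=0$ for $(t,i)\neq(t^\star,t^\star)$, swaps that avoid $t^\star$ to zero out the remaining $a$-coefficients not involving $t^\star$, and swaps that permute $t^\star$'s opponents or sit at a road-trip boundary to pin down the coefficients on $t^\star$'s home matches and the common value of $b_{t^\star,t^\star,j}$. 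The organizational choice of deferring the explicit canonical-factorization constructions to the appendix is also identical to the paper's.
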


\begin{proof}
  We only prove the statement for inequalities~\eqref{eq_flow_out_home}.
  The proof for inequalities~\eqref{eq_flow_out_away} is very similar.
  Moreover, the result for inequalities~\eqref{eq_flow_in_home} and~\eqref{eq_flow_in_away} follows from \cref{thm_symmetry}.
  In addition, we assume $n \geq 6$ since we verified the statement for $n = 4$ computationally~\cite{IPO}.

  Let $t^\star \in V$ and $k^\star \in \{1,2,\dotsc,n-1\}$.
  To see that the inequalities are valid, first observe that team $t^\star$ has to leave its own venue at least once.
  If it does so at least twice, the inequality is certainly satisfied.
  The remaining case is settled by the following observation which we will use several times throughout the proof.

  \begin{claim}
    \label{thm_flow_out_home_away_leave_once}
    Let $T$ be a tournament in which $t^\star$ leaves its home venue exactly once.
    Then all away matches of $t^\star$ take place in consecutive slots, and hence $t^\star$ plays at home in exactly one of the two slots $k^\star$ and $k^\star+n-1$.
    In particular, $(\chi(T),\psi(T))$ satisfies~\eqref{eq_flow_out_home} and~\eqref{eq_flow_out_away} with equality.
  \end{claim}

  To prove that inequality~\eqref{eq_flow_out_home} is facet-defining, let $\transpose{a} x + \transpose{b}y \geq \gamma$ define any facet $F$ that contains the face induced by $\sum_{j \in V \setminus \{t^\star\}} y_{t^\star,t^\star,j} + \sum_{j \in V \setminus \{t^\star\}} (x_{k^\star,t^\star,j} + x_{k^\star+n-1,t^\star,j}) \geq 2$.

  Since the formulation is symmetric with respect to teams we can, by permuting teams, assume $t^\star = 4$ for the remainder of the proof.
  Let $\bar{k} \in S$ with $k^\star < \bar{k} < k^\star + n-1$.
  By \cref{thm_dim_basis} we can assume that the equation is normalized with respect to slot $\bar{k}$, i.e., it satisfies
  \begin{equation}
    a_m = 0 \text{ for each } m \in \columnBasis_{\bar{k}}.
    \tag{\S\ref*{thm_flow_out_home_away_leave_once}}
    \label{proof_flow_out_home_basis}
  \end{equation}

  \begin{restatable}{claim}{thmFlowOutHomeNoTravel}
    \label{thm_flow_out_home_no_travel}
    For all $(t,i,j) \in V \times A$ with $(t,i) \neq (t^\star,t^\star)$ there exists a tournament in which team $t$ never travels from venue $i$ to venue $j$ and in which team $t^\star$ leaves its home venue exactly once.
  \end{restatable}

  A tournament $T$ from \cref{thm_flow_out_home_no_travel} satisfies $\psi(T)_{t,i,j} = 0$.
  Let $y \coloneqq \psi(T)$ and let $y'$ be equal to $y$ except for $y'_{t,i,j} = 1$.
  By \cref{thm_flow_out_home_away_leave_once} we have $(\chi(T),y) \in F$ and if $(t,i) \neq (t^\star,t^\star)$ holds, also $(\chi(T),y') \in F$.
  In this case, $\transpose{a}\chi(T) + \transpose{b}y = \gamma = \transpose{a}\chi(T) + \transpose{b}y'$ simplifies to $b_{t,i,j} = 0$.
  We obtain
  \begin{equation}
    b_{t,i,j} = 0 \text{ for all } (t,i,j) \in V \times A \text{ with } (t,i) \neq (t^\star,t^\star).
    \tag{\S\ref*{thm_flow_out_home_no_travel}}
    \label{proof_flow_out_home_no_travel}
  \end{equation}

  \begin{restatable}{claim}{thmFlowOutHomeCofficients}
    \label{thm_flow_out_home_coefficients}
    For any slot $k \in \{1,2,\dotsc,n-1\}$ and distinct $j,j' \in V \setminus \{t^\star\}$ there exist tournaments $T$ and $T'$ satisfying~\eqrefHomeAwaySwap{$k$}{$k+n-1$}{$t^\star$}{$j$} and such that team $t^\star$ leaves its home venue exactly once and to the venues $j$ in $T$ and to $j'$ in $T'$.
  \end{restatable}

  In the tournaments $T$ and $T'$ from \cref{thm_flow_out_home_coefficients}
  team $t^\star$ leaves its home venue exactly once.
  Hence, by \cref{thm_flow_out_home_away_leave_once} we have $(\chi(T),\psi(T)), (\chi(T'),\psi(T')) \in F$.
  Due to~\eqref{proof_flow_out_home_basis} and \eqref{proof_flow_out_home_no_travel} the equation $\transpose{a}\chi(T) + \transpose{b}\psi(T) = \gamma = \transpose{a}\chi(T') + \transpose{b}\psi(T')$ simplifies to $a_{k,t^\star,j} + a_{k+n-1,j,t^\star} + b_{t^\star,t^\star,j} = a_{k,j,t^\star} + a_{k+n-1,t^\star,j} + b_{t^\star,t^\star,j'}$.
  Since $j'$ only appears in the last term, varying $j'$ yields
  $b_{t^\star,t^\star,j_1} = b_{t^\star,t^\star,j_2}$ for all $j_1,j_2 \in V \setminus \{t^\star\}$.
  Together with~\eqref{proof_flow_out_home_no_travel}, this shows
  \begin{multline}
    \transpose{b}\psi(T) = \transpose{b}\psi(T') \text{ for all tournaments } T,T' \text{ with } (\chi(T),\psi(T)),(\chi(T'),\psi(T')) \in F \\
    \text{in which } t^\star \text{ leaves its home venue as often in $T$ as in $T'$}.
    \tag{\S\ref*{thm_flow_out_home_coefficients}a}
    \label{proof_flow_out_home_coefficients}
  \end{multline}
  This further simplifies the equation to
  \begin{equation}
    a_{k,t^\star,j} + a_{k+n-1,j,t^\star} = a_{k,j,t^\star} + a_{k+n-1,t^\star,j} \text{ for all } k \in \{1,2,\dotsc,n-1\} \text{ and all } j \in V \setminus \{t^\star\}.
    \tag{\S\ref*{thm_flow_out_home_coefficients}b}
    \label{proof_flow_out_home_home_away_special}
  \end{equation}
  
  \begin{restatable}{claim}{thmFlowOutHomeHomeAway}
    \label{thm_flow_out_home_home_away}
    For each slot $k \in S \setminus \{\bar{k}\}$ and for all distinct $i,j \in V \setminus \{t^\star\}$ there exist tournaments $T$ and $T'$ satisfying~\eqrefHomeAwaySwap{$\bar{k}$}{$k$}{$i$}{$j$} and such that in both tournaments team $t^\star$ leaves its home venue exactly once.
  \end{restatable}

  In the tournaments $T$ and $T'$ from \cref{thm_flow_out_home_home_away} team $t^\star$ leaves its home venue exactly once.
  Hence, by \cref{thm_flow_out_home_away_leave_once} we have $(\chi(T),\psi(T)), (\chi(T'),\psi(T')) \in F$ and by~\eqref{proof_flow_out_home_coefficients} also $\transpose{b}\psi(T) = \transpose{b}\psi(T')$.
  Combining this with~\eqref{proof_flow_out_home_basis}, $\transpose{a}\chi(T) + \transpose{b}\psi(T) = \gamma = \transpose{a}\chi(T') + \transpose{b}\psi(T')$ simplifies to $a_{k,i,j} = a_{k,j,i}$.
  Thus, we have
  \begin{equation}
    a_{k,i,j} = a_{k,j,i} \text{ for each } (k,i,j) \in \allMatches \text{ with } t^\star \notin \{i,j\}.
    \tag{\S\ref*{thm_flow_out_home_home_away}}
    \label{proof_flow_out_home_home_away}
  \end{equation}

  \begin{restatable}{claim}{thmFlowOutHomePartialSlot}
    \label{thm_flow_out_home_partial_slot}
    For distinct slots $k_1,k_2 \in S$ and distinct teams $i,j,i',j' \in V$ with $t^\star \notin \{i,i'\}$ and with $k_2 = k_1 + 1$ if $t^\star \in \{j,j'\}$ there exist tournaments $T$ and $T'$ satisfying~\eqrefPartialSlotSwap{$k_1$}{$k_2$}{$i$}{$j$}{$i'$}{$j'$} such that in both tournaments team $t^\star$ leaves its home venue exactly once.
  \end{restatable}

  In the tournaments $T$ and $T'$ from \cref{thm_flow_out_home_partial_slot} team $t^\star$ leaves its home venue exactly once.
  Hence, by \cref{thm_flow_out_home_away_leave_once}  we have $(\chi(T),\psi(T)), (\chi(T'),\psi(T')) \in F$ and by~\eqref{proof_flow_out_home_coefficients} also $\transpose{b}\psi(T) = \transpose{b}\psi(T')$.
  Combining this with~\eqref{proof_flow_out_home_basis}, equation $\transpose{a}\chi(T) + \transpose{b}\psi(T) = \gamma = \transpose{a}\chi(T') + \transpose{b}\psi(T')$ yields
  \begin{multline}
    a_{k_1,i,j} + a_{k_1,i',j'} + a_{k_2,i,j'} + a_{k_2,i',j} = a_{k_1,i,j'} + a_{k_1,i',j} + a_{k_2,i,j} + a_{k_2,i',j'} \text{ for all distinct slots } k_1,k_2 \in S \\
    \text{ and for all distinct } i,j,i',j' \in V \text{ with } t^\star \notin \{i,i'\} \text{ and with } |k_1-k_2|=1 \text{ if } t^\star \in \{j,j'\}.
    \tag{\S\ref*{thm_flow_out_home_partial_slot}a}
    \label{proof_flow_out_home_partial_slot}
  \end{multline}
  
  For each $k \in S \setminus \{\bar{k}\}$ and each $\ell \in \{5,6,\dots,n\}$ (noting $\ell \neq t^\star = 4$), property~\eqref{proof_flow_out_home_partial_slot} with $(k_1,k_2,i,j,i',j') = (\bar{k},k,1,3,2,
  \ell)$ implies $a_{\bar{k},1,3} + a_{\bar{k},2,\ell} + a_{k,1,\ell} + a_{k,2,3} = a_{\bar{k},1,\ell} + a_{\bar{k},2,3} + a_{k,1,3} + a_{k,2,\ell}$.
  By~\eqref{proof_flow_out_home_basis}, this simplifies to $a_{k,2,\ell} = 0$, from which~\eqref{proof_flow_out_home_home_away} yields $a_{k,\ell,2} = 0$.

  For each $k \in S \setminus \{\bar{k}\}$ and all distinct $\ell,\ell' \in \{3,5,6,\dotsc,n\}$, \eqref{proof_flow_out_home_partial_slot} with $(k_1,k_2,i,j,i',j') = (\bar{k},k,1,\ell',\ell,2)$ implies
  $ a_{\bar{k},1,\ell'} + a_{\bar{k},\ell,2} + a_{k,1,2} + a_{k,\ell,\ell'} = a_{\bar{k},1,2} + a_{\bar{k},\ell,\ell'} + a_{k,1,\ell'} + a_{k,\ell,2}$.
  By~\eqref{proof_flow_out_home_basis} and the previous observation $a_{k,\ell,2} = 0$, this simplifies to $a_{k,\ell,\ell'} = 0$.
  Since also $a_{\bar{k},\star,\star} = \zerovec$, we have
  \begin{equation}
    a_{k,i,j} = 0 \text{ for all } k \in S \text{ and all } i,j \in V \setminus \{t^\star\}.
    \tag{\S\ref*{thm_flow_out_home_partial_slot}b}
    \label{proof_flow_out_home_others}
  \end{equation}
  Let $\ell \in V \setminus \{t^\star\}$.
  For $k \in S \setminus \{\bar{k}\}$, the tuple $(k_1,k_2,i,j,i',j') = (k-1,k,\ell,t^\star,1,2)$ satisfies the conditions of~\eqref{proof_flow_out_home_partial_slot}, and thus for $\ell \in \{3,5,6,\dotsc,n\}$ implies $a_{k-1,\ell,t^\star} + a_{k-1,1,2} + a_{k,\ell,2} + a_{k,1,t^\star} = a_{k-1,\ell,2} + a_{k-1,1,t^\star} + a_{k,\ell,t^\star} + a_{k,1,2}$.
  By~\eqref{proof_flow_out_home_basis} and~\eqref{proof_flow_out_home_others}, this simplifies to $a_{k-1,\ell,t^\star} = a_{k,\ell,t^\star}$.
  By induction on $k$ and $a_{\bar{k},\ell,t^\star} = 0$, we obtain
  \begin{equation}
    a_{k,\ell,t^\star} = 0 \text{ for all } k \in S \text{ and all } \ell \in V \setminus \{t^\star\}.
    \tag{\S\ref*{thm_flow_out_home_partial_slot}c}
    \label{proof_flow_out_home_home_same}
  \end{equation}
  With this, \eqref{proof_flow_out_home_home_away_special} is simplified to
  \begin{equation}
    a_{k,t^\star,j} = a_{k+n-1,t^\star,j} \text{ for all } k \in \{1,2,\dotsc,n-1\} \text{ and all } j \in V \setminus \{t^\star\}.
    \tag{\S\ref*{thm_flow_out_home_partial_slot}d}
    \label{proof_flow_out_home_translated}
  \end{equation}

  \begin{restatable}{claim}{thmFlowOutHomeHomeAwayInterval}
    \label{thm_flow_out_home_home_away_interval}
    For each slot $k \in \{k^\star+1,k^\star+2,\dotsc,k^\star+n-3\}$ and each team $j \in V \setminus \{t^\star\}$ there exist tournaments $T$ and $T'$ satisfying~\eqrefHomeAwaySwap{$k$}{$k+1$}{$j$}{$t^\star$} such that in both tournaments team $t^\star$ leaves its home venue exactly twice and plays away in slots $k^\star$ and $k^\star + n-1$.
  \end{restatable}

  In the tournaments $T$ and $T'$ from \cref{thm_flow_out_home_home_away_interval} team $t^\star$ leaves its home venue exactly twice and does not play home in slots $k^\star$ and $k^\star+n-1$.
  Hence, we have $(\chi(T),\psi(T)), (\chi(T'),\psi(T')) \in F$ and by~\eqref{proof_flow_out_home_coefficients} also $\transpose{b}\psi(T) = \transpose{b}\psi(T')$.
  Combining this with~\eqref{proof_flow_out_home_basis} and~\eqref{proof_flow_out_home_home_same}, equation $\transpose{a}\chi(T) + \transpose{b}\psi(T) = \gamma = \transpose{a}\chi(T') + \transpose{b}\psi(T')$ simplifies to 
  \begin{equation*}
    a_{k+1,t^\star,j} = a_{k,t^\star,j} \text{ for each } k \in S \text{ with } k^\star < k < k^\star+n-1 \text{ and each } j \in V \setminus \{t^\star\}.
  \end{equation*}
  Induction on $k$ yields that $a_{k,t^\star,j}$ is the same for all these $k$.
  Moreover, for each slot $k\in S$ with $k < k^\star$ or $k > k^\star+n-1$ the slot $k+n-1$ (resp.\ $k-n+1$) lies between $k^\star$ and $k^\star+n-1$.
  Application of~\eqref{proof_flow_out_home_translated} yields that $a_{k,t^\star,j}$ is the same for all $k \in S \setminus \{k^\star,k^\star+n-1\}$.
  As $\bar{k}$ is among those, \eqref{proof_flow_out_home_basis} yields
  \begin{equation}
    a_{k,t^\star,j} = 0 \text{ for each } k \in S \setminus \{k^\star,k^\star+n-1\} \text{ and each } j \in V \setminus \{t^\star\}.
    \tag{\S\ref*{thm_flow_out_home_home_away_interval}}
    \label{proof_flow_out_home_home_away_interval}
  \end{equation}
  
  \begin{restatable}{claim}{thmFlowOutHomeHomeAwayMixed}
    \label{thm_flow_out_home_home_away_mixed}
    For all distinct teams $j,j' \in V \setminus \{t^\star\}$ there exist tournaments $T$ and $T'$ satisfying \eqrefHomeAwaySwap{$k^\star$}{$k^\star+1$}{$t^\star$}{$j$} such that team $t^\star$ leaves its home venue exactly once, namely to venue $j$, in tournament $T$ and exactly twice, namely to venues $j$ and $j'$, in tournament $T'$ where it plays away in slots $k^\star$ and $k^\star + n-1$.
  \end{restatable}

  In the tournaments $T$ and $T'$ from \cref{thm_flow_out_home_home_away_mixed} team $t^\star$ leaves its home venue either once or twice, and in the latter case it does not play home in slots $k^\star$ and $k^\star+n-1$.
  Hence, we have $(\chi(T),\psi(T)), (\chi(T'),\psi(T')) \in F$.
  Using~\eqref{proof_flow_out_home_basis}, \eqref{proof_flow_out_home_no_travel}, \eqref{proof_flow_out_home_home_same} and~\eqref{proof_flow_out_home_home_away_interval}, equation $\transpose{a}\chi(T) + \transpose{b}\psi(T) = \gamma = \transpose{a}\chi(T') + \transpose{b}\psi(T')$ simplifies to
  \begin{equation*}
    a_{k^\star,t^\star,j} + b_{t^\star,t^\star,j} = \gamma = b_{t^\star,t^\star,j} + b_{t^\star,t^\star,j'}.
    \tag{\S\ref*{thm_flow_out_home_home_away_mixed}}
  \end{equation*}
  By varying $j$ and $j'$ and considering~\eqref{proof_flow_out_home_translated}, we obtain that $\transpose{a}x + \transpose{b}y \geq \gamma$ is a positive multiple of inequality~\eqref{eq_flow_out_home}.
  This concludes the proof.
\end{proof}

\paragraph{A face defined by flow inequalities.}
Recall the definition of the unconstrained traveling tournament polytope:
\begin{equation*}
  \Ptt(n) \coloneqq \conv\{ (\chi(T),y) \in \{0,1\}^\allMatches \times \{0,1\}^{V \times A} : T \text{ tournament and } y \geq \psi(T) \}.
\end{equation*}
Allowing vectors $y \geq \psi(T)$ augments the set of feasible solutions by suboptimal ones, which is advantageous for finding facet-defining inequalities due to a larger dimension.
Now we examine what happens if we set the flow inequalities~\eqref{eq_flow_out_distinct} and~\eqref{eq_flow_in_distinct} to equality:
\begin{subequations}
  \label{eq_flow_equations}
  \begin{align}
    \sum_{j \in V \setminus \{i\}} y_{t,i,j} &= 1 \quad \forall i,t \in V : i \neq t \label{eq_flow_out_distinct_equations} \\
    \sum_{i \in V \setminus \{j\}} y_{t,i,j} &= 1 \quad \forall j,t \in V : j \neq t \label{eq_flow_in_distinct_equations}
  \end{align}
\end{subequations}
The following theorem shows how we obtain the convex hull of all pairs of play- and travel-vectors as the corresponding face of $\Ptt(n)$.

\begin{theorem}
  \label{thm_face}
  The face of $\Ptt(n)$ defined by equations~\eqref{eq_flow_equations} is equal to
  \begin{equation*}
    \conv\{ (\chi(T),\psi(T)) \in \{0,1\}^\allMatches \times \{0,1\}^{V \times A} : T ~\mathrm{ tournament} \}.
  \end{equation*}
  Consequently, formulation~\eqref{model_basic} together with these equations is an integer programming formulation for this polytope.
\end{theorem}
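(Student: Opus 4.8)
The plan is to prove the set equality directly, writing $Q$ for the polytope on the right-hand side and $F$ for the face of $\Ptt(n)$ defined by~\eqref{eq_flow_equations}. First note that $F$ really is a face: each equation in~\eqref{eq_flow_equations} is obtained by turning a valid inequality of~\eqref{eq_flow_distinct} (valid by \cref{thm_flow_distinct}) into an equality, and an intersection of faces is a face. The crucial ingredient for both inclusions is the elementary observation that, in any tournament $T$, every team $t$ leaves and enters each \emph{foreign} venue $i \in V \setminus \{t\}$ exactly once, i.e., $\sum_{j \in V \setminus \{i\}}\psi(T)_{t,i,j} = \sum_{j \in V \setminus \{i\}}\psi(T)_{t,j,i} = 1$: by~\eqref{model_basic_pair_plays} team $t$ plays at venue $i$ in a unique slot $k$, so its unique travel out of $i$ is the arc $(i,j)$ determined by $t$'s venue in slot $k+1$ (or the arc $(i,t)$ if $k = 2n-2$), and symmetrically for the travel into $i$ via slot $k-1$ (or home if $k = 1$).

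Given this, the inclusion $Q \subseteq F$ is immediate: every generator $(\chi(T),\psi(T))$ of $Q$ lies in $\Ptt(n)$, since $y \coloneqq \psi(T)$ satisfies $y \geq \psi(T)$, and by the observation it satisfies every equation of~\eqref{eq_flow_equations}; as $F$ is convex, $Q \subseteq F$.

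For $F \subseteq Q$ I would invoke the general fact that if $P = \conv(X)$ with $X$ finite and $H$ is the affine subspace defined by finitely many equalities, each arising from an inequality valid on $P$, then $P \cap H = \conv(X \cap H)$; this follows by expanding a point of $P \cap H$ as a convex combination of members of $X$ and noting each valid inequality is forced to be tight on every member occurring with positive weight. Applying this with $P = \Ptt(n)$ and $X = \{(\chi(T),y) : T \text{ tournament},\ y \in \{0,1\}^{V \times A},\ y \geq \psi(T)\}$ yields $F = \conv(X \cap H)$, so it remains to identify $X \cap H$ with $\{(\chi(T),\psi(T)) : T \text{ tournament}\}$. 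The only nontrivial point is that $y \geq \psi(T)$ together with~\eqref{eq_flow_equations} forces $y = \psi(T)$: fix $t \in V$ and an arc $(i,j) \in A$; since $i \neq j$, at least one of $i,j$ differs from $t$. If $i \neq t$, then~\eqref{eq_flow_out_distinct_equations} for $i,t$ gives $\sum_{j' \in V \setminus \{i\}}(y_{t,i,j'} - \psi(T)_{t,i,j'}) = 1 - 1 = 0$ with nonnegative summands, hence $y_{t,i,j} = \psi(T)_{t,i,j}$; if $i = t$ (so $j \neq t$), the same argument with~\eqref{eq_flow_in_distinct_equations} for $j,t$ gives $y_{t,t,j} = \psi(T)_{t,t,j}$. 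This proves $F = Q$, and the closing assertion follows in the same spirit: the integer feasible points of~\eqref{model_basic} are exactly $X$, since~\eqref{model_basic_travel_away_away}--\eqref{model_basic_travel_last} encode precisely $y \geq \psi(T)$ on integer points, so adding~\eqref{eq_flow_equations} leaves exactly the points $(\chi(T),\psi(T))$, whose convex hull is $Q$.

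The main obstacle is essentially bookkeeping rather than depth: one must make the ``leaves/enters each foreign venue exactly once'' claim watertight at the boundary slots $1$ and $2n-2$, and be explicit that passing to the face commutes with intersecting the generating set $X$ --- which is exactly the step that uses validity of~\eqref{eq_flow_distinct}, not merely of the equations~\eqref{eq_flow_equations}.
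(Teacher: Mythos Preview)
Your proof is correct and follows essentially the same route as the paper's: both directions rest on the observation that $\psi(T)$ satisfies~\eqref{eq_flow_equations} exactly, and the reverse inclusion is obtained by the same case split on whether $i\neq t$ or $j\neq t$ to force $y=\psi(T)$. The only cosmetic difference is that the paper argues via integrality of the vertices of a face of $\Ptt(n)$, whereas you invoke the equivalent fact $P\cap H=\conv(X\cap H)$ for faces; the content is identical.
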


\begin{proof}
  Let $Q$ be the polytope defined in the statement of the theorem.

  To see that $Q$ is contained in the mentioned face, let $T$ be a tournament.
  For each $i^\star,t^\star \in V$ with $i^\star \neq t^\star$, equation~\eqref{eq_flow_out_distinct_equations} is satisfied by $\psi(T)$ since team $t^\star$ has to play exactly one away match against team $i^\star$ after which it leaves this venue.
  Moreover, it never visits venue $i^\star$ again.
  Similarly, $\psi(T)$ satisfies all equations~\eqref{eq_flow_in_distinct_equations}.
  
  It remains to prove that every vertex $(x,y)$ of the face lies in $Q$.
  Since $\Ptt(n)$ is integral, all its faces are integral as well, and thus $(x,y) \in \{0,1\}^\allMatches \times \{0,1\}^{V \times A}$.
  The vector $x$ defines a tournament $T$ and and we have $y \geq \psi(T)$.
  We have to show $y = \psi(T)$.
  Consider an entry $(t,i,j) \in V \times A$.
  By $i \neq j$, we have $t \neq i$ or $t \neq j$.
  If $t \neq i$, then $y_{t,i,j}$ appears in equation~\eqref{eq_flow_out_distinct_equations} for $(i,t)$ and otherwise it appears in equation~\eqref{eq_flow_in_distinct_equations} for $(j,t)$.
  Since $y$ must be equal to $\psi(T)$ on the support of this equation, we have $y = \psi(T)$, which concludes the proof.
\end{proof}

\section{Problem variants}
\label{sec_problem_variants}

Since $\Ptt(n)$ only reflects the basic constraints for different variants of the traveling tournament problem, we briefly review the variants that occur in the literature.

\paragraph{Mirrored schedules.}
A common requirement is that of mirrored schedules.
Formally, we require that if in slot $k < n$, team $i$ plays home against team $j$, then in slot $k+n-1$, team $i$ plays \emph{away} against team $i$.
Note that the mirroring does not refer to the slots but to the home/away pattern.
This requirement can easily be enforced by adding 
\begin{equation}
  \label{eq_mirror}
  x_{k,i,j} = x_{k+n-1,j,i} \quad \forall k \in \{1,2,\dots,n-1\}, ~\forall (i,j) \in A
\end{equation}
to our model.

\paragraph{No-repeaters.}
In a double round-robin tournament it is often undesirable that the two matches $(k,i,j)$ and $(k',j,i)$ of the teams $i,j \in V$ take place directly after another, i.e., $k$ and $k'$ should not be subsequent numbers.
This can be enforced via the \emph{no-repeater constraints}
\begin{equation}
  \label{eq_no_repeaters}
  x_{k,i,j} + x_{k+1,j,i} \leq 1 \quad \forall k \in \{1,2,\dotsc,2n-3\}, ~\forall (i,j) \in A
\end{equation}

\paragraph{Short home stands and road trips.}
A \emph{home stand} is a set of consecutive matches of team $t \in V$ in which $t$ plays only home.
Similarly, a \emph{road trip} is a set of consecutive matches in which $t$ plays away.
Both such match sequences are undesirable in a tournament, e.g., in order to distribute the home matches of each team more evenly over the season.
For a given parameter $U \in \Z$, the length of home stands and road trips can be restricted to at most $U$ by adding the \emph{home stand} and \emph{road trip constraints}
\begin{subequations}
  \label{eq_homestand_roadtrip}
  \begin{align}
    \sum_{\ell=0}^U \sum_{i \in V \setminus \{t\}} x_{k+\ell,t,i} &\leq U &\quad& \forall k \in \{1,2,\dotsc,2n-2-U\}, ~\forall t \in V, \\
    \sum_{\ell=0}^U \sum_{i \in V \setminus \{t\}} x_{k+\ell,i,t} &\leq U &\quad& \forall k \in \{1,2,\dotsc,2n-2-U\}, ~\forall t \in V,
  \end{align}
\end{subequations}
respectively.
While this is sufficient for the correctness of the model, the requirement has a big effect on the amount of travel of team $t$.
More precisely, the $n-1$ home matches of each team $t$ have to be split into at least $(n-1)/U$ consecutive sequences that are disrupted by away matches.
This implies that team $t$ has to leave (resp.\ enter) its home venue at least this number of times.
The following \emph{home stand flow} and \emph{road trip flow inequalities}
\begin{subequations}
  \label{eq_homestand_roadtrip_flow}
  \begin{align}
    \sum_{j \in V \setminus \{t\}} y_{t,t,j} &\geq \left\lceil \frac{n-1}{U} \right\rceil \quad \forall t \in V \label{eq_homestand_flow} \\
    \sum_{i \in V \setminus \{t\}} y_{t,i,t} &\geq \left\lceil \frac{n-1}{U} \right\rceil \quad \forall t \in V \label{eq_roadtrip_flow}
  \end{align}
\end{subequations}
model this effect.
The fact that these inequalities actually strengthen the LP relaxation (and are thus not just implied by~\eqref{eq_homestand_roadtrip}) will become clear in the next section.

\section{Computational impact}
\label{sec_computations}

\DeclareDocumentCommand\mirrored{}{\Yinyang}
\DeclareDocumentCommand\eqFlowDistinctSame{}{\eqref{eq_flow_distinct}$_{i=t}$}

\begin{table}[htpb]
  \caption{%
    Sources of considered test instances.
    Instance files were obtained from \cite{ROBINXREPO}. 
  }%
  \label{tab_instance_sources}
  \begin{center}
    {\small{%
    \setlength{\tabcolsep}{2pt}
    \renewcommand{\arraystretch}{1.03}
    \begin{tabular}{l|l|l}
      \textbf{Class} & \textbf{Source} & \textbf{Description} \\ \hline
      NL$\langle n \rangle$ & \cite{EastonNT01} & Air distances of cities in National League of Major League Baseball \\
      SUP$\langle n \rangle$ & \cite{UthusRG09,UthusRG12} & Air distances of cities in Super 14 rugby cup \\
      GAL$\langle n \rangle$ & \cite{UthusRG09,UthusRG12} & Venues are exoplanet locations in 3D-space \\
      INCR $\langle n \rangle$ & \cite{HoshinoK12} & Venues are on straight line, increasing distance \\
      LINE$\langle n \rangle$ & \cite{HoshinoK12} & Venues are equidistant on straight line \\
      CIRC$\langle n \rangle$ & \cite{EastonNT01} & Venues are equidistant on circle \\
      CON$\langle n \rangle$ & \cite{UrrutiaR06} & Distance is constant
    \end{tabular}
    }}%
  \end{center}
\end{table}

\begin{table}[htpb]
  \caption{%
    Characteristics of considered test instances.
    Number of variables, constraints and nonzeros reflect these numbers after Gurobi's presolve.
  }
  \label{tab_instance_characteristics}
  \begin{center}
    {\small{%
    \setlength{\tabcolsep}{2pt}
    \renewcommand{\arraystretch}{1.03}
    \begin{tabular}{l|l|rrrr|rrrr}
      \textbf{Variant} & \textbf{Teams} & \multicolumn{4}{c|}{\textbf{Plain}} & \multicolumn{4}{c}{\textbf{Mirrored \mirrored}} \\ \hline
      \textbf{Base} & & \eqref{eq_no_repeaters}--\eqref{eq_homestand_roadtrip} & \eqref{eq_travel_away_away_lifted}--\eqref{eq_travel_away_home_lifted} & \eqref{eq_flow_distinct} & \eqref{eq_homestand_roadtrip_flow} & \eqref{eq_mirror}--\eqref{eq_homestand_roadtrip} & \eqref{eq_travel_away_away_lifted}--\eqref{eq_travel_away_home_lifted} & \eqref{eq_flow_distinct} & \eqref{eq_homestand_roadtrip_flow} \\ \hline
      \textbf{Parameter $U$} & $\{4,6,8\}$ & 3 & & &  & 3 & & & \\ \hline
      \textbf{Variables} & $4$ & 120 &&& & 84 &&&  \\
      \textbf{+ additional} & $6$ & 480 &&& & 330 &&&  \\
      & $8$ & 1232 &&& & 840 &&& \\ \hline
      \textbf{Constraints} & $4$ & 332 & +120 & +24 & +8 & 282 & +120 & +24 & +8 \\
      \textbf{+ additional}& $6$ & 1998 & +1080 & +60 & +12 & 1785 & +1134 & +60 & +12 \\
      & $8$ & 6664 & +5268 & +112 & +16 & 6132 & +7655 & +112 & +16 \\ \hline
      \textbf{Nonzeros} & $4$ & 1334 & +600 & +72 & +24 & 1116 & +600 & +72 & +24 \\
      & $6$ & 10260 & +5400 & +300 & +60 & 8790 & +5562 & +300 & +60 \\
      & $8$ & 35168 & +24540 & +784 & +112 & 30744 & +31701 & +784 & +112
    \end{tabular}
    }}%
  \end{center}
\end{table}

In this section we evaluate the addition of the inequalities that were discussed theoretically in a practical setting.
To this end, we implemented the IP models in Gurobi~9.5~\cite{GUROBI95} and assessed the impact for various instances\footnote{Our implementation can be obtained from github: \href{https://github.com/discopt/traveling-tournament-cubic}{github.com/discopt/traveling-tournament-cubic}.}.
We ran our experiments on an Intel Xeon Gold 5217 CPU with \SI{3.00}{\giga\hertz} with \SI{64}{\giga\byte} memory, on a single thread and with a time limit of 1\,hour.
Our testbed consists of instances that were used previously by~\cite{EastonNT01,RibeiroU07,UrrutiaR06,UthusRG09,UthusRG12}.
We made use of the RobinX instance repository~\cite{ROBINXREPO} and unified instance format~\cite{ROBINXFORMAT}.
The sources and characteristics of the instances are depicted in \cref{tab_instance_sources} and \cref{tab_instance_characteristics}, respectively.
We aggregated the counts for constraints and nonzeros for \eqref{eq_travel_away_away_lifted}--\eqref{eq_travel_away_home_lifted} since these all constitute lifted model inequalities.
It is easy to see that are actually quite many such inequalities, and their number is dominated by~\eqref{eq_travel_away_away_lifted} of which there exist $\orderTheta(n^4)$ many.

As can be seen from the tables, the integer programs neither have many variables nor many constraints.
Moreover, the instance are not particularly dense.
Nevertheless, the instances for $n=6$ (plain) and $n=8$ (plain and mirrored) are already hard to solve, which is why we do not report about computational results for larger problem sizes.

\begin{table}[htpb]
  \caption{%
    Quality of the LP bounds after adding different (sets of) constraints to the base model (see \cref{tab_instance_characteristics}) or after removal of (sets of) constraints from the full model, which is the base model augmented by~\eqref{eq_travel_away_away_lifted}-\eqref{eq_travel_away_home_lifted}, \eqref{eq_flow_home}, \eqref{eq_flow_equations} and \eqref{eq_homestand_roadtrip_flow}.
    The percentages indicate the ratio ``LP bound''/``best known primal solution''.
    \eqFlowDistinctSame{} indicates constraints~\eqref{eq_flow_distinct} for $i=t$.
    Mirrored instances are indicated via \mirrored{}.
  }%
  \label{tab_lp_bounds}
  \begin{center}
    {\small{%
    \setlength{\tabcolsep}{2pt}
    \renewcommand{\arraystretch}{1.03}
    \begin{tabular}{l|rrrrr|rrrr}
      \textbf{Instance} & \multicolumn{5}{c|}{\textbf{Addition of constraints}} & \multicolumn{4}{c}{\textbf{Removal of constraints}} \\
      & \textbf{Base} & \eqref{eq_travel_away_away_lifted}--\eqref{eq_travel_away_home_lifted} & \eqref{eq_flow_distinct},\eqFlowDistinctSame{} & \eqref{eq_flow_home},\eqref{eq_flow_equations} & \eqref{eq_homestand_roadtrip_flow} & \textbf{Full} & \eqref{eq_travel_away_away_lifted}--\eqref{eq_travel_away_home_lifted} & \eqref{eq_flow_home},\eqref{eq_flow_equations} & \eqref{eq_homestand_roadtrip_flow} \\ \hline
NL4 \mirrored{} & \SI{24.3}{\percent} & \SI{24.3}{\percent} & \SI{97.0}{\percent} & \SI{97.0}{\percent} & \SI{30.8}{\percent} & \SI{97.0}{\percent} & \SI{97.0}{\percent} & \SI{34.6}{\percent} & \SI{97.0}{\percent} \\
SUP4 \mirrored{} & \SI{24.9}{\percent} & \SI{24.9}{\percent} & \SI{41.0}{\percent} & \SI{41.0}{\percent} & \SI{28.3}{\percent} & \SI{41.0}{\percent} & \SI{41.0}{\percent} & \SI{28.3}{\percent} & \SI{41.0}{\percent} \\
GAL4 \mirrored{} & \SI{24.8}{\percent} & \SI{24.8}{\percent} & \SI{94.1}{\percent} & \SI{94.1}{\percent} & \SI{35.4}{\percent} & \SI{94.1}{\percent} & \SI{94.1}{\percent} & \SI{38.3}{\percent} & \SI{94.1}{\percent} \\
INCR4 \mirrored{} & \SI{25.0}{\percent} & \SI{25.0}{\percent} & \SI{77.1}{\percent} & \SI{77.1}{\percent} & \SI{35.4}{\percent} & \SI{77.1}{\percent} & \SI{77.1}{\percent} & \SI{37.5}{\percent} & \SI{77.1}{\percent} \\
LINE4 \mirrored{} & \SI{25.0}{\percent} & \SI{25.0}{\percent} & \SI{77.8}{\percent} & \SI{77.8}{\percent} & \SI{41.7}{\percent} & \SI{77.8}{\percent} & \SI{77.8}{\percent} & \SI{41.7}{\percent} & \SI{77.8}{\percent} \\
CIRC4 \mirrored{} & \SI{20.0}{\percent} & \SI{20.0}{\percent} & \SI{80.0}{\percent} & \SI{80.0}{\percent} & \SI{40.0}{\percent} & \SI{80.0}{\percent} & \SI{80.0}{\percent} & \SI{40.0}{\percent} & \SI{80.0}{\percent} \\
CON4 \mirrored{} & \SI{23.5}{\percent} & \SI{23.5}{\percent} & \SI{94.1}{\percent} & \SI{94.1}{\percent} & \SI{47.1}{\percent} & \SI{94.1}{\percent} & \SI{94.1}{\percent} & \SI{47.1}{\percent} & \SI{94.1}{\percent} \\
NL4 & \SI{24.2}{\percent} & \SI{24.2}{\percent} & \SI{96.9}{\percent} & \SI{96.9}{\percent} & \SI{30.4}{\percent} & \SI{96.9}{\percent} & \SI{96.9}{\percent} & \SI{32.6}{\percent} & \SI{96.9}{\percent} \\
SUP4 & \SI{5.2}{\percent} & \SI{5.2}{\percent} & \SI{20.9}{\percent} & \SI{20.9}{\percent} & \SI{10.4}{\percent} & \SI{20.9}{\percent} & \SI{20.9}{\percent} & \SI{10.4}{\percent} & \SI{20.9}{\percent} \\
GAL4 & \SI{22.6}{\percent} & \SI{22.6}{\percent} & \SI{90.4}{\percent} & \SI{90.4}{\percent} & \SI{35.0}{\percent} & \SI{90.4}{\percent} & \SI{90.4}{\percent} & \SI{36.7}{\percent} & \SI{90.4}{\percent} \\
INCR4 & \SI{16.7}{\percent} & \SI{16.7}{\percent} & \SI{66.7}{\percent} & \SI{66.7}{\percent} & \SI{31.3}{\percent} & \SI{66.7}{\percent} & \SI{66.7}{\percent} & \SI{31.3}{\percent} & \SI{66.7}{\percent} \\
LINE4 & \SI{16.7}{\percent} & \SI{16.7}{\percent} & \SI{66.7}{\percent} & \SI{66.7}{\percent} & \SI{33.3}{\percent} & \SI{66.7}{\percent} & \SI{66.7}{\percent} & \SI{33.3}{\percent} & \SI{66.7}{\percent} \\
CIRC4 & \SI{20.0}{\percent} & \SI{20.0}{\percent} & \SI{80.0}{\percent} & \SI{80.0}{\percent} & \SI{40.0}{\percent} & \SI{80.0}{\percent} & \SI{80.0}{\percent} & \SI{40.0}{\percent} & \SI{80.0}{\percent} \\
CON4 & \SI{23.5}{\percent} & \SI{23.5}{\percent} & \SI{94.1}{\percent} & \SI{94.1}{\percent} & \SI{47.1}{\percent} & \SI{94.1}{\percent} & \SI{94.1}{\percent} & \SI{47.1}{\percent} & \SI{94.1}{\percent} \\
\hline
NL6 \mirrored{} & \SI{11.0}{\percent} & \SI{11.0}{\percent} & \SI{53.2}{\percent} & \SI{53.2}{\percent} & \SI{30.1}{\percent} & \SI{65.5}{\percent} & \SI{65.5}{\percent} & \SI{30.7}{\percent} & \SI{53.2}{\percent} \\
SUP6 \mirrored{} & \SI{10.8}{\percent} & \SI{10.8}{\percent} & \SI{14.1}{\percent} & \SI{14.1}{\percent} & \SI{12.6}{\percent} & \SI{29.1}{\percent} & \SI{29.1}{\percent} & \SI{12.6}{\percent} & \SI{14.1}{\percent} \\
GAL6 \mirrored{} & \SI{11.3}{\percent} & \SI{11.3}{\percent} & \SI{65.1}{\percent} & \SI{65.1}{\percent} & \SI{35.6}{\percent} & \SI{77.2}{\percent} & \SI{77.2}{\percent} & \SI{36.1}{\percent} & \SI{65.1}{\percent} \\
INCR6 \mirrored{} & \SI{9.0}{\percent} & \SI{9.0}{\percent} & \SI{44.2}{\percent} & \SI{44.2}{\percent} & \SI{26.0}{\percent} & \SI{56.7}{\percent} & \SI{56.7}{\percent} & \SI{26.7}{\percent} & \SI{45.0}{\percent} \\
LINE6 \mirrored{} & \SI{8.9}{\percent} & \SI{8.9}{\percent} & \SI{44.6}{\percent} & \SI{44.6}{\percent} & \SI{28.9}{\percent} & \SI{57.8}{\percent} & \SI{57.8}{\percent} & \SI{28.9}{\percent} & \SI{45.2}{\percent} \\
CIRC6 \mirrored{} & \SI{8.3}{\percent} & \SI{8.3}{\percent} & \SI{50.0}{\percent} & \SI{50.0}{\percent} & \SI{33.3}{\percent} & \SI{66.7}{\percent} & \SI{66.7}{\percent} & \SI{33.3}{\percent} & \SI{50.0}{\percent} \\
CON6 \mirrored{} & \SI{12.5}{\percent} & \SI{12.5}{\percent} & \SI{75.0}{\percent} & \SI{75.0}{\percent} & \SI{50.0}{\percent} & \SI{87.5}{\percent} & \SI{87.5}{\percent} & \SI{50.0}{\percent} & \SI{75.0}{\percent} \\
NL6 & \SI{9.1}{\percent} & \SI{9.1}{\percent} & \SI{54.8}{\percent} & \SI{54.8}{\percent} & \SI{32.1}{\percent} & \SI{72.8}{\percent} & \SI{72.8}{\percent} & \SI{32.1}{\percent} & \SI{54.8}{\percent} \\
SUP6 & \SI{0.7}{\percent} & \SI{0.7}{\percent} & \SI{4.2}{\percent} & \SI{4.2}{\percent} & \SI{2.8}{\percent} & \SI{32.9}{\percent} & \SI{32.9}{\percent} & \SI{2.8}{\percent} & \SI{4.2}{\percent} \\
GAL6 & \SI{12.0}{\percent} & \SI{12.0}{\percent} & \SI{72.1}{\percent} & \SI{72.1}{\percent} & \SI{39.8}{\percent} & \SI{87.3}{\percent} & \SI{87.3}{\percent} & \SI{39.8}{\percent} & \SI{72.1}{\percent} \\
INCR6 & \SI{7.9}{\percent} & \SI{7.9}{\percent} & \SI{47.4}{\percent} & \SI{47.4}{\percent} & \SI{28.9}{\percent} & \SI{66.7}{\percent} & \SI{66.7}{\percent} & \SI{28.9}{\percent} & \SI{47.4}{\percent} \\
LINE6 & \SI{7.9}{\percent} & \SI{7.9}{\percent} & \SI{47.4}{\percent} & \SI{47.4}{\percent} & \SI{31.6}{\percent} & \SI{68.4}{\percent} & \SI{68.4}{\percent} & \SI{31.6}{\percent} & \SI{47.4}{\percent} \\
CIRC6 & \SI{9.4}{\percent} & \SI{9.4}{\percent} & \SI{56.3}{\percent} & \SI{56.3}{\percent} & \SI{37.5}{\percent} & \SI{75.0}{\percent} & \SI{75.0}{\percent} & \SI{37.5}{\percent} & \SI{56.3}{\percent} \\
CON6 & \SI{14.0}{\percent} & \SI{14.0}{\percent} & \SI{83.7}{\percent} & \SI{83.7}{\percent} & \SI{55.8}{\percent} & \SI{97.7}{\percent} & \SI{97.7}{\percent} & \SI{55.8}{\percent} & \SI{83.7}{\percent} \\
\hline
NL8 \mirrored{} & \SI{8.2}{\percent} & \SI{8.2}{\percent} & \SI{53.8}{\percent} & \SI{53.8}{\percent} & \SI{33.3}{\percent} & \SI{76.1}{\percent} & \SI{76.1}{\percent} & \SI{33.8}{\percent} & \SI{53.8}{\percent} \\
SUP8 \mirrored{} & \SI{1.7}{\percent} & \SI{1.7}{\percent} & \SI{7.2}{\percent} & \SI{7.2}{\percent} & \SI{4.1}{\percent} & \SI{32.1}{\percent} & \SI{32.1}{\percent} & \SI{4.1}{\percent} & \SI{7.2}{\percent} \\
GAL8 \mirrored{} & \SI{7.9}{\percent} & \SI{7.9}{\percent} & \SI{49.6}{\percent} & \SI{49.6}{\percent} & \SI{31.4}{\percent} & \SI{71.9}{\percent} & \SI{71.9}{\percent} & \SI{31.5}{\percent} & \SI{49.6}{\percent} \\
INCR8 \mirrored{} & \SI{6.0}{\percent} & \SI{6.0}{\percent} & \SI{37.2}{\percent} & \SI{37.2}{\percent} & \SI{25.0}{\percent} & \SI{56.4}{\percent} & \SI{56.4}{\percent} & \SI{25.3}{\percent} & \SI{37.3}{\percent} \\
LINE8 \mirrored{} & \SI{6.0}{\percent} & \SI{6.0}{\percent} & \SI{37.7}{\percent} & \SI{37.7}{\percent} & \SI{27.7}{\percent} & \SI{56.5}{\percent} & \SI{56.5}{\percent} & \SI{27.7}{\percent} & \SI{37.7}{\percent} \\
CIRC8 \mirrored{} & \SI{5.7}{\percent} & \SI{5.7}{\percent} & \SI{45.7}{\percent} & \SI{45.7}{\percent} & \SI{34.3}{\percent} & \SI{68.6}{\percent} & \SI{68.6}{\percent} & \SI{34.3}{\percent} & \SI{45.7}{\percent} \\
CON8 \mirrored{} & \SI{10.0}{\percent} & \SI{10.0}{\percent} & \SI{80.0}{\percent} & \SI{80.0}{\percent} & \SI{60.0}{\percent} & \SI{100.0}{\percent} & \SI{100.0}{\percent} & \SI{60.0}{\percent} & \SI{80.0}{\percent} \\
NL8 & \SI{6.8}{\percent} & \SI{6.8}{\percent} & \SI{54.1}{\percent} & \SI{54.1}{\percent} & \SI{34.4}{\percent} & \SI{80.4}{\percent} & \SI{80.4}{\percent} & \SI{34.4}{\percent} & \SI{54.1}{\percent} \\
SUP8 & \SI{1.0}{\percent} & \SI{1.0}{\percent} & \SI{7.7}{\percent} & \SI{7.7}{\percent} & \SI{3.9}{\percent} & \SI{39.8}{\percent} & \SI{39.8}{\percent} & \SI{3.9}{\percent} & \SI{7.7}{\percent} \\
GAL8 & \SI{6.5}{\percent} & \SI{6.5}{\percent} & \SI{52.3}{\percent} & \SI{52.3}{\percent} & \SI{32.5}{\percent} & \SI{78.8}{\percent} & \SI{78.8}{\percent} & \SI{32.5}{\percent} & \SI{52.3}{\percent} \\
INCR8 & \SI{5.1}{\percent} & \SI{5.1}{\percent} & \SI{41.0}{\percent} & \SI{41.0}{\percent} & \SI{28.4}{\percent} & \SI{66.7}{\percent} & \SI{66.7}{\percent} & \SI{28.4}{\percent} & \SI{41.0}{\percent} \\
LINE8 & \SI{4.9}{\percent} & \SI{4.9}{\percent} & \SI{39.5}{\percent} & \SI{39.5}{\percent} & \SI{29.6}{\percent} & \SI{64.2}{\percent} & \SI{64.2}{\percent} & \SI{29.6}{\percent} & \SI{39.5}{\percent} \\
CIRC8 & \SI{6.1}{\percent} & \SI{6.1}{\percent} & \SI{48.5}{\percent} & \SI{48.5}{\percent} & \SI{36.4}{\percent} & \SI{72.7}{\percent} & \SI{72.7}{\percent} & \SI{36.4}{\percent} & \SI{48.5}{\percent} \\
CON8 & \SI{10.0}{\percent} & \SI{10.0}{\percent} & \SI{80.0}{\percent} & \SI{80.0}{\percent} & \SI{60.0}{\percent} & \SI{100.0}{\percent} & \SI{100.0}{\percent} & \SI{60.0}{\percent} & \SI{80.0}{\percent} \\
    \end{tabular}
    }}%
  \end{center}
\end{table}

\paragraph{LP bounds.}
Let us consider the quality of the LP relaxations that we obtain after adding the additional inequalities.
\cref{tab_lp_bounds} depicts these values relative to the corresponding best known solution value.
The latter were taken from the the RobinX instance repository~\cite{ROBINXREPO}.
While the left part of the table is about the improvement after adding a certain class of inequalities, the right part shows what happens if we remove such a class from the strongest possible model, which is the base model augmented by~\eqref{eq_travel_away_away_lifted}--\eqref{eq_travel_away_home_lifted}, \eqref{eq_flow_home}, \eqref{eq_flow_equations} and \eqref{eq_homestand_roadtrip_flow}.
Note that~\eqref{eq_flow_distinct} is not considered in this full model since these inequalities are implied by~\eqref{eq_flow_equations} for $i \neq t$ and by~\eqref{eq_flow_home} for $i = t$.

First, note that adding or removing the lifted versions \eqref{eq_travel_away_away_lifted}--\eqref{eq_travel_away_home_lifted} of the model inequalites does not affect the LP bounds.
This matches the theoretical observation that the dimensions of the respective faces are already quite high, i.e., the model inequalities are almost facet defining.
Similarly, it does not matter whether we add the flow inequalities~\eqref{eq_flow_distinct} or the flow equations~\eqref{eq_flow_equations} for $i \neq t$ or whether we add the flow inequalities~\eqref{eq_flow_distinct} for $i=t$ or their strengthened version, the home-flow inequalities~\eqref{eq_flow_home}.
Flow inequalites themselves clearly have the biggest impact, but also the home stand flow and road trip flow inequalities~\eqref{eq_homestand_roadtrip_flow} are quite useful.
Together, they already provide the best LP bounds that we can obtain with all our proposed inequalities.

\begin{table}[htpb]
  \caption{%
    Quality of the IP bounds after adding different (sets of) constraints to the base model (see \cref{tab_instance_characteristics}) or after removal of (sets of) constraints from the full model, which is the base model augmented by~\eqref{eq_travel_away_away_lifted}-\eqref{eq_travel_away_home_lifted}, \eqref{eq_flow_home}, \eqref{eq_flow_equations} and \eqref{eq_homestand_roadtrip_flow}.
    The percentages indicate the ratio ``IP bound''/``best known primal solution''.
    \eqFlowDistinctSame{} indicates constraints~\eqref{eq_flow_distinct} for $i=t$.
    Mirrored instances are indicated via \mirrored{}.
  }%
  \label{tab_ip_bounds}
  \begin{center}
    {\small{%
    \setlength{\tabcolsep}{2pt}
    \renewcommand{\arraystretch}{1.03}
    \begin{tabular}{l|rrrrr|rrrr}
      \textbf{Instance} & \multicolumn{5}{c|}{\textbf{Addition of constraints}} & \multicolumn{4}{c}{\textbf{Removal of constraints}} \\
      & \textbf{Base} & \eqref{eq_travel_away_away_lifted}--\eqref{eq_travel_away_home_lifted} & \eqref{eq_flow_distinct},\eqFlowDistinctSame{} & \eqref{eq_flow_home},\eqref{eq_flow_equations} & \eqref{eq_homestand_roadtrip_flow} & \textbf{Full} & \eqref{eq_travel_away_away_lifted}--\eqref{eq_travel_away_home_lifted} & \eqref{eq_flow_home},\eqref{eq_flow_equations} & \eqref{eq_homestand_roadtrip_flow} \\ \hline
NL6 \mirrored{} & \SI{100.0}{\percent} & \SI{100.0}{\percent} & \SI{100.0}{\percent} & \SI{100.0}{\percent} & \SI{100.0}{\percent} & \SI{100.0}{\percent} & \SI{100.0}{\percent} & \SI{100.0}{\percent} & \SI{100.0}{\percent} \\
SUP6 \mirrored{} & \SI{100.0}{\percent} & \SI{100.0}{\percent} & \SI{100.0}{\percent} & \SI{100.0}{\percent} & \SI{100.0}{\percent} & \SI{100.0}{\percent} & \SI{100.0}{\percent} & \SI{100.0}{\percent} & \SI{100.0}{\percent} \\
GAL6 \mirrored{} & \SI{100.0}{\percent} & \SI{100.0}{\percent} & \SI{100.0}{\percent} & \SI{100.0}{\percent} & \SI{75.9}{\percent} & \SI{100.0}{\percent} & \SI{95.0}{\percent} & \SI{85.3}{\percent} & \SI{100.0}{\percent} \\
INCR6 \mirrored{} & \SI{100.0}{\percent} & \SI{100.0}{\percent} & \SI{100.0}{\percent} & \SI{100.0}{\percent} & \SI{100.0}{\percent} & \SI{100.0}{\percent} & \SI{100.0}{\percent} & \SI{100.0}{\percent} & \SI{100.0}{\percent} \\
LINE6 \mirrored{} & \SI{100.0}{\percent} & \SI{100.0}{\percent} & \SI{100.0}{\percent} & \SI{100.0}{\percent} & \SI{100.0}{\percent} & \SI{100.0}{\percent} & \SI{100.0}{\percent} & \SI{100.0}{\percent} & \SI{100.0}{\percent} \\
CIRC6 \mirrored{} & \SI{100.0}{\percent} & \SI{100.0}{\percent} & \SI{100.0}{\percent} & \SI{100.0}{\percent} & \SI{100.0}{\percent} & \SI{100.0}{\percent} & \SI{100.0}{\percent} & \SI{100.0}{\percent} & \SI{100.0}{\percent} \\
CON6 \mirrored{} & \SI{100.0}{\percent} & \SI{100.0}{\percent} & \SI{100.0}{\percent} & \SI{100.0}{\percent} & \SI{100.0}{\percent} & \SI{100.0}{\percent} & \SI{100.0}{\percent} & \SI{100.0}{\percent} & \SI{100.0}{\percent} \\
NL6 & \SI{42.7}{\percent} & \SI{40.3}{\percent} & \SI{70.3}{\percent} & \SI{69.8}{\percent} & \SI{63.9}{\percent} & \SI{83.6}{\percent} & \SI{84.1}{\percent} & \SI{57.9}{\percent} & \SI{69.0}{\percent} \\
SUP6 & \SI{35.2}{\percent} & \SI{36.5}{\percent} & \SI{37.7}{\percent} & \SI{45.6}{\percent} & \SI{45.0}{\percent} & \SI{42.4}{\percent} & \SI{47.8}{\percent} & \SI{46.7}{\percent} & \SI{41.3}{\percent} \\
GAL6 & \SI{44.2}{\percent} & \SI{39.9}{\percent} & \SI{76.3}{\percent} & \SI{77.4}{\percent} & \SI{66.1}{\percent} & \SI{89.0}{\percent} & \SI{88.7}{\percent} & \SI{65.6}{\percent} & \SI{78.3}{\percent} \\
INCR6 & \SI{45.2}{\percent} & \SI{42.1}{\percent} & \SI{66.2}{\percent} & \SI{65.8}{\percent} & \SI{59.2}{\percent} & \SI{70.6}{\percent} & \SI{72.8}{\percent} & \SI{62.3}{\percent} & \SI{66.2}{\percent} \\
LINE6 & \SI{53.9}{\percent} & \SI{46.1}{\percent} & \SI{67.1}{\percent} & \SI{67.1}{\percent} & \SI{57.9}{\percent} & \SI{73.7}{\percent} & \SI{76.3}{\percent} & \SI{59.2}{\percent} & \SI{68.4}{\percent} \\
CIRC6 & \SI{62.5}{\percent} & \SI{40.6}{\percent} & \SI{64.1}{\percent} & \SI{75.0}{\percent} & \SI{75.0}{\percent} & \SI{78.1}{\percent} & \SI{76.6}{\percent} & \SI{59.4}{\percent} & \SI{75.0}{\percent} \\
CON6 & \SI{88.4}{\percent} & \SI{83.7}{\percent} & \SI{88.4}{\percent} & \SI{83.7}{\percent} & \SI{100.0}{\percent} & \SI{97.7}{\percent} & \SI{97.7}{\percent} & \SI{74.4}{\percent} & \SI{86.0}{\percent} \\
\hline
NL8 \mirrored{} & \SI{29.5}{\percent} & \SI{19.8}{\percent} & \SI{58.8}{\percent} & \SI{59.4}{\percent} & \SI{61.3}{\percent} & \SI{77.9}{\percent} & \SI{81.3}{\percent} & \SI{57.8}{\percent} & \SI{60.9}{\percent} \\
SUP8 \mirrored{} & \SI{11.6}{\percent} & \SI{10.9}{\percent} & \SI{18.3}{\percent} & \SI{20.3}{\percent} & \SI{32.1}{\percent} & \SI{32.1}{\percent} & \SI{37.4}{\percent} & \SI{34.1}{\percent} & \SI{15.5}{\percent} \\
GAL8 \mirrored{} & \SI{27.3}{\percent} & \SI{19.2}{\percent} & \SI{54.2}{\percent} & \SI{54.8}{\percent} & \SI{59.2}{\percent} & \SI{76.9}{\percent} & \SI{76.8}{\percent} & \SI{57.0}{\percent} & \SI{52.9}{\percent} \\
INCR8 \mirrored{} & \SI{22.2}{\percent} & \SI{18.3}{\percent} & \SI{44.6}{\percent} & \SI{44.6}{\percent} & \SI{47.6}{\percent} & \SI{63.8}{\percent} & \SI{61.1}{\percent} & \SI{45.8}{\percent} & \SI{43.1}{\percent} \\
LINE8 \mirrored{} & \SI{22.3}{\percent} & \SI{18.5}{\percent} & \SI{44.0}{\percent} & \SI{46.7}{\percent} & \SI{47.3}{\percent} & \SI{60.9}{\percent} & \SI{62.0}{\percent} & \SI{44.6}{\percent} & \SI{42.9}{\percent} \\
CIRC8 \mirrored{} & \SI{46.4}{\percent} & \SI{30.7}{\percent} & \SI{51.4}{\percent} & \SI{51.4}{\percent} & \SI{62.9}{\percent} & \SI{70.7}{\percent} & \SI{70.7}{\percent} & \SI{50.7}{\percent} & \SI{51.4}{\percent} \\
CON8 \mirrored{} & \SI{57.5}{\percent} & \SI{37.5}{\percent} & \SI{86.3}{\percent} & \SI{81.3}{\percent} & \SI{100.0}{\percent} & \SI{100.0}{\percent} & \SI{100.0}{\percent} & \SI{85.0}{\percent} & \SI{81.3}{\percent} \\
NL8 & \SI{18.0}{\percent} & \SI{11.9}{\percent} & \SI{56.2}{\percent} & \SI{55.9}{\percent} & \SI{60.2}{\percent} & \SI{80.4}{\percent} & \SI{80.4}{\percent} & \SI{58.4}{\percent} & \SI{55.6}{\percent} \\
SUP8 & \SI{9.2}{\percent} & \SI{8.8}{\percent} & \SI{15.4}{\percent} & \SI{15.3}{\percent} & \SI{35.9}{\percent} & \SI{39.8}{\percent} & \SI{39.8}{\percent} & \SI{35.9}{\percent} & \SI{15.2}{\percent} \\
GAL8 & \SI{15.8}{\percent} & \SI{12.6}{\percent} & \SI{55.0}{\percent} & \SI{55.7}{\percent} & \SI{55.2}{\percent} & \SI{78.8}{\percent} & \SI{78.8}{\percent} & \SI{55.2}{\percent} & \SI{54.4}{\percent} \\
INCR8 & \SI{15.1}{\percent} & \SI{11.1}{\percent} & \SI{43.4}{\percent} & \SI{44.9}{\percent} & \SI{43.6}{\percent} & \SI{66.7}{\percent} & \SI{66.7}{\percent} & \SI{43.6}{\percent} & \SI{44.6}{\percent} \\
LINE8 & \SI{26.5}{\percent} & \SI{14.2}{\percent} & \SI{42.6}{\percent} & \SI{44.4}{\percent} & \SI{63.0}{\percent} & \SI{64.2}{\percent} & \SI{64.2}{\percent} & \SI{44.4}{\percent} & \SI{42.6}{\percent} \\
CIRC8 & \SI{39.4}{\percent} & \SI{24.2}{\percent} & \SI{48.5}{\percent} & \SI{49.2}{\percent} & \SI{69.7}{\percent} & \SI{72.7}{\percent} & \SI{72.7}{\percent} & \SI{57.6}{\percent} & \SI{48.5}{\percent} \\
CON8 & \SI{80.0}{\percent} & \SI{32.5}{\percent} & \SI{80.0}{\percent} & \SI{80.0}{\percent} & \SI{95.0}{\percent} & \SI{100.0}{\percent} & \SI{100.0}{\percent} & \SI{62.5}{\percent} & \SI{80.0}{\percent} \\
    \end{tabular}
    }}%
  \end{center}
\end{table}

\begin{table}[htpb]
  \caption{%
    Number of branch-and-bound nodes after adding different (sets of) constraints to the base model (see \cref{tab_instance_characteristics}) or after removal of (sets of) constraints from the full model, which is the base model augmented by~\eqref{eq_travel_away_away_lifted}-\eqref{eq_travel_away_home_lifted}, \eqref{eq_flow_home}, \eqref{eq_flow_equations} and \eqref{eq_homestand_roadtrip_flow}.
    Node numbers are given as multiples of $1000$.
    \eqFlowDistinctSame{} indicates constraints~\eqref{eq_flow_distinct} for $i=t$.
    Mirrored instances are indicated via \mirrored{}.
  }%
  \label{tab_ip_nodes}
  \begin{center}
    {\small{%
    \setlength{\tabcolsep}{2pt}
    \renewcommand{\arraystretch}{1.03}
    \begin{tabular}{l|rrrrr|rrrr}
      \textbf{Instance} & \multicolumn{5}{c|}{\textbf{Addition of constraints}} & \multicolumn{4}{c}{\textbf{Removal of constraints}} \\
      & \textbf{Base} & \eqref{eq_travel_away_away_lifted}--\eqref{eq_travel_away_home_lifted} & \eqref{eq_flow_distinct},\eqFlowDistinctSame{} & \eqref{eq_flow_home},\eqref{eq_flow_equations} & \eqref{eq_homestand_roadtrip_flow} & \textbf{Full} & \eqref{eq_travel_away_away_lifted}--\eqref{eq_travel_away_home_lifted} & \eqref{eq_flow_home},\eqref{eq_flow_equations} & \eqref{eq_homestand_roadtrip_flow} \\ \hline
NL6 \mirrored{} & 202 & 161 & 190 & 69 & 345 & 39 & 57 & 169 & 53 \\
SUP6 \mirrored{} & 78 & 96 & 117 & 41 & 74 & 35 & 12 & 79 & 40 \\
GAL6 \mirrored{} & 248 & 275 & 343 & 282 & 270 & 149 & 301 & 520 & 196 \\
INCR6 \mirrored{} & 140 & 137 & 80 & 90 & 131 & 17 & 58 & 120 & 56 \\
LINE6 \mirrored{} & 96 & 183 & 86 & 53 & 122 & 70 & 19 & 87 & 11 \\
CIRC6 \mirrored{} & 80 & 119 & 18 & 9 & 354 & 87 & 26 & 234 & 9 \\
CON6 \mirrored{} & 8 & 12 & 2 & 46 & 3 & 208 & 104 & 93 & 30 \\
NL6 & 54 & 51 & 101 & 81 & 105 & 100 & 145 & 158 & 104 \\
SUP6 & 64 & 54 & 87 & 72 & 301 & 184 & 303 & 91 & 54 \\
GAL6 & 56 & 49 & 103 & 63 & 94 & 169 & 274 & 94 & 71 \\
INCR6 & 71 & 65 & 125 & 78 & 86 & 109 & 270 & 84 & 73 \\
LINE6 & 68 & 52 & 133 & 110 & 328 & 234 & 396 & 220 & 70 \\
CIRC6 & 65 & 120 & 329 & 362 & 211 & 227 & 353 & 175 & 252 \\
CON6 & 133 & 133 & 545 & 176 & 96 & 172 & 307 & 168 & 117 \\
\hline
NL8 \mirrored{} & 27 & 14 & 24 & 24 & 34 & 22 & 79 & 20 & 13 \\
SUP8 \mirrored{} & 24 & 19 & 30 & 25 & 37 & 32 & 51 & 16 & 19 \\
GAL8 \mirrored{} & 27 & 17 & 30 & 23 & 31 & 17 & 27 & 25 & 17 \\
INCR8 \mirrored{} & 23 & 17 & 26 & 33 & 29 & 19 & 75 & 21 & 16 \\
LINE8 \mirrored{} & 22 & 16 & 49 & 26 & 89 & 29 & 83 & 35 & 15 \\
CIRC8 \mirrored{} & 25 & 16 & 117 & 84 & 128 & 49 & 120 & 47 & 14 \\
CON8 \mirrored{} & 37 & 9 & 92 & 49 & 14 & 12 & 45 & 41 & 43 \\
NL8 & 12 & 8 & 11 & 10 & 15 & 16 & 27 & 12 & 6 \\
SUP8 & 10 & 6 & 12 & 10 & 35 & 16 & 31 & 23 & 8 \\
GAL8 & 11 & 10 & 11 & 9 & 43 & 16 & 26 & 23 & 7 \\
INCR8 & 11 & 9 & 10 & 9 & 33 & 14 & 29 & 20 & 4 \\
LINE8 & 10 & 10 & 11 & 12 & 18 & 15 & 43 & 20 & 6 \\
CIRC8 & 12 & 7 & 22 & 17 & 27 & 18 & 47 & 12 & 10 \\
CON8 & 22 & 10 & 40 & 22 & 51 & 34 & 65 & 22 & 9 \\
    \end{tabular}
    }}%
  \end{center}
\end{table}

\paragraph{IP bounds.}
We now investigate the bounds that the IP solver can obtain, again compared to the best known solution values.
It turned out that within the time limit of 1\, hour all configurations could solve the instances with $n=4$.
Hence, in \cref{tab_ip_bounds} we only report about the instances with $n \in \{6,8\}$.

We first observe that almost all configurations could solve the mirrored instances with $n=6$.
One exception stands out, namely the mirrored version of GAL6 which could be solved without any strengthening, but could not be solved when adding~\eqref{eq_homestand_roadtrip_flow}.
We inspected the computations more closesly, but could not find a proper reason: in both runs, Gurobi generated between 3000 and 3500 cutting planes of roughly the same types.
The bound at the end of the root node was also much worse for the successful run, which was expected because inequalities~\eqref{eq_homestand_roadtrip_flow} significantly improve the bound.
Moreover, in the unsuccessful run, only slightly fewer branch-and-bound nodes were processed.
Hence, we can only guess that the reason might be a badly chosen branching strategy which was triggered once these cutting planes were added.

Let us turn to the other instances.
Here, one observes that the addition or removal of~\eqref{eq_travel_away_away_lifted}--\eqref{eq_travel_away_home_lifted} does not have much impact on the bound after executing 1\,hour of branch-and-cut.
The same holds for the addition of the flow inequalities~\eqref{eq_flow_distinct} or the flow equations~\eqref{eq_flow_equations} for $i \neq t$ or the addition of the flow inequalities~\eqref{eq_flow_distinct} for $i=t$ or their strengthened version, the home-flow inequalities~\eqref{eq_flow_home}.
Moreover, the positive impact of \eqref{eq_flow_distinct} or of~\eqref{eq_homestand_roadtrip_flow} manifests itself also in the quality of the bounds obtained during branch-and-bound.
By comparing, for $n=8$, the bounds from \cref{tab_ip_bounds} with those in \cref{tab_lp_bounds} one observes that the numbers do not differ much.
This means that one hour of computation time does not close much integrality gap beyond that of the initial LP relaxation.
This gives rise to the question whether this may be due to large processing times per branch-and-bound node.

\paragraph{Number of branch-and-bound nodes.}
Consider \cref{tab_ip_nodes} where the number of processed nodes is depicted.
We observe that in some cases it takes almost a minute to process a branch-and-bound node, which however still yields several thousand processed nodes in total.
However, the bound improvement due to branching is apparently rather low.
In addition, we can easily see that constraints~\eqref{eq_travel_away_away_lifted}--\eqref{eq_travel_away_home_lifted} reduce the node throughput significantly, which yields a worse performance in terms of closed gap.
The most likely reason for this slowdown is the sheer amount of such cutting planes (see \cref{tab_instance_characteristics}).
This clearly explains why the bounds obtained after one hour by the full model without~\eqref{eq_travel_away_away_lifted}--\eqref{eq_travel_away_home_lifted} are better than if one includes these constraints.

\paragraph{Conclusions.}
We conclude our work by observing that, from a practical point of view, adding strong inequalities to the base model pays off only in some cases.
In particular, the addition of inequalities that are lifted versions of already existing inequalities (whose faces only have a slightly higher dimension) is not helpful.
In contrast to this, new inequalities, i.e., those for which no similar inequality already exists, contribute a lot to the closed gap.
Moreover, since the processing of branch-and-bound nodes takes a nontrivial amount of time for our model, branch-and-bound itself does not suffice to close a lot of remaining gap.
This means that one must first improve the LP relaxation in a different fashion, e.g., via reformulation approaches (see, e.g.,~\cite{EastonNT03}) or by finding more classes of cutting planes.
Since our lifted inequalities~\eqref{eq_travel_away_away_lifted} slowed down the node processing times, we also propose to investigate whether the lazy generation of constraints~\eqref{model_basic_travel_away_away} could improve the node throughput.

\paragraph{Acknowledgements.}
The authors are grateful to the anonymous reviewer whose comments led to improvements of this manuscript, in particular for suggesting more extensive computational investigations.

\bibliographystyle{plain}
\bibliography{traveling-tournament}

\appendix
\section{Tournaments for facet proofs}
\label{sec_tournaments}
 
\subsection{Tournaments for Theorem~\ref*{thm_dim}}

\thmDimNoTravel*

\begin{proof}
  If $t = i$, let $i' \in V \setminus \{i,j,t\}$ and $j' \coloneqq j$.
  Otherwise, let $i' \coloneqq i$ and $j' \in V \setminus \{i,j,t\}$.
  Note that in either case $i'$, $j'$ and $t$ are distinct.
  We construct tournament $T$ from a canonical factorization by permuting slots and teams such that $(1,i',t), (2,j',t) \in T$.
  Hence, team $t$ travels from venue $i'$ to venue $j'$, which implies that team $t$ never travels from venue $i$ to venue $j$ since exactly one of the teams $i'$, $j'$ is equal to its counterpart $i$, $j$.
\end{proof}

\thmDimHomeAway*

\begin{proof}
  We construct tournament $T$ from a canonical factorization by permuting slots and teams such that $(1,i,j), (k,j,i) \in T$.
  Tournament $T'$ is obtained from $T$ by~\eqrefHomeAwaySwapDefault{}.
\end{proof}

\thmDimPartialSlot*

\begin{proof}
  Let $M_\ell$ for all $\ell \in S$ be the perfect matchings of the canonical factorization.
  By permuting teams we can assume $\{i,j\}, \{i',j'\} \in M_1 = M_n$.
  We now exchange the roles of teams $j$ and $j'$ only in perfect matchings $M_n$, $M_{n+1}$, \dots, $M_{2n-2}$, which maintains the property that each edge appears in exactly two perfect matchings.
  Tournament $T$ is obtained by orienting the edges in a complementary fashion and permuting slots such that
  $(1,i,j), (1,i',j'), (k,i,j'), (k,i',j) \in T$.

  Finally, tournament $T'$ is constructed from $T$ by~\eqrefPartialSlotSwapDefault{}.
\end{proof}

\subsection{Tournaments for Theorem~\ref*{thm_nonneg}}

For the claims in the proof of \cref{thm_nonneg}, we are given a particular match $m^\star = (k^\star, i^\star, j^\star) = (2,4,3) \in \allMatches$.

\thmNonnegNoTravel*

\begin{proof}
  Let tournament $T$ be as constructed in the proof of \cref{thm_dim_no_travel}.
  If $m^\star \in T$, we apply a cyclic permutation of the slots, mapping slot $k$ to $k+1$ for $k \in S \setminus \{2n-2\}$ and slot $2n-2$ to $1$.
  This preserves the second requirement and establishes $m^\star \notin T$.
\end{proof}

\thmNonnegHomeAway*

\begin{proof}
  We construct tournament $T''$ from a canonical factorization by permuting slots and teams such that $(1,i,j), (k,j,i) \in T''$.
  If $m^\star \notin T''$, let $T \coloneqq T''$.

  Otherwise, if $k^\star \neq k$, then let $k' \in S$ be such that $(k',j^\star,i^\star) \in T''$.
  Tournament $T$ is obtained from $T''$ by~\eqrefHomeAwaySwap{$k^\star$}{$k'$}{$i^\star$}{$j^\star$}.
  Due to $k^\star \neq k$ and $k^\star \neq 1$, we have $(1,i,j), (k,j,i) \in T''$, but $m^\star \notin T$, and hence $T$ satisfies all requirements.

  Otherwise, $k^\star = k$ and $\{i,j\} \neq \{i^\star,j^\star\}$ hold.
  Together with $(k,j,i),(k^\star,i^\star,j^\star) \in T''$ this implies that $i,j,i^\star,j^\star$ must be distinct.
  Again, let $k' \in S$ be such that $(k',j^\star,i^\star) \in T''$ and construct $T$ from $T''$ by~\eqrefHomeAwaySwap{$k^\star$}{$k'$}{$i^\star$}{$j^\star$}.
  Since $i,j,i^\star,j^\star$ are distinct, we also have $(1,i,j), (k,j,i) \in T$, but $m^\star \notin T$ in this case, and hence $T$ satisfies all requirements.

  Finally, tournament $T'$ is obtained from $T$ by~\eqrefHomeAwaySwapDefault{}.
\end{proof}

\thmNonnegPartialSlot*

\begin{proof}
  Let $T$ be as constructed in the proof of \cref{thm_dim_partial_slot}.
%
  If $(k^\star,i^\star,j^\star) \in T$, let $k' \in V$ be such that $(k',j^\star,i^\star) \in T_1$ and modify $T$ via a home-away swap~\eqrefHomeAwaySwap{$k^\star$}{$k'$}{$i^\star$}{$j^\star$}.
  By assumptions on $k^\star$, $i^\star$ and $j^\star$, this operation does not affect the matches $(1,i,j), (1,i',j'), (k,i,j'), (k,i',j)$ above, i.e., these remain in $T$.
  However, after the modification, we have $(k^\star,i^\star,j^\star) \notin T$.

  Finally, tournament $T'$ is constructed from $T$ by~\eqrefPartialSlotSwapDefault{}.
\end{proof}

\subsection{Tournaments for Theorem~\ref*{thm_travel_away_away_lifted}}

For the claims in the proof of \cref{thm_travel_away_away_lifted}, we are given a slot $k^\star \in 
\{n,n+1,\dotsc,2n-3\}$ and three distinct teams $t^\star$, $i^\star$ and $j^\star$.
Note that we also assume $n \geq 6$.
To enhance readability of the proofs we restate the claim that contains sufficient conditions for satisfying~\eqref{model_basic_travel_away_away} with equality.

\thmTravelAwayAwayLiftedFace*

\thmTravelAwayAwayLiftedNoTravel*

\begin{proof}
  If $t = i$, let $i' \in V \setminus \{i,j,t\}$ and $j' \coloneqq j$.
  Otherwise, let $i' \coloneqq i$ and $j' \in V \setminus \{i,j,t\}$.
  Note that in either case $i'$, $j'$ and $t$ are distinct.
  We distinguish three cases.

  \textbf{Case 1: $t^\star \neq t$ or $i^\star \notin \{i',j'\}$.}
  We construct tournament $T$ from a canonical factorization by permuting slots and teams such that
  $(1,i',t), (2,j',t), (k^\star,i^\star,t^\star) \in T$ holds and such that $t^\star$ plays away in slot $k^\star+1$.
  Hence, team $t$ travels from venue $i'$ to venue $j'$, which implies that team $t$ never travels from venue $i$ to venue $j$ since exactly one of the teams $i'$, $j'$ is equal to its counterpart $i$, $j$.

  \textbf{Case 2: $t^\star = t$ and $i^\star = i'$.}
  We construct tournament $T$ from a canonical factorization by permuting slots and teams such that
  $(k^\star,i^\star,t^\star), (k^\star+1,j',t^\star) \in T$ holds.

  \textbf{Case 3: $t^\star = t$ and $i^\star = j'$.}
  We construct tournament $T$ from a canonical factorization by permuting slots and teams such that
  $(k^\star,i',t^\star), (k^\star,i^\star,t^\star) \in T$ holds and such that $t^\star$ plays away in slot $k^\star+1$.

  In all cases team $t$ travels from venue $i'$ to venue $j'$, which implies that team $t$ never travels from venue $i$ to venue $j$ since exactly one of the teams $i'$, $j'$ is equal to its counterpart $i$, $j$.
  Moreover, $(k^\star,i^\star,t^\star) \in T$ holds and team $t^\star$ plays away in slot $k^\star+1$, which concludes the proof.
\end{proof}

\thmTravelAwayAwayLiftedHomeAway*

\begin{proof}
  Let $M_\ell$ for all $\ell \in S$ be the perfect matchings of the canonical factorization.
  We distinguish ten cases:

\medskip
  \textbf{Case 1: $\{i,j\} \cap \{j^\star, t^\star\} = \varnothing$.}
  We permute slots such that $\{i,j\} \in M_1, M_k$ and such that there are edges $\{t',i'\} \in M_{k^\star}$ and $\{t',j'\} \in M_{k^\star+1}$ with distinct $t',i',j' \in V \setminus \{i,j\}$.
  Then we permute the teams $V \setminus \{i,j\}$ such that $i'$ is mapped to some team $i^{\#} \neq i^\star$, $j'$ is mapped to $j^\star$ and $t'$ is mapped to $t^\star$.
  Tournament $T$ is obtained by orienting the matching edges in a complementary fashion such that $(1,i,j), (k,j,i), (k^\star,i^{\#},t^\star), (k^\star+1,j^\star,t^\star) \in T$ hold.
  Note that team $t^\star$ travels from venue $i^{\#}$ to venue $j^\star$ and thus never from venue $i^\star$ to venue $j^\star$.

\medskip
  \textbf{Case 2: $\{i,j\} \cap \{i^\star, t^\star\} = \varnothing$.}
  We permute slots such that $\{i,j\} \in M_1, M_k$ and such that there are edges $\{t',i'\} \in M_{k^\star}$ and $\{t',j'\} \in M_{k^\star+1}$ with distinct $t',i',j' \in V \setminus \{i,j\}$.
  Then we permute the teams $V \setminus \{i,j\}$ such that $i'$ is mapped to $i^\star$, $j'$ is mapped to some team $j^{\#} \neq j^\star$ and $t'$ is mapped to $t^\star$.
  Tournament $T$ is obtained by orienting the matching edges in a complementary fashion such that $(1,i,j), (k,j,i), (k^\star,i^\star,t^\star), (k^\star+1,j^{\#},t^\star) \in T$ hold.
  Note that team $t^\star$ travels from venue $i^\star$ to venue $j^{\#}$ and thus never from venue $i^\star$ to venue $j^\star$.

\medskip
  \textbf{Case 3: $\{i,j\} = \{i^\star,j^\star\}$ and $k = k^\star$.}
  We construct tournament $T$ by permuting slots such that $(1,i,j), (k,j,i), (k^\star+1,j^\star,t^\star) \in T$ holds.
  Note that team $t^\star$ plays against a team in $V \setminus \{i^\star,j^\star\}$ before traveling to venue $j^\star$ and thus never travels from venue $i^\star$ to venue $j^\star$.

\medskip
  \textbf{Case 4: $\{i,j\} = \{i^\star,j^\star\}$ and $k \neq k^\star$.}
  We permute slots such that $\{i,j\} \in M_1, M_k$, $\{i^\star,t^\star\} \in M_{k^\star}$ and $\{j',t^\star\} \in M_{k^\star+1}$ holds for some team $j' \neq j^\star$.
  Tournament $T$ is obtained by orienting the matching edges in a complementary fashion such that $(1,i,j), (k,j,i), (k^\star,i^\star,t^\star), (k^\star,j',t^\star) \in T$ holds.
  Note that team $t^\star$ travels from venue $i^\star$ to venue $j' \neq j^\star$ and thus never travels from venue $i^\star$ to venue $j^\star$.

\medskip
  \textbf{Case 5: $\{i,j\} \cap \{i^\star,j^\star,t^\star\} = \{t^\star\}$ and $k = k^\star$.}
  We construct tournament $T$ by permuting slots such that $(1,i,j), (k,j,i), (k^\star+1,j^\star,t^\star) \in T$ holds.
  Note that team $t^\star$ travels from its home venue or some venue different from $i^\star$ to venue $j^\star$ and thus never from venue $i^\star$ to venue $j^\star$.

\medskip
  \textbf{Case 6: $\{i,j\} \cap \{i^\star,j^\star,t^\star\} = \{t^\star\}$ and $k = k^\star+1$.}
  We construct tournament $T$ by permuting slots such that $(1,i,j), (k,j,i), (k^\star,i^\star,t^\star) \in T$ holds.
  Note that team $t^\star$ travels from $i^\star$ to its home venue or to some venue different from $j^\star$ and thus never from venue $i^\star$ to venue $j^\star$.

\medskip
  \textbf{Case 7: $\{i,j\} \cap \{i^\star,j^\star,t^\star\} = \{t^\star\}$ and $k \notin \{k^\star, k^\star+1\}$.}
  We construct tournament $T$ by permuting slots such that $(1,i,j), (k,j,i), (k^\star,j^\star,t^\star), (k^\star+1,i^\star,t^\star) \in T$ holds.
  Note that team $t^\star$ travels from venue $j^\star$ to venue $i^\star$ and thus never from venue $i^\star$ to venue $j^\star$.

\medskip
  \textbf{Case 8: $\{i,j\} = \{i^\star,t^\star\}$ and $k = k^\star+1$.}
  We construct tournament $T$ by permuting slots such that $(1,i,j), (k,j,i), (k^\star,j^\star,t^\star) \in T$ and such that $(k^\star-1,i^\star,t^\star) \notin T$ holds.
  Due to the last condition, team $t^\star$ never travels from venue $i^\star$ to venue $j^\star$.

\medskip
  \textbf{Case 9: $\{i,j\} = \{i^\star,t^\star\}$ and $k \notin \{k^\star,k^\star+1\}$.}
  We construct tournament $T$ by permuting slots such that $(1,i,j), (k,j,i), (k^\star+1,j^\star,t^\star) \in T$ holds.
  Since team $t^\star$ plays at venue $j^\star$ in slot $k^\star+1$ but does not play against $i^\star$ in slot $k^\star$ it never travels from venue $i^\star$ to venue $j^\star$.

\medskip
  \textbf{Case 10: $\{i,j\} = \{j^\star,t^\star\}$ and $k \notin \{k^\star,k^\star+1\}$.}
  We construct tournament $T$ by permuting slots such that $(1,i,j), (k,j,i), (k^\star,i^\star,t^\star) \in T$ holds.
  Since team $t^\star$ plays at venue $i^\star$ in slot $k^\star$ but does not play against $j^\star$ in slot $k^\star+1$ it never travels from venue $i^\star$ to venue $j^\star$.

\medskip
  It is easy to check that all allowed triples $(k,i,j)$ are covered by the cases.
  Moreover, in all cases tournament $T'$ is obtained from $T$ by~\eqrefHomeAwaySwapDefault{} and also satisfies the required properties.
  In particular, also in $T'$ team $t^\star$ does not travel from venue $i^\star$ to venue $j^\star$ since all previous arguments were symmetric in $i$ and $j$.
\end{proof}

\thmTravelAwayAwayLiftedPartialSlot*

\begin{proof}
  Denote by $I \coloneqq \{ (1,i,j), (1,i',j'), (1,i,j'), (1,i',j), (k,i,j), (k,i',j'), (k,i,j'), (k,i',j) \}$ the set of matches required for tournaments $T$ or $T'$.
  Let $M_\ell$ for all $\ell \in S$ be the perfect matchings of the canonical factorization.
  By permuting teams we can assume $\{i,j\}, \{i',j'\} \in M_1 = M_n$.
  We now exchange the roles of teams $j$ and $j'$ only in perfect matchings $M_n$, $M_{n+1}$, \dots, $M_{2n-2}$, which maintains the property that each edge appears in exactly two perfect matchings.
  Then we permute slots such that $\{i,j\}, \{i',j'\} \in M_1$ and $\{i,j'\}, \{i',j\} \in M_k$ hold.

  We first describe two constructions of tournament $T$ that are applicable in many cases.
  Note that via~\eqrefPartialSlotSwapDefault{}, also tournament $T'$ is determined.

\medskip
  \textbf{Case 1: $(i^\star,t^\star) \notin P$, $k \neq k^\star$ and $(k^\star+1,j^\star,t^\star) \notin I$.}
  We can assume that $\{i^\star,t^\star\} \notin M_1 \cup M_k$ holds since otherwise we have $\{i^\star,t^\star\} \cap \{i,j,i',j'\} = \varnothing$ due to $(i^\star,t^\star) \notin P$, which allows to permute teams in $V \setminus \{i,j,i',j'\}$ in order to avoid this situation.
  Hence, we can permute slots such that $\{i^\star,t^\star\} \in M_{k^\star}$ holds, which is possible due to $k^\star \neq k$ and $k^\star \geq n$.
  We can now orient the matching edges in a complementary fashion such that $(1,i,j), (1,i',j'), (k,i,j'), (k,i',j), (k^\star,i^\star,t^\star), (k^\star+1,j'',t^\star) \in T$ holds for some $j'' \neq j^\star$.
  The latter is possible due to $(k^\star+1,j^\star,t^\star) \notin I$.
  Since in both tournaments $T$ and $T'$, team $t^\star$ travels from venue $i^\star$ to venue $j'' \neq j^\star$, it never travels from venue $i^\star$ to venue $j^\star$.
  Hence, $T$ and $T'$ satisfy the requirements of the claim.

\medskip
  \textbf{Case 2: $(j^\star,t^\star) \notin P$, $k \neq k^\star+1$ and $(k^\star,i^\star,t^\star) \notin I$.}
  We can assume that $\{j^\star,t^\star\} \notin M_1 \cup M_k$ holds since otherwise we have $\{j^\star,t^\star\} \cap \{i,j,i',j'\} = \varnothing$ due to $(j^\star,t^\star) \notin P$, which allows to permute teams in $V \setminus \{i,j,i',j'\}$ in order to avoid this situation.
  Hence, we can permute slots such that $\{j^\star,t^\star\} \in M_{k^\star+1}$ holds, which is possible due to $k^\star+1 \neq k$ and $k^\star \geq n$.
  We can now orient the matching edges in a complementary fashion such that $(1,i,j), (1,i',j'), (k,i,j'), (k,i',j), (k^\star,i'',t^\star)$, $(k^\star+1,j^\star,t^\star) \in T$ holds for some $i'' \neq i^\star$.
  The latter is possible due to $(k^\star,i^\star,t^\star) \notin I$.
  Since in both tournaments $T$ and $T'$, team $t^\star$ travels from venue $i'' \neq i^\star$ to venue $j^\star$, it never travels from venue $i^\star$ to venue $j^\star$.
  Hence, $T$ and $T'$ satisfy the requirements of the claim.

\medskip
  If condition~\ref{thm_travel_away_away_lifted_partial_slot_out_out} of the claim is satisfied, then $(i^\star,t^\star) \notin P$ implies $(k^\star,i^\star,t^\star) \notin I$ and $(j^\star,t^\star) \notin P$ implies $(k^\star+1,j^\star,t^\star) \notin I$.
  Hence, depending on $k$, (at least) one of the two cases above is applicable and we are done.

\medskip
  If condition~\ref{thm_travel_away_away_lifted_partial_slot_out_in} of the claim is satisfied, then case~1 is applicable unless $(k^\star+1,j^\star,t^\star) \in I$ holds.
  However, this implies $k = k^\star+1$, which is excluded by condition~(ii).

\medskip
  If condition~\ref{thm_travel_away_away_lifted_partial_slot_in_out} of the claim is satisfied, then case~2 is applicable unless $k = k^\star+1$ or $(k^\star,i^\star,t^\star) \in I$ holds.
  However, the latter would imply $k = k^\star$, which is excluded by condition~(iii).
  Hence, $k = k^\star + 1$, $i^\star \in \{i,i'\}$, $t^\star \in \{j,j'\}$ and $j^\star \notin \{i,j,i',j'\}$ hold.
  We permute teams $V \setminus \{i,j,i',j'\}$ and slots $S \setminus \{1,k\}$ such that $\{j^\star,t^\star\} \in M_{k^\star}$ holds.
  We can now orient the matching edges in a complementary fashion such that $(1,i,j), (1,i',j'), (k,i,j'), (k,i',j), (k^\star,j^\star,t^\star) \in T$ holds.
  In tournaments $T$ and $T'$, team $t^\star$ plays away at venue $i^\star$ in slots $1$ or $k^\star+1$.
  Due to $k^\star \geq n$ we have $k^\star - 1 \neq 1$, and thus team $t^\star$ travels from a venue different from $i^\star$ to venue $j^\star$, and thus never travels from venue $i^\star$ to venue $j^\star$.

\medskip
  If condition~\ref{thm_travel_away_away_lifted_partial_slot_in_in} of the claim is satisfied, then $\{i,i'\} = \{i^\star,j^\star\}$ and $t^\star \in \{j,j'\}$ holds.
  We orient the matching edges in a complementary fashion such that $(1,i,j), (1,i',j'), (k,i,j'), (k,i',j) \in T$ holds.
  In tournaments $T$ and $T'$, team $t^\star$ plays away at venues $i^\star$ and $j^\star$ in slots $1$ and $k^\star \geq n$, and thus never travels from venue $i^\star$ to venue $j^\star$.
  Moreover, match $(k^\star,i^\star,t^\star)$ is contained in one tournament and match $(k^\star,j^\star,t^\star)$ in the other.
\end{proof}

\subsection{Tournaments for Theorem~\ref*{thm_travel_home_away}}

For the claims in the proof of \cref{thm_travel_home_away}, we are given a slot $k^\star \in S \setminus \{2n-2\}$ and two distinct teams $t^\star, j^\star$.
Note that we also assume $n \geq 8$.
To enhance readability of the proofs we restate the claim that contains sufficient conditions for satisfying~\eqref{eq_travel_home_away_lifted} with equality.

\thmTravelHomeAwayLiftedFace*

\thmTravelHomeAwayLiftedNoTravel*

\begin{proof}
  We distinguish two cases.

\medskip
  \textbf{Case 1: $(t,i) \neq (t^\star,j^\star)$ or $k^\star \leq 2n-4$.}
  If $t = i$, let $i' \in V \setminus \{i,j,t,t^\star,j^\star\}$ and $j' \coloneqq j$.
  If $t \neq i$, let $i' \coloneqq i$ and $j' \in V \setminus \{i,j,t,t^\star,j^\star\}$.
  Note that in either case $i'$, $j'$ and $t$ are distinct.
  Now observe that $(j^\star,t^\star) \neq (j',t)$ holds since otherwise $j'=j$, and thus $t = i$ would hold, contradicting $(t,i,j) \neq (t^\star,t^\star,j^\star)$.
  We construct tournament $T$ from a canonical factorization by permuting slots and teams such that $(k^\star+1,j^\star,t^\star)$, $(k,i',t)$, $(k+1,j',t) \in T$ holds for $k = k^\star+1$ if $(i',t) = (j^\star,t^\star)$ and for some $k \in S \setminus \{k^\star,k^\star+1,2n-2\}$ otherwise.
  In the former case, since the first two matches $(k^\star+1,j^\star,t^\star)$ and $(k,i',t)$ are equal and $k+1 = k^\star + 2 \leq 2n-2$ holds, the existence of $T$ is obvious.
  In the latter case, the three distinct matches $(j^\star,t^\star)$, $(i',t)$ and $(j',t)$ have two scheduled in different slots.
  Since $t$ appears in two of the matches, edges $\{i',t\}$ and $\{j',t\}$ already appear in different matchings of a canonical factorization, and thus only slots must be permuted to construct $T$.
  It is easy to see that $T$ satisfies either condition~\ref{thm_travel_home_away_lifted_face_home_second} or~\ref{thm_travel_home_away_lifted_face_away_second} of \cref{thm_travel_home_away_lifted_face}.

\medskip
  \textbf{Case 2: $(t,i) = (t^\star,j^\star)$ and $k^\star = 2n-3$.}
  We construct tournament $T$ from a canonical factorization by permuting slots such that $(k^\star,j^\star,t^\star), (k^\star+1,i',t^\star) \in T$ holds for some $i' \in V \setminus \{j,t^\star,j^\star\}$, and such that $t^\star$ plays away in slot $k^\star-1$.
  In this case $T$ satisfies condition~\ref{thm_travel_home_away_lifted_face_away_first} of \cref{thm_travel_home_away_lifted_face}.
\end{proof}

\thmTravelHomeAwayLiftedHomeAway*

\begin{proof}
  We distinguish two cases.
  Note that via~\eqrefHomeAwaySwap{$\bar{k}$}{$k$}{$i$}{$j$}, also tournament $T'$ is determined.

\medskip
  \textbf{Case 1: $k \neq k^\star+1$.}
  We construct tournament $T$ from a canonical factorization by permuting slots and teams such that $(\bar{k},i,j)$, $(k,j,i), (k^\star+1,j^\star,t^\star) \in T$ holds.
  Since $\bar{k} \neq k^\star$ holds and $k = k^\star$ implies $t^\star \notin \{i,j\}$, tournaments $T$ and $T'$ both satisfy condition~\ref{thm_travel_home_away_lifted_face_home_second} or both satisfy condition~\ref{thm_travel_home_away_lifted_face_away_second} of \cref{thm_travel_home_away_lifted_face}.

\medskip
  \textbf{Case 2: $k = k^\star+1$.}
  We construct tournament $T$ from a canonical factorization by permuting slots and teams such that $(\bar{k},i,j)$, $(k,j,i), (1,j^\star,t^\star) \in T$ holds and, if $k^\star \neq 1$, team $t^\star$ plays at home in slot $k^\star$.
  Tournaments $T$ and $T'$ satisfy condition~\ref{thm_travel_home_away_lifted_face_init_first} (resp.\ condition~\ref{thm_travel_home_away_lifted_face_init_home} if $k^\star \neq 1$) of \cref{thm_travel_home_away_lifted_face}.
\end{proof}

\thmTravelHomeAwayLiftedPartialSlot*

\begin{proof}
  Let $M_\ell$ for all $\ell \in S$ be the perfect matchings of the canonical factorization.
  By permuting teams we can assume $\{i,j\}, \{i',j'\} \in M_1 = M_n$.
  We now exchange the roles of teams $j$ and $j'$ only in perfect matchings $M_n$, $M_{n+1}$, \dots, $M_{2n-2}$, which maintains the property that each edge appears in exactly two perfect matchings.
  Then we permute slots such that $\{i,j\}, \{i',j'\} \in M_{\bar{k}}$ and $\{i,j'\}, \{i',j\} \in M_{k}$ hold.
  We describe how to construct tournament $T$.
  Note that via~\eqrefHomeAwaySwap{$\bar{k}$}{$k$}{$i$}{$j$}, also tournament $T'$ is determined.
  We distinguish three cases.

\medskip
  \textbf{Case 1: $k \neq k^\star+1$ and $(j^\star,t^\star) \notin \{ (i,j), (i',j'), (i,j'), (i',j) \}$.}
  We can permute teams $V \setminus \{i,j,i',j'\}$ and slots $S \setminus \{\bar{k}, k\}$ such that $\{j^\star,t^\star\} \in M_{k^\star+1}$ holds.
  Tournament $T$ is obtained by orienting the matching edges in a complementary fashion such that $(\bar{k},i,j)$, $(\bar{k},i',j')$, $(k,i,j')$, $(k,i',j)$, $(k^\star+1,j^\star,t^\star) \in T$ holds.
  Since a partial slot swap does not change the home-away pattern of $t^\star$, tournaments $T$ and $T'$ both satisfy condition~\ref{thm_travel_home_away_lifted_face_home_second} or both satisfy condition~\ref{thm_travel_home_away_lifted_face_away_second} of \cref{thm_travel_home_away_lifted_face}.

\medskip
  \textbf{Case 2: $k = k^\star+1$ and $(j^\star,t^\star) \notin \{ (i,j), (i',j'), (i,j'), (i',j) \}$.}
  We can permute teams $V \setminus \{i,j,i',j'\}$ and slots $S \setminus \{\bar{k}, k\}$ such that $\{j^\star,t^\star\} \in M_1$ holds.
  Tournament $T$ is obtained by orienting the matching edges in a complementary fashion such that $(\bar{k},i,j)$, $(\bar{k},i',j')$, $(k,i,j')$, $(k,i',j)$, $(1,j^\star,t^\star) \in T$ holds and, if $k^\star \neq 1$, such that team $t^\star$ plays at home in slot $k^\star$.
  Since a partial slot swap does not change the home-away pattern of $t^\star$, tournaments $T$ and $T'$ both satisfy condition~\ref{thm_travel_home_away_lifted_face_init_first} or both satisfy condition~\ref{thm_travel_home_away_lifted_face_init_home} of \cref{thm_travel_home_away_lifted_face}.

\medskip
  \textbf{Case 3: $k \notin \{1,k^\star,k^\star+1\}$ and $(j^\star,t^\star) \in \{ (i,j), (i',j'), (i,j'), (i',j) \}$.}
  Let $k' \in S \setminus \{1,k^\star-1,k^\star,k^\star+1,\bar{k}-1,\bar{k},k-1,k\}$ (note that $n \geq 8$ implies $|S| \geq 14$).
  We can permute teams $V \setminus \{i,j,i',j'\}$ and slots $S \setminus \{\bar{k}, k\}$ such that $\{j'',t^\star\} \in M_{k^\star}$, $\{i'',t^\star\} \in M_{k'-1}$ and $\{j^\star,t^\star\} \in M_{k'}$ hold for distinct teams $i'',j'' \in V \setminus \{i,j,i',j'\}$.
  Tournament $T$ is obtained by orienting the matching edges in a complementary fashion such that $(\bar{k},i,j)$, $(\bar{k},i',j')$, $(k,i,j')$, $(k,i',j)$, $(k^\star,t^\star,j'')$, $(k'-1,i'',t^\star)$, $(k',j^\star,t^\star) \in T$ holds.
  Both tournaments $T$ and $T'$ satisfy condition~\ref{thm_travel_home_away_lifted_face_home_only} of \cref{thm_travel_home_away_lifted_face} since team $t^\star$ travels from venue $i'' \neq t^\star$ to venue $j^\star$, and thus never travels from its home venue to venue $j^\star$.
\end{proof}

\subsection{Tournaments for Theorem~\ref*{thm_travel_first}}

For the claims in the proof of \cref{thm_travel_first} we are given a team $t^\star$ and a venue $j^\star \neq t^\star$.
Note that we also assume $n \geq 6$.
To enhance readability of the proofs we restate the claim that contains sufficient conditions for satisfying~\eqref{model_basic_travel_first} with equality.

\thmTravelFirstFace*

\thmTravelFirstNoTravel*

\begin{proof}
  If $t = i$, let $i' \in V \setminus \{i,j,t,t^\star,j^\star\}$ and $j' \coloneqq j$.
  If $t \neq i$, let $i' \coloneqq i$ and $j' \in V \setminus \{i,j,t,t^\star,j^\star\}$.
  Note that in either case $i'$, $j'$ and $t$ are distinct.
  Moreover, we have $(j',t) \neq (j^\star,t^\star)$ since otherwise $t = i$ and thus $(t,i,j) = (t^\star,t^\star,j^\star)$ would contradict the assumption of the claim.
  If $(i',t) = (j^\star,t^\star)$ holds, then we construct tournament $T$ from a canonical factorization by permuting slots and teams such that $(1,j^\star,t^\star), (2,j',t) \in T$ holds.
  Otherwise, we construct $T$ with $(1,j^\star,t^\star), (2,i',t), (3,j',t) \in T$.

  In both cases, tournament $T$ satisfies condition~\ref{thm_travel_first_face_yes} and team $t^\star$ travels from venue $i'$ to venue $j'$, which implies that team $t$ never travels from venue $i$ to venue $j$ since exactly one of the teams $i'$, $j'$ is equal to its counterpart $i$, $j$.
\end{proof}

\thmTravelFirstHomeAway*

\begin{proof}
  Note that via~\eqrefHomeAwaySwap{$n$}{$k$}{$i$}{$j$}, also tournament $T'$ is determined.
  We distinguish two cases.

\medskip
  \textbf{Case 1: $k = 1$.}
  Let $i \in V \setminus \{ j^\star, t^\star \}$.
  We construct tournament $T$ from a canonical factorization by permuting slots and teams such that $(n,i,j)$, $(1,j,i), (2,i,t^\star), (3,j^\star,t^\star) \in T$ holds.
  In both tournaments, team $t^\star$ travels from venue $i \neq t^\star$ to venue $j^\star$, which implies that team $t^\star$ never travels from its home venue to venue $j^\star$.
  Hence, $T$ and $T'$ both satisfy condition~\ref{thm_travel_first_face_no} of \cref{thm_travel_first_face}.

\medskip
  \textbf{Case 2: $k \geq 2$ or $\{j^\star,t^\star\} \cap \{i,j,i',j'\} = \varnothing$.}
  We construct tournament $T$ from a canonical factorization by permuting slots and teams such that $(n,i,j)$, $(k,j,i), (1,j^\star,t^\star) \in T$ holds.
  Tournaments $T$ and $T'$ both satisfy condition~\ref{thm_travel_first_face_yes} of \cref{thm_travel_first_face}.
\end{proof}

\thmTravelFirstPartialSlot*

\begin{proof}
  Let $M_\ell$ for all $\ell \in S$ be the perfect matchings of the canonical factorization.
  By permuting teams we can assume $\{i,j\}, \{i',j'\} \in M_1 = M_n$.
  We now exchange the roles of teams $j$ and $j'$ only in perfect matchings $M_n$, $M_{n+1}$, \dots, $M_{2n-2}$, which maintains the property that each edge appears in exactly two perfect matchings.
  Then we permute slots such that $\{i,j\}, \{i',j'\} \in M_n$ and $\{i,j'\}, \{i',j\} \in M_k$ hold.
  We describe how to construct tournament $T$.
  Note that via~\eqrefHomeAwaySwap{$n$}{$k$}{$i$}{$j$}, also tournament $T'$ is determined.
  We distinguish two cases.

\medskip
  \textbf{Case 1: $k = 1$.}
  We can permute teams $V \setminus \{i,j,i',j'\}$ and slots $S \setminus \{n, k\}$ such that $\{j^\star,t^\star\} \in M_3$ and $\{i'',t^\star\} \in M_2$ for some $i'' \in V \setminus \{ j^\star, t^\star \}$.
  Tournament $T$ is obtained by orienting the matching edges in a complementary fashion such that $(n,i,j)$, $(n,i',j')$, $(1,i,j')$, $(1,i',j)$, $(2,i'',t^\star), (3,j^\star,t^\star) \in T$ holds.
  Tournaments $T$ and $T'$ both satisfy condition~\ref{thm_travel_first_face_no} of \cref{thm_travel_first_face}.
  
\medskip
  \textbf{Case 2: $k \geq 2$ or $\{j^\star,t^\star\} \cap \{i,j,i',j'\} = \varnothing$.}
  We can permute teams $V \setminus \{i,j,i',j'\}$ and slots $S \setminus \{n, k\}$ such that $\{j^\star,t^\star\} \in M_1$.
  Tournament $T$ is obtained by orienting the matching edges in a complementary fashion such that $(n,i,j)$, $(n,i',j')$, $(k,i,j')$, $(k,i',j)$, $(1,j^\star,t^\star) \in T$ holds.
  Tournaments $T$ and $T'$ both satisfy condition~\ref{thm_travel_first_face_yes} of \cref{thm_travel_first_face}.

\medskip
  \textbf{Case 3: $k \geq 2$ and $(j^\star,t^\star) \in \{ (i,j), (i',j'), (i,j'), (i',j) \}$.}
  We can permute slots such that the edges that match $t^\star$ in $M_{k-1}$, $M_k$, $M_{n-1}$ and $M_n$ are different (unless $k = n \pm 1$ in which case $M_k = M_{n-1}$ or $M_{k-1} = M_n$ holds).
  Tournament $T$ is obtained by orienting the matching edges in a complementary fashion such that $(n,i,j)$, $(n,i',j')$, $(k,i,j')$, $(k,i',j) \in T$ holds and such that $t^\star$ plays away in slots $k-1$, $k$, $n-1$ and $n$.
  Hence, in none of the tournaments $T$ and $T'$, team $t^\star$ travels from its home venue to venue $j^\star$, which shows that $T$ and $T'$ both satisfy condition~\ref{thm_travel_first_face_no} of \cref{thm_travel_first_face}.
\end{proof}

\subsection{Tournaments for Theorem~\ref*{thm_flow_distinct}}

For the claims in the proof of \cref{thm_flow_distinct} we are given a team $t^\star$ and a venue $i^\star \neq t^\star$.
Note that we also assume $n \geq 8$.

\thmFlowOutDistinctNoTravel*

\begin{proof}
  Let $T$ be a tournament from~\cref{thm_dim_no_travel}.
  We do not need to restrict the schedule of team $t^\star$ since it leaves $i^\star \neq t^\star$ only once, namely after playing away against $i^\star$.
\end{proof}

\thmFlowOutDistinctHomeAway*

\begin{proof}
  We distinguish two cases:

  \textbf{Case 1: $\{i,j\} \neq \{i^\star,t^\star\}$.}
  Since $|S| = 2n - 2 \geq 6$ holds, there exists a slot $k^\star \in S \setminus \{1,k-1,k,2n-2\}$.
  We construct tournament $T$ from a canonical factorization by permuting slots and teams such that $(1,i,j), (k,j,i), (k^\star,i^\star,t^\star), (k^\star+1, i, t^\star) \in T$ for some $i \in V$.

  \textbf{Case 2: $\{i,j\} = \{i^\star,t^\star\}$.}
  We construct tournament $T$ from a canonical factorization by permuting slots and teams such that $(1,i,j), (k,j,i) \in T$.
  Moreover, $t^\star$ shall play at home in slots $2$ (unless $k=2$ and this conflicts with $(k,j,i) \in T$) and $k+1$ (unless $k+1 \notin S$).

  In both cases, tournament $T'$ is obtained from $T$ by~\eqrefHomeAwaySwapDefault{}.
  By construction, in both tournaments team $t^\star$ leaves $i^\star$ to its home venue $t^\star$, after slot $1$ in one tournament and after slot $k$ in the other tournament.
\end{proof}

\thmFlowOutDistinctPartialSlot*

\begin{proof}
  Let $M_\ell$ for all $\ell \in S$ be the perfect matchings of the canonical factorization.
  By permuting teams we can assume $\{i,j\}, \{i',j'\} \in M_1 = M_n$ and if both $i^\star$ and $t^\star$ are distinct from $i$, $i'$, $j$ and $j$', then $\{i^\star,t^\star\} \notin M_1 = M_n$ (this is possible because $n \geq 8$).
  We now exchange the roles of teams $j$ and $j'$ only in perfect matchings $M_n$, $M_{n+1}$, \dots, $M_{2n-2}$, which maintains the property that each edge appears in exactly two perfect matchings.
  Tournament $T$ is obtained by orienting the edges in a complementary fashion and permuting slots such that $(1,i,j), (1,i',j'), (k,i,j'), (k,i',j) \in T$.

  By construction and by the assumptions of the claim, $t^\star$ does not play away against $i^\star$ in slots $1$ or $k$.
  Hence, we can permute the slots in $S \setminus \{1,k\}$ such that for some slot $k' \in S \setminus \{1,k-1,k,2n-2\}$, we have $(k',i^\star,t^\star) \in T$ and such that $t^\star$ plays at home in slot $k' + 1$.

  Finally, tournament $T'$ is constructed from $T$ by~\eqrefPartialSlotSwapDefault{}.
  By construction, $T$ and $T'$ satisfy all requirements from the claim.
\end{proof}

\subsection{Tournaments for Theorem~\ref*{thm_flow_home_away}}

For the claims in the proof of \cref{thm_flow_home_away}, we are given a team $t^\star$ and two slots $k^\star,\bar{k} \in S$ with $k^\star \leq n-1$ and $k^\star < \bar{k} < k^\star+n-1$.
Note that we also assume $n \geq 6$.

\thmFlowOutHomeNoTravel*

\begin{proof}
  If $t = i$, let $i' \in V \setminus \{i,j,t,t^\star\}$ and $j' \coloneqq j$.
  Otherwise, let $i' \coloneqq i$ and $j' \in V \setminus \{i,j,t,t^\star\}$.
  Note that in either case $i'$, $j'$ and $t$ are distinct.
  We construct tournament $T$ from a canonical factorization by permuting slots and teams such that $(k,i',t), (k+1,j',t) \in T$ for some $k \in S \setminus \{2n-2\}$ and such that all away matches of $t^\star$ are in slots $2$, $3$, \dots, $n$.

  To see that this is possible, we discuss the cases in which $t^\star \in \{i',j',t\}$ holds.
  If $t^\star = i'$, then $t^\star = i \neq t,j'$ holds and we can choose $k \coloneqq 1$ such that in this slot team $t^\star$ plays at home against $t$.
  If $t^\star = j'$, then $t^\star = j$ and $t = i$ hold and we can choose $k \coloneqq n+1$ such that in this slot team $t^\star$ plays at home against $t$.
  Finally, if $t^\star = t$, then we can choose $k \coloneqq 2$ such that in slots $2$ and $3$ team $t^\star$ plays away against $i'$ and $j$'.
\end{proof}

\thmFlowOutHomeCofficients*

\begin{proof}
  We construct tournament $T$ from a canonical factorization by permuting slots such that all away matches of $t^\star$ are in slots $k$, $k+1$, \dots, $k+n-1$ in particular such that $(k,j,t^\star), (k+1,j',t^\star) \in T$ holds.
  Hence, in $T$, team $t^\star$ leaves its home venue exactly once to venue $j$.

  Finally, tournament $T'$ is constructed from $T$ via a home-away swap~\eqrefHomeAwaySwap{$k$}{$k+n-1$}{$t^\star$}{$j$}, which means that in $T'$ team $t^\star$ plays away in slots $k+1$, $k+2$, \dots, $k+n-1$, starting at venue $j'$ after playing at home in slot $k$.
\end{proof}

\thmFlowOutHomeHomeAway*

\begin{proof}
  We construct tournament $T$ from a canonical factorization by permuting slots and teams such that $(\bar{k},i,j), (k,j,i) \in T$ and such that team $t^\star$ plays away in consecutive matches $k'$, $k'+1$, \dots, $k'+n-1$ for some $k' \in \{1,2,\dotsc,n\}$.
  It is easy to see that such a slot $k'$ exists since we only have to make sure that $t^\star$ does not play against $i$ or $j$ in slots $\bar{k}$ and $k$.

  Finally, tournament $T'$ is constructed from $T$ via a home-away swap~\eqrefHomeAwaySwap{$\bar{k}$}{$k$}{$i$}{$j$}, which does not affect the home-away pattern of team $t^\star \notin \{i,j\}$.
\end{proof}

\thmFlowOutHomePartialSlot*

\begin{proof}
  Let $M_\ell$ for all $\ell \in S$ be the perfect matchings of the canonical factorization.
  By permuting teams we can assume $\{i,j\}, \{i',j'\} \in M_1 = M_n$.
  We now exchange the roles of teams $j$ and $j'$ only in perfect matchings $M_n$, $M_{n+1}$, \dots, $M_{2n-2}$, which maintains the property that each edge appears in exactly two perfect matchings.
  We distinguish two cases:
  
  \textbf{Case 1: $t^\star \in \{j,j'\}$.}
  By symmetry we can assume $t^\star = j$ without loss of generality.
  We construct tournament $T_1 $ by orienting the edges in a complementary fashion such that $T_1$ contains the matches $(1,i,j)$, $(1,i',j')$, $(n,i,j')$ and $(n,i',j)$ and such that $t^\star$ plays away in matches $1$, $2$, \dots, $n-1$, except for match $k' \in \{2,3,\dotsc,n-1\}$ in which $t^\star$ plays at home against $i'$ since the corresponding return match is scheduled in slot $n$.
  Hence, we have $(k',t^\star,i') \in T_1$.

  Tournament $T_2$ is now obtained from $T_1$ by exchanging slots $k'$ and $n$, i.e., $T_2$ contains matches $(1,i,j)$, $(1,i',j')$, $(k',i,j')$, $(k',i',j)$ and $(n,t^\star,i')$ and team $t^\star$ plays away in slots $1$, $2$, \dots, $n-1$.

  We construct tournament $T_3$ from $T_2$ via a cyclic shift by $s \coloneqq \min\{k_1-1,n-1\}$.
  We obtain $(1+s,i,j), (1+s,i',j'), (k'+s,i,j'), (k'+s,i',j) \in T_3$ and team $t^\star$ plays away in slots $1+s$, $2+s$, \dots, $n-1+s$.
  Note that $k_1,k_2 \in S' \coloneqq \{1+s,2+s, \dotsc, n-1+s\}$.

  Finally, we construct tournament $T$ from $T_3$ by exchanging slots within $S'$ such that $T$ contains the matches $(k_1,i,j)$, $(k_1,i',j')$, $(k_2,i,j')$ and $(k_2,i',j)$ while maintaining the property that team $t^\star$ plays away in slots $S'$.

  \textbf{Case 2: $t^\star \notin \{j,j'\}$.}
  We construct tournament $T_1 $ by orienting the edges in a complementary fashion such that $T_1$ contains the matches $(1,i,j)$, $(1,i',j')$, $(n,i,j')$ and $(n,i',j)$ and such that $t^\star$ plays away in matches $1$, $2$, \dots, $n-1$.

  Let $s \in \{0,1,2,\dotsc,n-1\}$ be such that $S' \coloneqq \{1+s, 2+s, \dotsc, n-1+s\}$ contains exactly one of the two slots $k_1, k_2$.
  By symmetry we can assume $k_1 \in S'$ and $k_2 \in S \setminus S'$ (otherwise exchange $k_1$ with $k_2$, $i$ with $i'$ and $j$ with $j'$).
  We construct tournament $T_2$ from $T_1$ via a cyclic shift by $s$.
  We obtain $(1+s,i,j), (1+s,i',j'), (n+s,i,j'), (n+s,i',j) \in T_2$ and team $t^\star$ plays away in slots $S'$.

  Finally, we construct tournament $T$ from $T_2$ by exchanging slot $1+s \in S'$ with $k_1 \in S'$ and slot $n+s \in S \setminus S'$ with $k_2 \in S \setminus S'$, which maintains the property that team $t^\star$ plays away in slots $S'$.
  Moreover, $(k_1,i,j), (k_1,i',j'), (k_2,i,j'), (k_2,i',j) \in T$ holds.

  In both cases we construct $T'$ from $T$ by a partial slot swap~\eqrefPartialSlotSwap{$k_1$}{$k_2$}{$i$}{$i'$}{$j$}{$j'$}.
\end{proof}

\thmFlowOutHomeHomeAwayInterval*

\begin{proof}
  Let $M_\ell$ for all $\ell \in S$ be the perfect matchings of the canonical factorization.
  We permute matchings such that $\{t^\star,j\} \in M_k,M_{k+1}$ and such that in matchings $M_{k^\star}$, $M_{k^\star+1}$, \dots, $M_{k-1}$, $M_k$, $M_{k+2}$, $M_{k+3}$, \dots, $M_{k^\star+n-2}$, $M_{k^\star+n-1}$ node $t^\star$ is matched to every other node exactly once.
  Tournament $T$ is constructed by orienting edge $\{t^\star,j\}$ as $(j,t^\star)$ in $M_k$ and as $(t^\star,j)$ in $M_{k+1}$, and such that team $t^\star$ plays away in slots $k^\star$, $k^\star+1$, \dots, $k^\star+n-1$ except for slot $k+1$.
  Team $t^\star$ leaves its home venue only before slot $k^\star$ and after slot $k+1$.
  We construct tournament $T'$ by~\eqrefHomeAwaySwap{$k$}{$k+1$}{$j$}{$t^\star$}, in which $t^\star$ plays at home in slot $k$ and away in slot $k+1$.
  Moreover, in $T'$ team $t^\star$ leaves its home venue only before slot $k^\star$ and after slot $k$.
\end{proof}

\thmFlowOutHomeHomeAwayMixed*

\begin{proof}
  Let $M_\ell$ for all $\ell \in S$ be the perfect matchings of the canonical factorization.
  We permute matchings such that $\{t^\star,j\} \in M_{k^\star},M_{k^\star+1}$, $\{t^\star,j'\} \in M_{k^\star+2}$ and such that in matchings $M_{k^\star+1}$, $M_{k^\star+2}$, \dots, $M_{k^\star+n-1}$ node $t^\star$ is matched to every other node exactly once.
  Tournament $T$ is constructed by orienting edge $\{t^\star,j\}$ as $(t^\star,j)$ in $M_{k^\star}$ and such that team $t^\star$ plays away in slots $k^\star+1$, $k^\star+2$, \dots, $k^\star+n-1$.
  Team $t^\star$ leaves its home venue exactly once, namely after slot $k^\star$ to venue $j$.
  We construct tournament $T'$ by~\eqrefHomeAwaySwap{$k^\star$}{$k^\star+1$}{$t^\star$}{$j$}, in which $t^\star$ plays away in slot $k^\star$ and home in slot $k^\star+1$.
  Moreover, in $T'$ team $t^\star$ leaves its home venue exactly twice, namely before slot $k^\star$ to venue $j$ and after slot $k^\star+1$ to venue $j'$.
\end{proof}

\end{document}